\documentclass[twoside,leqno]{article}

\usepackage[letterpaper]{geometry}

\usepackage{hyperref}
\usepackage{ltexpprt}
\usepackage{prem}

\bibliographystyle{plainurl}

\begin{document}

\newcommand\relatedversion{}

\title{\Large Shannon meets Gray:\\
Noise-robust, Low-sensitivity Codes with Applications in Differential Privacy\relatedversion}
\author{{David Rasmussen} Lolck\thanks{Basic Algorithms Research Copenhagen, University of Copenhagen}
\and Rasmus Pagh$^*$}

\date{}

\maketitle

\fancyfoot[R]{\scriptsize{Copyright \textcopyright\ 2024 by SIAM\\
Unauthorized reproduction of this article is prohibited}}

\begin{abstract} 
    \small\baselineskip=9pt Integer data is typically made differentially private by adding noise from a Discrete Laplace (or Discrete Gaussian) distribution.
We study the setting where differential privacy of a counting query is achieved using bit-wise randomized response, i.e., independent, random bit flips on the encoding of the query answer.

Binary \emph{error-correcting codes} transmitted through noisy channels with independent bit flips are well-studied in information theory.
However, such codes are unsuitable for differential privacy since they have (by design) high sensitivity, i.e., neighbouring integers have encodings with a large Hamming distance.
\emph{Gray codes} show that it is possible to create an efficient sensitivity 1 encoding, but are also not suitable for differential privacy due to lack of noise-robustness.

Our main result is that it is possible, with a constant rate code, to \emph{simultaneously} achieve the sensitivity of Gray codes \emph{and} the noise-robustness of error-correcting codes (down to the noise level required for differential privacy).
An application of this new encoding of the integers is an asymptotically faster, space-optimal differentially private data structure for histograms.

\end{abstract}
\section{Introduction}

Random noise in computing can both be a blessing and a curse.
In a nutshell, coding theory aims to \emph{amplify} the difference between different data sets such that even after adding random noise it is possible to recreate an original data set with high probability.
Conversely, differential privacy~\cite{DworkMNS06} deliberately adds noise to \emph{obscure} the difference between different data sets, such that ``neighboring'' data sets that differ only in the data of one individual become hard to distinguish.
In this paper we consider, for given integers $m$ and $d$, noise-robust binary encodings $\mathcal{C}_{\textup{enc}}(v)\in \{0,1\}^d$ for $v\in \{0,\dots,m-1\}$.
Our noise model is a \emph{binary symmetric channel}, meaning that each bit of $\mathcal{C}_{\textup{enc}}(v)$ is independently flipped with some probability $p$ upper bounded by a (sufficiently small) absolute constant.

In the differential privacy literature, reporting a noisy random bit is known as \emph{randomized response}~\cite{dwork2014algorithmic,warner65}.
It is known that such noisy encodings satisfy $\varepsilon$-differential privacy, where~$\varepsilon$ depends on $p$ and the \emph{sensitivity} of the encoding.
Unlike traditional uses of randomized response directly on the input data, we are interested in differential privacy in the context of counting problems where we want to estimate the number of data points that satisfy some predicate. 
Since two neighbouring datasets will have counts that differ by at most one, if we use an encoding $\mathcal{C}_{\textup{enc}}$ for the output of the counting problem, the sensitivity is the maximum Hamming distance between the encodings $\mathcal{C}_{\textup{enc}}(v)$ and $\mathcal{C}_{\textup{enc}}(v+1)$ for $v=0,\dots,m-2$.

Arguably, the symmetric binary channel is the simplest way in which one could possibly add noise in order to achieve differential privacy.
In comparison, adding (say) Laplace noise requires a relatively complex hardware/software system performing nontrivial arithmetic, making it considerably harder to verify and trust. 
Thus the question we address in this paper is: 

\medskip

\emph{Is it possible to achieve good efficiency, privacy and utility guarantees with a deterministic encoding of the integers passed through a binary symmetric channel?}

\medskip

\noindent
A positive answer to this question requires an ``error-correcting Gray code'' code with the following properties:
\begin{itemize}
    \item has short length (ideally close to $\log_2 m$), 
    \item has low sensitivity (ideally 1), and 
    \item is noise robust in the sense that from a noisy version of $\mathcal{C}_{\textup{enc}}(v)$ we can compute an estimate whose distribution is tightly concentrated around $v$.
\end{itemize}

\medskip

Figure~\ref{fig:overview} shows properties of four well-known types of integer encodings, all having at most two of these properties.
Our main result is that there exists an explicit and efficient code enjoying \emph{all} three properties.

\begin{figure}[t]
    \centering
    \begin{tabular}{|c|c|c|c|}
        \hline
        {\bf Encoding} & {\bf Length} & {\bf Sensitivity}  & {\bf Noise Robust} \\
        \hline
        Binary & $\log_2(m)$ & $\log_2(m)$ & No\\ 
        Gray & $\log_2(m)$ & $1$ & No\\ 
        Unary & $m$ & $1$ & Yes\\ 
        ECC & $O(\log m)$ & $\Omega(\log m)$ & Yes\\ 
        {\bf New} & $O(\log m)$ & $1$ & Yes\\ 
        \hline
    \end{tabular}
    \caption{Overview of properties of different encodings for integers in $\{0,\dots,m-1\}$. Noise robustness is relative to a noise model in which each bit is flipped with a fixed probability $p$, a sufficiently small positive constant. We require the encoded integer to be recovered with high probability, up to an additive noise term whose magnitude is bounded by a geometric distribution. All methods have encoding and decoding time that is polynomial in the length of the encoding.}
    \label{fig:overview}
\end{figure}

\subsection{Background}

{\bf Error-correcting codes.}
The study of error correction and communication through noisy channels was first introduced by Shannon~\cite{shannon-channel}. 
He showed that there exists capacity achieving codes, while never explicitly constructing them. 
Many explicit encoding schemes have since achieved a constant fraction of the capacity, including Reed-Solomon codes~\cite{reed-solomon-codes}, Justesen codes~\cite{justesen-codes} (the first such binary code), Expander codes~\cite{expander-codes,expander-codes-linear-time} (the first such linear time encodable/decodable code), Polar codes~\cite{polar-codes} (the first efficient code shown to achieve optimal capacity), and Reed-Muller codes (now known also to achieve optimal capacity~\cite{reeves2021reedmuller}).

{\bf Low-sensitivity codes.}
The \emph{reflected binary code} encodes the integers $\{0,\dots,m-1\}$ using $\lceil\log_2 m\rceil$ bits with the property that the encodings of consecutive integers differ only in one bit, i.e., the Hamming distance is~1.
Though codes with this property have been known at least since the 19th century, the reflected binary code is commonly referred to as the \emph{Gray code} after Frank Gray who described it in a 1947 patent application~\cite{knuth2011art}.

We are not aware of previous work that explicitly addresses combining error-correction capabilities with low sensitivity.
\emph{Locality properties} are important in several classes of error-correcting codes including locally decodable codes~\cite{Yekhanin11} and locally testable codes~\cite{dinur22,Panteleev22}, but this does not seem to translate into low sensitivity encodings of integers.
Another type of (non-binary) codes based on the Chinese Remainder Theorem~\cite{WangX10, Xiao20} have nontrivial bounds on sensitivity but do not seem to imply good binary codes.

{\bf Integer encodings in differential privacy.}
Aum{\"u}ller, Lebeda and Pagh~\cite{alp-mech} recently presented a differentially private mechanism, ALP, for representing a multiset $S$, where each data owner provides one of $n$ elements from a ground set $\{1,\dots,u\}$.
The output of the mechanism is a data structure that supports \emph{frequency queries}: Given an element $x$, it returns an estimate of the number of occurrences of $x$ in $S$.
The main difficulty of this problem lies in representing small frequencies, below a threshold $\ell = O(\log u)$, with good precision while keeping space close to the information-theoretical limit.
The idea of ALP is to represent frequencies up to $\ell$ in \emph{unary}, using hashing to determine the location of each bit.
To answer a frequency query for $x$ we inspect the bits at positions $h((x,1)),h((x,2)),\dots,h((x,\ell))$, where $h$ is a hash function.
Without privacy (i.e., $\varepsilon = \infty$), if the frequency is $f_x$ the bits at positions $h((x,1)),h((x,2)),\dots,h((x,f_x))$ are guaranteed to be $1$.
Some bits at positions $h((x,f_x+1)),\dots,h((x,\ell))$ may be 1 due to hash collisions, but we can determine $f_x$ from the data structure up to a small, geometrically distributed error.
To achieve pure differential privacy ALP uses \emph{randomized response}~\cite{warner65}, where each bit is flipped with probability depending on $\varepsilon$.
This works because the \emph{sensitivity} of a unary code is 1: Adding or removing an element changes at most one bit in the (non-private) data structure.
Answering a frequency query is done by taking the \emph{most likely} frequency, in a maximum likelihood sense, namely the one that is closest in Hamming distance to the sequence of observed bits.
This yields a tightly concentrated error distribution, comparable to using Laplace distributed noise to release each frequency $f_x$.
However, the use of a unary representation means that the number of bits $\ell$ needed to retrieve a frequency estimate is $\Theta(\log u)$.
We would prefer a more efficient encoding that can still tolerate the errors due to randomized response or hash collisions and has small sensitivity --- which is exactly what we achieve in this work.

\subsection{Our results}
Our main result shows how to transform an error-correcting code to a code that has sensitivity~1 and retains good error-correction properties.

\begin{theorem}[Informal version of \cref{thm:prob-diff-greater-than-t}]\label{thm:main}
    Let $\mathcal{C}$ be a code with block length $d$ and message length $\lg m$. Let $p\in (0;1/2)$ and let $b_p \sim \Bern(p)^{d'}$. Then there exists an explicit code $\mathcal{G}$ with block length $d' = O(d)$ and message length at least $\lg m$ consisting of the encoder $\mathcal{G}_{\textup{enc}}$ and decoder $\mathcal{G}_{\textup{dec}}$, were $\mathcal{G}$ has sensitivity $1$. Furthermore, for every $v\in [m]$ and $t>0$,
    \begin{equation*}
        \Pr[|v-\mathcal{G}_{\textup{dec}}(\mathcal{G}_{\textup{enc}}(v)\oplus b_p)| \ge t] \le e^{- \Omega(t)} + e^{-\Omega(d)} + O(P_p(\mathcal{C})),
    \end{equation*}
    where $P_p(\mathcal{C})$ is the probability of $\mathcal{C}$ being decoded incorrectly. If the encoding and decoding of $\mathcal{C}$ runs in polynomial time, then so does the encoding and decoding of $\mathcal{G}$.
\end{theorem}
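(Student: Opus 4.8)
The plan is to build $\mathcal{G}$ by splitting the integer $v$ into a \emph{coarse part} and a \emph{fine part}, handling each with a different mechanism, and then gluing the two so that incrementing $v$ always flips exactly one output bit. Fix a window width $w=\Theta(d)$ and write $v=w\cdot q+r$ with $r\in\{0,\dots,w-1\}$. The fine part $r$ is stored in a local, sensitivity-$1$ gadget of length $\Theta(w)=\Theta(d)$ that is robust with a \emph{geometrically} concentrated readout error --- concretely an ALP-style monotone (unary) encoding of $r$ read out by maximum-likelihood (threshold) decoding, exactly the regime analysed for unary codes in~\cite{alp-mech}; this contributes the $e^{-\Omega(t)}$ term (the geometric tail of the readout error) and an $e^{-\Omega(d)}$ term (the chance that so many of the $\Theta(d)$ gadget bits flip that the readout is meaningless). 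The coarse part $q$, which ranges over $\lceil m/w\rceil$ values, is stored using a \emph{sensitivity-$1$, noise-robust} encoding $\mathcal{C}'$ derived from the given code $\mathcal{C}$; decoding $\mathcal{C}'$ recovers $q$ exactly except with probability $O(P_p(\mathcal{C}))+e^{-\Omega(d)}$, the source of the $O(P_p(\mathcal{C}))$ term. The output of $\mathcal{G}$ is the concatenation of the $\mathcal{C}'$-block and the fine gadget: its length is $O(d)+\Theta(d)=O(d)$ and its message length is $\lg\lceil m/w\rceil+\lg w\ge\lg m$.

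To get sensitivity $1$ we need two things: (i) incrementing $v$ inside a window ($r\to r+1$, $r<w-1$) flips one bit of the fine gadget and nothing else, which is immediate from the gadget; and (ii) at a window boundary ($r=w-1\to 0$, $q\to q+1$) the coarse block must change in exactly one bit while the fine gadget returns to its starting state without changing --- so the fine gadget must be laid out \emph{reflectively}, as in the reflected binary code~\cite{knuth2011art}, making its first and last states adjacent. This forces $\mathcal{C}'$ itself to be a sensitivity-$1$ code: consecutive coarse values $q,q+1$ must have $\mathcal{C}'$-encodings at Hamming distance $1$. Building such a $\mathcal{C}'$ from an arbitrary $\mathcal{C}$ is the heart of the argument: Gray-code the coarse index and then ``spread'' the effect of a single Gray-bit flip through an error-correcting layer built from $\mathcal{C}$ --- either by applying the same window/reflection idea recursively a constant number of times, or by interleaving copies of $\mathcal{C}$ offset by half a window so that at every boundary exactly one copy is ``outside its active zone'' and is the one allowed to change. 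I expect this reduction --- extracting sensitivity exactly $1$ from $\mathcal{C}$ while preserving $P_p(\mathcal{C})$-quality decoding and only an $O(1)$ blow-up in length --- to be the main obstacle, with the offset-by-half-window trick (which is what lets the fine gadget always ``point to'' a coarse copy that is stable across the current boundary) the key mechanism that resolves it.

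For decoding, given a noisy string $y=\mathcal{G}_{\textup{enc}}(v)\oplus b_p$: first threshold-decode the fine gadget to recover an estimate $\hat r$ of $r$ together with a ``phase'' telling us which side of the window we are on, hence which copy of the coarse encoding is currently stable; run $\mathcal{C}_{\textup{dec}}$ on that copy to get $\hat q$, and output $w\hat q+\hat r$. The analysis is a union bound over three bad events: the fine gadget is corrupted so badly that $\hat r$ or the phase is wrong by more than the window can absorb (probability $e^{-\Omega(w)}=e^{-\Omega(d)}$, by a Chernoff bound on the $\Theta(d)$ independent flips); the fine-gadget readout error exceeds $t$ (probability $e^{-\Omega(t)}$, the geometric tail); and $\mathcal{C}$ mis-decodes on the chosen copy (probability $O(P_p(\mathcal{C}))$, the $O(\cdot)$ absorbing the constant number of copies). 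On the complement of all three, $\hat q=q$ exactly and $|\hat r-r|\le t$, so $|\mathcal{G}_{\textup{dec}}(y)-v|=|\hat r-r|\le t$; the one corner case is when $r$ lies within $O(1)$ of a boundary and the true value, the estimate, and the phase straddle it, where the offset-by-half-window layout guarantees the phase is still read correctly from the bulk of the gadget and the residual discrepancy is again at most the geometric readout error already counted in the $e^{-\Omega(t)}$ term.

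Finally, for running time: the fine gadget is encoded and decoded in time $O(d)$, and the coarse part invokes $\mathcal{C}_{\textup{enc}}/\mathcal{C}_{\textup{dec}}$ a constant number of times plus Gray/reflection bookkeeping that is polynomial in $d$, so if $\mathcal{C}$ is polynomial-time then so is $\mathcal{G}$. Choosing the constants in $w=\Theta(d)$ and in the gadget length large enough makes the $\Omega(\cdot)$'s in the exponents positive, yielding the stated bound $e^{-\Omega(t)}+e^{-\Omega(d)}+O(P_p(\mathcal{C}))$.
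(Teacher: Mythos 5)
Your high-level decomposition (coarse index $q$ plus a unary-style fine counter $r$ with a geometric readout tail, glued so that each increment flips one bit) matches the paper's intent, and you correctly identify the crux: making the coarse part compatible with sensitivity $1$. But your proposal does not resolve that crux, and the architecture you commit to --- a \emph{separate} coarse block $\mathcal{C}'$ that is simultaneously sensitivity-$1$ and exactly decodable --- is self-contradictory. If $\mathcal{C}'$ has sensitivity $1$, then $\mathcal{C}'_{\textup{enc}}(q)$ and $\mathcal{C}'_{\textup{enc}}(q+1)$ differ in a single bit; flipping that bit (which happens with probability $p=\Omega(1)$) hands the decoder a string distributed exactly as typical noise applied to $\mathcal{C}'_{\textup{enc}}(q+1)$, so no decoder can recover $q$ ``exactly except with probability $O(P_p(\mathcal{C}))+e^{-\Omega(d)}$.'' With constant probability your $\hat q$ is off by one and your output $w\hat q+\hat r$ errs by $\Theta(d)$, breaking the claimed tail bound already at $t=w/2$. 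Your two suggested fixes do not close this: the recursive one needs $\Theta(\lg m/\lg d)$ levels (each window only shaves $\lg w=\Theta(\lg d)$ bits off the message length), which is not $O(1)$ and blows the length past $O(d)$; the ``interleaved offset copies'' idea is the right germ but is left entirely unspecified at the point where all the difficulty lives.

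The paper's construction dissolves the contradiction by \emph{not} separating the fine gadget from the coarse block. It first builds a code $\mathcal{K}$ (four concatenated copies of $\mathcal{C}$, two of them bit-negated for odd inputs plus parity padding) whose consecutive codewords differ in exactly $g=2(d+D(\mathcal{C}))$ positions, independent of $q$. The codeword for $v=gq+r$ is then a prefix of $\mathcal{K}_{\textup{enc}}(q+1)$ followed by a suffix of $\mathcal{K}_{\textup{enc}}(q)$, the split advancing by one differing position per increment: the unary counter for $r$ \emph{is} the partial transition between the two coarse codewords, spread across the whole word. The coarse index is recoverable only up to the ambiguity $\{q,q+1\}$ --- the split cuts at most one of the four component codewords, so a majority of the other three returns $q$ or $q+1$ --- and this ambiguity is resolved at the end by forming both candidates $v_0,v_1$ and keeping the one closer in Hamming distance to the received word, with separate lemmas controlling nearby candidates ($H(\mathcal{G}_{\textup{enc}}(v),\mathcal{G}_{\textup{enc}}(v'))=|v-v'|$ when $|v-v'|<D(\mathcal{C})$) and far ones ($H\ge D(\mathcal{C})$ otherwise). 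Accepting and then disambiguating the off-by-one coarse decoding, rather than demanding exact coarse recovery, is the missing idea; without it your union bound over three bad events does not go through.
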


An application of our codes is to replace the unary encodings used in the ALP mechanism of~\cite{alp-mech}.
For concreteness, we consider the case of representing histograms with $\varepsilon$-differential privacy (their Theorem~5.10 with $d=k=n$) where we can show:
\begin{theorem}\label{thm:alp}
    Given $\epsilon > 0$ and integers $n$, $u$, there exists a mechanism for releasing the histogram of a multiset of $n$ elements from $\{1,\dots,u\}$, producing an $\varepsilon$-differentially private data structure with the following properties:
    \begin{itemize}
        \item The space usage is $O(n\log u)$ bits,
        \item expected access time to a multiplicity estimate is $O(\log\log u)$,
        \item estimation error is $O(1/\varepsilon)$ in expectation and $O(\log(u)/\varepsilon)$ with high probability.
    \end{itemize}
\end{theorem}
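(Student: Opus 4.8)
The plan is to run the ALP mechanism of \cite{alp-mech} essentially verbatim, changing only one module: the unary code it uses to represent each (clamped, possibly scaled) per-element frequency is replaced by the sensitivity-$1$ code $\mathcal{G}$ of \cref{thm:main}. Recall the relevant structure of ALP: the frequency $f_x$ is clamped to a threshold $\ell = \Theta(\log u)$ (the at most $O(n/\log u)$ elements with $f_x\ge\ell$ being handled by a separate $\varepsilon$-DP mechanism, which we keep unchanged), the value $v^\ast:=\min(f_x,\ell)$ is written in unary, its $\Theta(\log u)$ bits are dispersed into a shared bit-array via a public hash function, the array is subjected to randomized response with flip probability $p=1/(1+e^{\Theta(\varepsilon)})=\frac{1}{2}-\Theta(\varepsilon)$, and a query maximum-likelihood-decodes from the $\Theta(\log u)$ probed positions; the mechanism is pure $\varepsilon$-DP because unary has sensitivity $1$, and those $\Theta(\log u)$ probes are the access-time bottleneck. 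I instantiate \cref{thm:main} with an explicit constant-rate, linear-time-decodable binary code $\mathcal{C}$ (e.g.\ an expander code) of message length $\lceil\lg\ell\rceil=\Theta(\log\log u)$ and block length $d=\Theta(\log\log u)$, the constant chosen large enough that $P_p(\mathcal{C})$ and $e^{-\Omega(d)}$ are both at most $(\log u)^{-3}$; this gives a code $\mathcal{G}$ of block length $d'=O(\log\log u)$ and sensitivity $1$. In the modified mechanism, the $j$-th bit of $\mathcal{G}_{\textup{enc}}(v^\ast)$ is written into cell $h_j(x)$ of a dedicated array $A_j$ of size $\Theta(n/\varepsilon)$, using independent public hash functions $h_1,\dots,h_{d'}$, and each cell holds the XOR (rather than the OR, since code bits are not monotone) of the logical bits mapped to it; a query reads the $d'$ cells $A_j[h_j(x)]$, runs $\mathcal{G}_{\textup{dec}}$, and clamps the result to $[0,\ell]$.

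Privacy, space and time are then immediate. Adding or removing one element changes $f_x$ by $1$, hence changes $\mathcal{G}_{\textup{enc}}(v^\ast)$ in exactly one bit (sensitivity $1$ of $\mathcal{G}$), hence XOR-flips exactly one array cell; so the modified per-element module still has sensitivity $1$ and every step of ALP's privacy proof applies, giving $\varepsilon$-DP. For space, the per-element blocks now occupy $O(n d'/\varepsilon)=O(n\log\log u)$ bits (for constant $\varepsilon$) instead of $\Theta(n\log u)$, while all other components of \cite{alp-mech} are unchanged, so the total remains $O(n\log u)$. For time, a query evaluates $d'$ hash functions, reads $d'$ bits, and runs $\mathcal{G}_{\textup{dec}}$ in time near-linear in $d'$, i.e.\ $O(\log\log u)$, plus the $O(1)$ expected cost of the large-count lookup.

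For utility, condition on the hash functions. Each $A_j$ holds at most $n$ logical bits in $\Theta(n/\varepsilon)$ cells, so the queried cell $A_j[h_j(x)]$ receives at least one colliding logical bit with probability $O(\varepsilon)$, and these events are independent across $j$ since the $A_j,h_j$ are independent. A collision flips the $j$-th read bit away from the $j$-th bit of $\mathcal{G}_{\textup{enc}}(v^\ast)$ by at most this probability, so by a Chernoff bound the collisions amount to at most $\delta_1 d'$ bit errors (for a suitably small constant $\delta_1$) with probability $\ge 1-e^{-\Omega(d')}$. A constant-rate code $\mathcal{C}$ — hence $\mathcal{G}$ — tolerates this many extra adversarial errors on top of $\Bern(p)$ noise essentially as well as pure $\Bern(p)$ noise, so conditioning on that event, \cref{thm:main} gives, for the pre-clamp decoded value $\widehat v$, $\Pr[\,|v^\ast-\widehat v|\ge t\,]\le e^{-\Omega(t)}+(\log u)^{-3}$, where the implied constant in $\Omega(t)$ depends on $\varepsilon$ exactly as in the randomized-response analysis of \cite{alp-mech}. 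For an element with $f_x<\ell$ we have $v^\ast=f_x$, so clamping $\widehat v$ to $[0,\ell]\ni f_x$ never increases the error: the estimate is always within $\ell=O(\log u/\varepsilon)$ of $f_x$ (this is the high-probability bound, needing no union bound for such elements), and integrating the tail gives expected error $O(1/\varepsilon)+\ell\,(\log u)^{-3}=O(1/\varepsilon)$. Elements with $f_x\ge\ell$ inherit the discrete-Laplace-type error of ALP's large-count mechanism, which is $O(1/\varepsilon)$ in expectation and, by a union bound over the at most $u$ such queries together with a constant rescaling, $O(\log u/\varepsilon)$ with high probability.

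The main obstacle is exactly the collision analysis above. In ALP, hash collisions are benign essentially for free: because the unary code is monotone, a collision can only extend a run of $1$s, yielding a one-sided, geometrically small overestimate. A general sensitivity-$1$ code has no monotonicity, so a collision may corrupt a code bit in either direction, and one must instead show that collisions behave like a small, independent, symmetric perturbation — small enough to be absorbed into the error-correction slack of a constant-rate code and leaving the effective noise level bounded away from $\frac{1}{2}$. This is what forces the $\Theta(1/\varepsilon)$ factor in the array sizes and the use of one independent hash/array per code coordinate rather than a single shared hash, and one must also check that none of this disturbs the sensitivity-$1$ property of the stored structure. A secondary point is that the base code has block length only $\Theta(\log\log u)$, so the $e^{-\Omega(d)}$ and $P_p(\mathcal{C})$ terms of \cref{thm:main} are only inverse-polylogarithmic, not inverse-polynomial, in $u$; this is why the clamp to $[0,\ell]$ is essential (a light frequency is $\le\ell=O(\log u/\varepsilon)$, so a decoding failure is never worse than the target error) and why the hidden constant in $d=\Theta(\log\log u)$ must be taken large enough.
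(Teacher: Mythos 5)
Your overall plan --- swap ALP's unary encoding for the sensitivity-$1$ error-correcting Gray code, use one hash function per code coordinate, and argue that hash collisions plus randomized response act as independent per-bit noise absorbed by the code's decoding radius --- is the same as the paper's. But there is a genuine gap in where you place the $\varepsilon$-dependence. You run the code through randomized response with flip probability $p=1/(1+e^{\Theta(\varepsilon)})=\frac{1}{2}-\Theta(\varepsilon)$. A constant-rate code of block length $d=\Theta(\log\log u)$ (and hence $\mathcal{G}$) only decodes when the per-bit flip probability is below a fixed constant threshold; as $\varepsilon\to 0$ your $p$ approaches $1/2$, the expected number of flips exceeds any constant decoding radius $\alpha d$, and $P_p(\mathcal{C})$ tends to $1$ rather than $(\log u)^{-3}$ --- no choice of the hidden constant in $d=\Theta(\log\log u)$ fixes this for a constant-rate code. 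Even in the regime where decoding succeeds, the concentration constant in \cref{thm:prob-diff-greater-than-t} is $c=(1-2p)^{2}/(4p+2)=\Theta(\varepsilon^{2})$, so integrating the tail $e^{-ct}$ gives expected error $\Theta(1/\varepsilon^{2})$, not the claimed $O(1/\varepsilon)$; the appeal to ``exactly as in the randomized-response analysis of \cite{alp-mech}'' does not transfer, because the $O(1/\varepsilon)$ bound at $p=\frac{1}{2}-\Theta(\varepsilon)$ is specific to the unary MLE and is not what \cref{thm:main} provides. Relatedly, arrays of size $\Theta(n/\varepsilon)$ cost $\Theta(n\log\log u/\varepsilon)$ bits, which exceeds the stated $O(n\log u)$ space bound for small $\varepsilon$.

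The paper resolves this differently: the randomized-response parameter is kept a small absolute constant, chosen together with the hash-table size so that (collision probability) $+$ (flip probability) $\le \alpha/2=1/10$ per bit, each term bounded by $1/20$; the $1/\varepsilon$ factor instead comes from ALP's \emph{scaling} step (ALP-Projection, Algorithms~4 and~5 of~\cite{alp-mech}), which downscales the counts before encoding, with privacy following from the combination of scaling and constant-parameter randomized response. The final error is then the $O(1)$-expected code error multiplied by the $O(1/\varepsilon)$ scaling factor, and the probability-$1$ bound $O(\log u/\varepsilon)$ follows because decoding always lands in $[\ell]$. With that fix, the rest of your argument --- sensitivity $1$ of the stored structure, independence of the per-bit error events across coordinates, and $O(\log\log u)$ access time --- goes through essentially as in the paper. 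Your XOR aggregation in place of the paper's OR is a harmless deviation: both preserve sensitivity $1$, and the expander code's worst-case decoding radius tolerates the one-sided collision errors of the OR version just as well.
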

This matches the privacy, space and error properties of~\cite{alp-mech}, which are optimal up to constant factors, while speeding up asymptotic access time exponentially.
Competing methods either use space proportional to $u$, have estimation error that is logarithmic in $u$, or query time that is proportional to $n$, see~\cite{alp-mech} for details.
The proof of Theorem~\ref{thm:alp} can be found in Appendix~\ref{sec:alp-proof}.

Since our code combines natural properties we believe it may have additional applications. For example, Acharya et al.~\cite{acharya2021distributed, acharya2023discrete} discuss how the lack of error robustness of the Gray code poses problems for some approaches to private, distributed mean estimation.

\subsection{Technical overview}

We provide two different reductions that transform a classical error-correcting code into an error-correcting Gray code $\mathcal{G}$.
The first reduction works for any error-correcting encoder-decoder pair $\mathcal{C}_{\textup{enc}}: [m] \rightarrow \{0,1\}^d$, $\mathcal{C}_{\textup{dec}}: \{0,1\}^d \rightarrow [m]$ and conceptually works in two steps. First, we construct a code $\mathcal{K}=(\mathcal{K}_{\textup{enc}}, \mathcal{K}_{\textup{dec}})$ with two key properties:
    \begin{itemize}
    \item Constant consecutive distance, meaning the Hamming distance between the encodings $\mathcal{K}_{\textup{enc}}(v)$, $\mathcal{K}_{\textup{enc}}(v+1)$ of consecutive integers is a fixed value $g$, independent of $v$, and
    \item decoding is possible even if, in addition to noise, the first or second half is replaced by bits from a different codeword.
    \end{itemize}
    We show that such a code with block length $d'=O(d)$ can be constructed by concatenating~4 copies of $\mathcal{C}_{\textup{enc}}(v)$, of which two are bit-wise negated if $v$ is odd, plus some padding bits that allow us to determine the parity of $v$.

The first step gives us codewords for every $v$ divisible by $g$ in which case we let $\mathcal{G}_{\textup{enc}}(v)=\mathcal{K}_{\textup{enc}}(v/g)$. 
    To obtain intermediate codewords we simply flip the $g$ bits that differ, one by one, such that the Hamming distance between consecutive codewords is 1.
    In this way, the codeword for $v$ is always a prefix of $\mathcal{K}_{\textup{enc}}(\lfloor v/g\rfloor)$ followed by a suffix of $\mathcal{K}_{\textup{enc}}(\lfloor v/g\rfloor + 1)$, meaning that we can recover either $\lfloor v/g\rfloor$ or $\lfloor v/g + 1\rfloor$ with high probability. 
    Finally, given $\lfloor v/g\rfloor$ or $\lfloor v/g \rfloor+ 1$ we can compute a maximum-likelihood estimate of $v \text{ mod } g$ that is tightly concentrated around the true value.

Our second reduction starts with a linear error-correcting code over $GF(2)$ given by its encoding matrix. As the encoding of a linear code is more predictable than a black-box code, this allows us to avoid working with constant consecutive distance codes, and results in an error-correcting Gray code with better constant factors.
Though our constructions can be instantiated with any code, we highlight the combination with expander codes and polar codes, respectively.

\section{Definitions}
For bitstrings $s,t$, we define the concatenation of $s$ and $t$ as $st$. We define the bitwise xor of $s$ and $t$ as $s\oplus t$, and the bitwise inverse of $s$ as $\overline{s}$. For a natural number $n$, we write $[n]$ to denote the set $\{0,1,\ldots,n-1 \}$. We denote $\Bern(p)$ as the Bernoulli distribution. For a vector $x$, we let $\norm{x}_1$ be the Manhattan norm, that is, the sum of all entries in the vector. In particular for binary vectors, this corresponds to the number of $1$ entries in the vector.   

We will only be exploring the setting of binary codes, that is, codes that uses a binary alphabet. This is especially meaningful for this work, since our motivation is to use the most simple type of noise to achieve differential privacy in the setting of counting.
\begin{Definition}[Code]
    A code $\mathcal{C} = (\mathcal{C}_{\textup{enc}},\mathcal{C}_{\textup{dec}})$ is a tuple of an encoding mapping $\mathcal{C}_{\textup{enc}}: [m]\to\{0,1\}^d$ and a decoding mapping $\mathcal{C}_{\textup{dec}}: \{0,1\}^d\to[m]$ such that $\mathcal{C}_{\textup{dec}}(\mathcal{C}_{\textup{enc}}(v)) = v$ for all $v\in[m]$. 
\end{Definition}
In this setting, we will refer to $d$ as the \textit{block length}. Often in literature, we also have the \textit{message length}, which is the number of bits needed to represent the value encoded. We will often be writing this as $\lg m$, signifying that we are able to encode $m$ different values, as we for our usage are more interested in $m$ than the number of bits needed to represent $m$.

We refer to the value $\mathcal{C}_{\textup{enc}}(v)$ as a codeword, signifying it is in the image of the code $\mathcal{C}$.
When looking at error-correcting codes there is often a need for measuring the efficiency. We will focus on the failure probabilities of codes when considering their performance.
\begin{Definition}[Failure Probability]
    We define the failure probability of a code $\mathcal{C} = (\mathcal{C}_{\textup{enc}},\mathcal{C}_{\textup{dec}})$ with block length $d$  and error $b_p\sim \Bern(p)^d$ as the probability,
    \begin{equation*}
        P_p(\mathcal{C}) = \max_{v\in [m]}\Pr[\mathcal{C}_{\textup{dec}}(\mathcal{C}_{\textup{enc}}(v)\oplus b_p) \neq v].
    \end{equation*}
\end{Definition}

What the failure probability encapsulates is the probability of a code failing under an adversarial choice of $v$ to be encoded. A second measure which can be used to reason about the effectiveness of codes is the distance of a code. This is informally a measure of how close the closest codewords of a code are to each other in terms of Hamming Distance.

\begin{Definition}[Hamming Distance]
    The Hamming distance $H : \{0,1\}^n\times\{0,1\}^n \to\mathbb{N}$ between two binary strings is defined as the number of bits where the two strings differ:
    \begin{equation*}
        H(a,b) = \left|\{i : a_i \neq b_i\}\right|.
    \end{equation*}
\end{Definition}
\begin{Definition}[Distance]
    The distance $D(\mathcal{C})$ of a code $\mathcal{C} = (\mathcal{C}_{\textup{enc}},\mathcal{C}_{\textup{dec}})$ with block length $d$ is defined as the minimum distance between any two distinct codewords:
    \begin{equation*}
        D(\mathcal{C}) := \min_{a,b\in[m], a\neq b} H(\mathcal{C}_{\textup{enc}}(a),\mathcal{C}_{\textup{enc}}(b))
    \end{equation*}
\end{Definition}
Finally, we will formally define what we are going to mean by a Gray code.
\begin{Definition}[Gray code]
    A Gray code is a code $\mathcal{G} = (\mathcal{G}_{\textup{enc}},\mathcal{G}_{\textup{dec}})$ with block length $d$ and message length $\lg m$ such that for each $v\in [m-1]$, $H(\mathcal{G}_{\textup{enc}}(v), \mathcal{G}_{\textup{enc}}(v + 1)) = 1.$
\end{Definition}
\section{Constructions}

In this section, we will look at how to construct an error-correcting Gray code. The construction will be done through reductions from a black-box error-correcting code. The goal is to prove \cref{thm:main}.

We will show two similar constructions that achieve the same bound, though with different constants. The first one will be constructed from a general code $\mathcal{C}$, where we make no assumptions about the structure of the code. This is shown in \cref{sec:con-from-gen}. 

In the second construction, we will show how the assumption that the code $\mathcal{C}$ is linear leads to using fewer repetitions, thereby lowering the constants associated with the code. This is shown in \cref{app:con-from-lin}.

Before we start with the actual codes, we are going to start by introducing unary codes. As mentioned in the introduction, these codes have most of the properties we are looking for, except they are very space-inefficient.

\begin{construction}[Unary Code]\label{con:unary-code}
    An unary code $\mathcal{U}  = (\mathcal{U}_{\textup{enc}},\mathcal{U}_{\textup{dec}})$ with block length $m$ and message length $\lg m$ is defined such that 
    \begin{align*}
        \mathcal{U}_{\textup{enc}}(v) &= 1^v 0^{m-v}\\
        \mathcal{U}_{\textup{dec}}(c) &= \argmin_{v\in [m]} H(\mathcal{U}_{\textup{enc}}(v),c).
    \end{align*}
\end{construction}
Note that there exist efficient decoding algorithms for the unary code, see e.g.~\cite{alp-mech}.

\subsection{Construction from a general code} \label{sec:con-from-gen}
In this section we will show how to construct an error-correcting Gray code from a general code. We will start with introducing what we call complement codes. The idea is that we want to transform a code $\mathcal{C}$ into a code $\mathcal{L}$ where even values are encoded unmodified, while odd values have their bits negated. This will lead to us being able to achieve constant distance between consecutive codewords in \cref{con:const-dist-code}.

\begin{construction}[Complement Code]\label{con:complement-code}
    Using a code $\mathcal{C}$ with block length $d$ and message length $\lg m$, we construct a code $\mathcal{L} = (\mathcal{L}_{\textup{enc}},\mathcal{L}_{\textup{dec}})$ called a \emph{complement code} with block length $d + D(\mathcal{C})$ and message length $m$. Define
    \begin{equation*}
        \mathcal{L}_{\textup{enc}}(v) = \begin{cases} \mathcal{C}_{\textup{enc}}(v)0^{D(\mathcal{C})} & \text{ if } $v$ \text{ is even}\\ \overline{\mathcal{C}_{\textup{enc}}(v)}1^{D(\mathcal{C})} & \text{ otherwise}\end{cases}
    \end{equation*}
    and for $c\in \{0,1\}^d, t\in \{0,1\}^{D(\mathcal{C})}$,
    \begin{equation*}
        \mathcal{L}_{\textup{dec}}(ct) = \begin{cases} \mathcal{C}_{\textup{dec}}(c) & \text{ if $t$ contains more $0$s than $1$s}\\ \mathcal{C}_{\textup{dec}}(\overline{c}) & \text{ if $t$ contains more $1$s than $0$s}\end{cases}
    \end{equation*}
    As a technical detail, if there is equally many $0$s and $1$s, then one of the options is chosen uniformly at random. It is clear that $\mathcal{L}$ is a code since for all $v\in [m]$, $\mathcal{L}_{\textup{dec}}(\mathcal{L}_{\textup{enc}}(v)) = v$ from the fact that $\mathcal{C}$ is a code.
\end{construction}

To start with, we are going to bound the error rate of a general code, which we can use to argue that the addition of the parity padding does not significantly worsen the quality of the code. The proof of this lemma can be found in \cref{app:code-fail-proof}.
\begin{lemma}\label{lem:code-fail-close-to-half-dist}
    Let $\mathcal{C} = (\mathcal{C}_{\textup{enc}},\mathcal{C}_{\textup{dec}})$ be a code with block length $d$ and message length $\lg m$, let $p\in (0, 1/2)$ and let $c_p \sim\Bern(p)^{D(C)}$. Then 
    \begin{equation*}
        P_p(\mathcal{C}) \ge \Pr[\norm{c_p}_1 > \frac{D(\mathcal{C})}{2}] + \frac{1}{2}\Pr[\norm{c_p}_1 = \frac{D(\mathcal{C})}{2}]
    \end{equation*}
\end{lemma}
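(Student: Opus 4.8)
The plan is to derive the bound from a single pair of nearest codewords, exploiting that \emph{any} deterministic decoder must err on at least one of them for every received word. Concretely, I would fix $a\neq b\in[m]$ attaining $H(\mathcal{C}_{\textup{enc}}(a),\mathcal{C}_{\textup{enc}}(b))=D(\mathcal{C})$, abbreviate $D:=D(\mathcal{C})$, and let $S$ be the set of the $D$ coordinates on which $\mathcal{C}_{\textup{enc}}(a)$ and $\mathcal{C}_{\textup{enc}}(b)$ differ, so that the restriction of $\mathcal{C}_{\textup{enc}}(b)$ to $S$ is the bitwise complement of the restriction of $\mathcal{C}_{\textup{enc}}(a)$ to $S$, while the two codewords agree outside $S$. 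Since $P_p(\mathcal{C})$ is a maximum over $v\in[m]$, it dominates the failure probability at both $v=a$ and $v=b$, hence $2P_p(\mathcal{C})\ge\Pr[\mathcal{C}_{\textup{dec}}(\mathcal{C}_{\textup{enc}}(a)\oplus b_p)\neq a]+\Pr[\mathcal{C}_{\textup{dec}}(\mathcal{C}_{\textup{enc}}(b)\oplus b_p)\neq b]$, with $b_p\sim\Bern(p)^{d}$.

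Next I would expand both probabilities over the possible received words. Writing $\mu_a(y):=\Pr[\mathcal{C}_{\textup{enc}}(a)\oplus b_p=y]=p^{H(y,\mathcal{C}_{\textup{enc}}(a))}(1-p)^{d-H(y,\mathcal{C}_{\textup{enc}}(a))}$ and likewise $\mu_b$, the right-hand side becomes $\sum_y\big(\mu_a(y)\mathbf{1}[\mathcal{C}_{\textup{dec}}(y)\neq a]+\mu_b(y)\mathbf{1}[\mathcal{C}_{\textup{dec}}(y)\neq b]\big)$. The crucial point is that $\mathcal{C}_{\textup{dec}}(y)$ is a single fixed value and so cannot equal both $a$ and $b$; therefore $\mathbf{1}[\mathcal{C}_{\textup{dec}}(y)\neq a]+\mathbf{1}[\mathcal{C}_{\textup{dec}}(y)\neq b]\ge 1$ for every $y$. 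Replacing $\mu_a(y)$ and $\mu_b(y)$ by $\min(\mu_a(y),\mu_b(y))$ in their respective indicator sums can only decrease the expression, which yields $2P_p(\mathcal{C})\ge\sum_y\min(\mu_a(y),\mu_b(y))=1-\mathrm{TV}(\mu_a,\mu_b)$.

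It then remains to evaluate $\sum_y\min(\mu_a(y),\mu_b(y))$. Because $\mathcal{C}_{\textup{enc}}(a)$ and $\mathcal{C}_{\textup{enc}}(b)$ agree off $S$, both $\mu_a$ and $\mu_b$ factor as a common factor on the coordinates outside $S$ times a factor on $S$; summing out the outside coordinates (whose total mass is $1$) reduces the sum to $\sum_{w\in\{0,1\}^{D}}\min(\nu_a(w),\nu_b(w))$, where, with $k:=H(w,\mathcal{C}_{\textup{enc}}(a)|_S)$, we have $\nu_a(w)=p^{k}(1-p)^{D-k}$ and $\nu_b(w)=p^{D-k}(1-p)^{k}$, using that $\mathcal{C}_{\textup{enc}}(b)|_S=\overline{\mathcal{C}_{\textup{enc}}(a)|_S}$. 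Since $p<1/2$ we get $\min(\nu_a(w),\nu_b(w))=p^{\max(k,D-k)}(1-p)^{\min(k,D-k)}$, and there are $\binom{D}{k}$ strings $w$ with that value of $k$. Splitting the resulting sum over $k$ into the ranges $k>D/2$, $k=D/2$, and $k<D/2$: the first range contributes $\Pr[\norm{c_p}_1>D/2]$, the substitution $j=D-k$ turns the third range into $\Pr[\norm{c_p}_1>D/2]$ as well, and the (possibly empty) middle term is exactly $\Pr[\norm{c_p}_1=D/2]$, where $c_p\sim\Bern(p)^{D}$. Hence $2P_p(\mathcal{C})\ge 2\Pr[\norm{c_p}_1>D/2]+\Pr[\norm{c_p}_1=D/2]$, which rearranges to the claimed inequality.

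The only genuinely delicate step is the passage to $\min(\mu_a,\mu_b)$: since the decoder is an arbitrary map rather than a maximum-likelihood decoder, one cannot argue directly about which codeword it ``ought'' to output, and the total-variation bound is precisely what sidesteps this — it uses only that a deterministic decoder is wrong on at least one of the two closest codewords for \emph{each} received word. Everything after that is the routine binomial bookkeeping above, with the only mild care needed being to track the middle term according to the parity of $D(\mathcal{C})$.
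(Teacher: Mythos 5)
Your proof is correct, and it reaches the exact bound $2P_p(\mathcal{C})\ge 2\Pr[\norm{c_p}_1>D/2]+\Pr[\norm{c_p}_1=D/2]$ with all three binomial pieces accounted for (including the empty middle term when $D(\mathcal{C})$ is odd). The underlying idea is the same as the paper's -- fix two codewords at distance exactly $D(\mathcal{C})$ and exploit that no decoder can be simultaneously correct for both on any received word -- but the execution differs. The paper conditions on whether the number of flips among the $D(\mathcal{C})$ differing coordinates is below, equal to, or above $D(\mathcal{C})/2$, pairs up the conditional failure probabilities of the two codewords across complementary events, and then recombines using $\Pr[\norm{c_p}_1>D/2]\le\Pr[\norm{c_p}_1<D/2]$. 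You instead sum the pointwise minimum $\min(\mu_a(y),\mu_b(y))$ over received words (the Le Cam / total-variation lower bound for binary hypothesis testing) and evaluate that sum exactly by factoring out the coordinates where the two codewords agree. Your route is arguably tighter in its justification: the paper's conditional inequalities such as $\Pr[I(v)\mid S]+\Pr[I(w)\mid U]\ge 1$ compare expectations of complementary indicators under two \emph{different} conditional measures on the same support, which takes some care to justify, whereas your pointwise bound $\mathbf{1}[\mathcal{C}_{\textup{dec}}(y)\neq a]+\mathbf{1}[\mathcal{C}_{\textup{dec}}(y)\neq b]\ge 1$ followed by replacing each weight with the minimum is airtight as written. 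One small bonus of your argument worth noting: it extends verbatim to randomized decoders, since $\Pr[\mathcal{C}_{\textup{dec}}(y)\neq a]+\Pr[\mathcal{C}_{\textup{dec}}(y)\neq b]\ge 1$ still holds for every fixed $y$.
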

What this lemma should be read as is, that any code would perform poorly if put in the same situation as the situations where the parity padding performs poorly. In the setting we are exploring we are mostly concerned with the failure probability so the next lemma lets us directly relate the failure probability of this construction to the failure probability of~$\mathcal{C}$. The lemma mostly argues and uses that any code must have an error probability that is lower bounded by that of decoding the wrong parity.

\begin{lemma}\label{lem:complement-code-error}
    Let $\mathcal{L}$ be a code constructed using \cref{con:complement-code} on the code $\mathcal{C}$. Then $P_p(\mathcal{L}) \le 2P_p(\mathcal{C})$.
\end{lemma}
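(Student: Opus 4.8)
The plan is to bound the failure probability of $\mathcal{L}$ by conditioning on whether the parity-padding block $t$ (of length $D(\mathcal{C})$) is decoded to the correct parity. Fix an arbitrary $v \in [m]$; WLOG take $v$ even, so $\mathcal{L}_{\textup{enc}}(v) = \mathcal{C}_{\textup{enc}}(v)\,0^{D(\mathcal{C})}$, and the noisy received word is $(\mathcal{C}_{\textup{enc}}(v)\oplus b)\,(0^{D(\mathcal{C})}\oplus c)$ where $b \sim \Bern(p)^d$ and $c \sim \Bern(p)^{D(\mathcal{C})}$ are independent. There are two disjoint ``bad'' events whose union contains the decoding-failure event. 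Event $A$: the padding majority vote is wrong, i.e. $\|c\|_1 > D(\mathcal{C})/2$, or a tie broken the wrong way. Event $B$: the padding majority vote is correct (so $\mathcal{L}_{\textup{dec}}$ applies $\mathcal{C}_{\textup{dec}}$ to the first block directly) but $\mathcal{C}_{\textup{dec}}(\mathcal{C}_{\textup{enc}}(v)\oplus b) \neq v$.

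First I would argue $\Pr[B] \le P_p(\mathcal{C})$: conditioned on the padding vote being correct, the failure of $\mathcal{L}$ is exactly the failure of $\mathcal{C}$ on input $v$ under noise $b$, and since $b$ is independent of $c$, $\Pr[B] \le \Pr[\mathcal{C}_{\textup{dec}}(\mathcal{C}_{\textup{enc}}(v)\oplus b)\neq v] \le P_p(\mathcal{C})$ by definition of the failure probability as a maximum over $v$. (A symmetric argument handles odd $v$, using that $\overline{\overline{\mathcal{C}_{\textup{enc}}(v)}\oplus b} = \mathcal{C}_{\textup{enc}}(v)\oplus b$, so decoding $\overline{c}$ still reduces to $\mathcal{C}_{\textup{dec}}$ on a $p$-noisy copy of $\mathcal{C}_{\textup{enc}}(v)$.) Second, I would bound $\Pr[A]$: this is precisely the probability of a length-$D(\mathcal{C})$ majority vote failing under $\Bern(p)$ noise, namely $\Pr[\|c\|_1 > D(\mathcal{C})/2] + \tfrac12\Pr[\|c\|_1 = D(\mathcal{C})/2]$ — which by Lemma~\ref{lem:code-fail-close-to-half-dist} is at most $P_p(\mathcal{C})$. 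Combining, $\Pr[\mathcal{L}_{\textup{dec}} \text{ fails on } v] \le \Pr[A] + \Pr[B] \le 2P_p(\mathcal{C})$, and taking the max over $v$ gives $P_p(\mathcal{L}) \le 2P_p(\mathcal{C})$.

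The main obstacle is getting the event bookkeeping exactly right, particularly the tie-breaking: when $\|c\|_1 = D(\mathcal{C})/2$ the decoder picks a branch uniformly at random, so the failure contribution from a tie is $\tfrac12$ of the probability that the first block alone would be correctly/incorrectly decoded — this needs to be folded cleanly into the $\Pr[A]$ term, and the precise form of Lemma~\ref{lem:code-fail-close-to-half-dist} (with its $\tfrac12$ on the equality case) is tailored exactly to absorb this. One should be slightly careful that on a tie, if $\mathcal{C}_{\textup{dec}}$ happens to decode correctly anyway (e.g. because $b$ was benign), the "wrong branch" may still succeed; but bounding crudely by $\Pr[A]$ (counting every tie and every padding-minority event as a full failure) is a valid upper bound and is all we need, so I would not try to be tight here. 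A remark reconciling the two conventions — majority failure vs. the split form in Lemma~\ref{lem:code-fail-close-to-half-dist} — will keep the argument transparent.
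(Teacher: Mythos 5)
Your proposal is correct and follows essentially the same route as the paper: decompose the failure event into (i) the parity padding being decoded to the wrong parity, bounded by $P_p(\mathcal{C})$ via Lemma~\ref{lem:code-fail-close-to-half-dist}, and (ii) the inner code failing, bounded by $P_p(\mathcal{C})$ by definition, then union bound. Your version is in fact more careful than the paper's terse sketch, explicitly handling the tie-breaking and the odd-parity case.
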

\begin{proof}
    Let $m$ be the message length and let $d$ be the block length of $\mathcal{C}$. The code $\mathcal{L}$ can be decoded wrongly in two different ways. Either the parity bit-string is decoded incorrectly, or the inner code $\mathcal{C}$ is decoded incorrectly. 
    
    For the first case, observe that this is the setting of \cref{lem:code-fail-close-to-half-dist}, and so we can bound the probability of the parity code failing with probability $P_p(\mathcal{C})$. Taking a union bound with the probability of decoding the inner code incorrectly, we get $P_p(\mathcal{L}) \le 2P_p(\mathcal{C})$.
\end{proof}

We now show one of our main constructions. While we do insert a large amount of redundancy in the code through repetition, it is this repetition we will later use in the decoding process to rule out collisions of codewords.
\begin{construction}[Constant Consecutive Distance Code]\label{con:const-dist-code}
    Using the code $\mathcal{C}$ with block length $d$ and message length $\lg m$ and the code $\mathcal{L}$ that is obtained from \cref{con:complement-code} on $\mathcal{C}$, we construct a code $\mathcal{K} = (\mathcal{K}_{\textup{enc}},\mathcal{K}_{\textup{dec}})$ with block length $4d + 2D(\mathcal{C})$ and message length $\lg m$. We define 
    \begin{equation*}
        \mathcal{K}_{\textup{enc}}(v) := \mathcal{C}_{\textup{enc}}(v)\mathcal{L}_{\textup{enc}}(v)\mathcal{C}_{\textup{enc}}(v)\mathcal{L}_{\textup{enc}}(v).
    \end{equation*}
    
    Decoding is defined as follows: Let $c_1,c_2\in \{0,1\}^d, l_1,l_2\in \{0,1\}^{d + D(\mathcal{C})}$. Define the output of $\mathcal{K}_{\textup{dec}}(c_1l_1c_2l_2)$ as computing the four values $\mathcal{C}_{\textup{dec}}(c_1),\mathcal{C}_{\textup{dec}}(c_2),\mathcal{L}_{\textup{dec}}(l_1)$ and $\mathcal{L}_{\textup{dec}}(l_2)$ and outputting the most frequent one, breaking ties arbitrarily.
    
    Since $\mathcal{C}_{\textup{dec}}(\mathcal{C}_{\textup{enc}}(v))=\mathcal{L}_{\textup{dec}}(\mathcal{L}_{\textup{enc}}(v))=v$, decoding will also result in $v$, and so it holds that $\mathcal{K}_{\textup{dec}}(\mathcal{K}_{\textup{enc}}(v)) = v$ for all $v$, implying that $\mathcal{K}$ is a code.
\end{construction}
The following lemma gives one of the primary reasons to use this construction, namely that successive codewords differs by a fixed number of bits independent of $v$. This will make decoding feasible later. It should be mentioned that the number of bits still depends on $D(\mathcal{C})$.
\begin{lemma}\label{lem:const-dist}
    Let $\mathcal{K} = (\mathcal{K}_{\textup{enc}},\mathcal{K}_{\textup{dec}})$ be a code obtained from \cref{con:const-dist-code} on the code $\mathcal{C} = (\mathcal{C}_{\textup{enc}},\mathcal{C}_{\textup{dec}})$ with block length $d$ and message length $\lg m$. Then for any $v \in [m-1]$, 
    \begin{equation*}
        H(\mathcal{K}_{\textup{enc}}(v),\mathcal{K}_{\textup{enc}}(v + 1)) = 2 (d + D(\mathcal{C}))
    \end{equation*}
\end{lemma}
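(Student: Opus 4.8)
The plan is to expand $H(\mathcal{K}_{\textup{enc}}(v),\mathcal{K}_{\textup{enc}}(v+1))$ block by block along the four-block decomposition of \cref{con:const-dist-code}, and to show that the contribution of the inner $\mathcal{C}$-blocks cancels exactly against the contribution of the $\mathcal{L}$-blocks, leaving a quantity independent of $v$. Since $\mathcal{K}_{\textup{enc}}(v)=\mathcal{C}_{\textup{enc}}(v)\mathcal{L}_{\textup{enc}}(v)\mathcal{C}_{\textup{enc}}(v)\mathcal{L}_{\textup{enc}}(v)$ and $\mathcal{K}_{\textup{enc}}(v+1)$ has the same block structure, and Hamming distance is additive over a fixed partition of the coordinates, I would first write
\[
    H(\mathcal{K}_{\textup{enc}}(v),\mathcal{K}_{\textup{enc}}(v+1)) = 2\,H\!\left(\mathcal{C}_{\textup{enc}}(v),\mathcal{C}_{\textup{enc}}(v+1)\right) + 2\,H\!\left(\mathcal{L}_{\textup{enc}}(v),\mathcal{L}_{\textup{enc}}(v+1)\right).
\]

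Next I would use the crucial fact that consecutive integers have opposite parity, so exactly one of $v,v+1$ falls into the negated branch of \cref{con:complement-code}. Assume without loss of generality that $v$ is even and $v+1$ is odd; the other case is symmetric under the same argument. Then $\mathcal{L}_{\textup{enc}}(v)=\mathcal{C}_{\textup{enc}}(v)0^{D(\mathcal{C})}$ and $\mathcal{L}_{\textup{enc}}(v+1)=\overline{\mathcal{C}_{\textup{enc}}(v+1)}1^{D(\mathcal{C})}$. Splitting off the $D(\mathcal{C})$ padding coordinates (which always differ, contributing $D(\mathcal{C})$) and using the elementary identity $H(a,\overline b)=d-H(a,b)$ for length-$d$ strings gives
\[
    H\!\left(\mathcal{L}_{\textup{enc}}(v),\mathcal{L}_{\textup{enc}}(v+1)\right) = H\!\left(\mathcal{C}_{\textup{enc}}(v),\overline{\mathcal{C}_{\textup{enc}}(v+1)}\right) + D(\mathcal{C}) = d - H\!\left(\mathcal{C}_{\textup{enc}}(v),\mathcal{C}_{\textup{enc}}(v+1)\right) + D(\mathcal{C}).
\]
Substituting this back into the first display, the two occurrences of $H(\mathcal{C}_{\textup{enc}}(v),\mathcal{C}_{\textup{enc}}(v+1))$ cancel and one is left with $2\bigl(d+D(\mathcal{C})\bigr)$, which proves the lemma.

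I do not expect a real obstacle here: the statement is essentially a bookkeeping computation. The only points requiring care are (i) invoking the opposite-parity observation so that precisely one of the two $\mathcal{L}$-encodings is complemented, and (ii) correctly separating the inner $d$ coordinates — where the complement identity $H(a,\overline b)=d-H(a,b)$ applies — from the $D(\mathcal{C})$ parity-padding coordinates, which contribute a full $D(\mathcal{C})$ because $0^{D(\mathcal{C})}$ and $1^{D(\mathcal{C})}$ differ in every position. One should also briefly note that the claimed distance does not depend on the actual value of $H(\mathcal{C}_{\textup{enc}}(v),\mathcal{C}_{\textup{enc}}(v+1))$, which is exactly the ``constant consecutive distance'' property the construction is designed to provide.
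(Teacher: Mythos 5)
Your proposal is correct and follows essentially the same route as the paper's proof: the same block-wise decomposition into $2H(\mathcal{C}_{\textup{enc}}(v),\mathcal{C}_{\textup{enc}}(v+1)) + 2H(\mathcal{L}_{\textup{enc}}(v),\mathcal{L}_{\textup{enc}}(v+1))$, the same opposite-parity observation, and the same complement identity $H(a,\overline{b})=d-H(a,b)$ producing the cancellation. No gaps.
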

\begin{proof}
    From the construction of $\mathcal{K}$, we have
    \begin{equation}\label{eq:const-dist-hdist}
        H(\mathcal{K}_{\textup{enc}}(v),\mathcal{K}_{\textup{enc}}(v + 1)) = 2H(\mathcal{C}_{\textup{enc}}(v),\mathcal{C}_{\textup{enc}}(v + 1)) + 2H(\mathcal{L}_{\textup{enc}}(v),\mathcal{L}_{\textup{enc}}(v + 1)),
    \end{equation}
    where $\mathcal{C} = (\mathcal{C}_{\textup{enc}},\mathcal{C}_{\textup{dec}})$ and $\mathcal{L} = (\mathcal{L}_{\textup{enc}},\mathcal{L}_{\textup{dec}})$ is a code obtained from \cref{con:complement-code}. Since $v$ and $v + 1$ have different parity, we have that 
    \begin{equation*}
        H(\mathcal{L}_{\textup{enc}}(v),\mathcal{L}_{\textup{enc}}(v + 1)) = H(\mathcal{C}_{\textup{enc}}(v),\overline{\mathcal{C}_{\textup{enc}}(v + 1)}) + D(\mathcal{C}).
    \end{equation*}
    Observe that $H(\mathcal{C}_{\textup{enc}}(v),\overline{\mathcal{C}_{\textup{enc}}(v + 1)}) = d-H(\mathcal{C}_{\textup{enc}}(v),\mathcal{C}_{\textup{enc}}(v + 1))$, since $\mathcal{C}_{\textup{enc}}(v)$ differs from $\mathcal{C}_{\textup{enc}}(v + 1)$ in exactly the bit positions where $\mathcal{C}_{\textup{enc}}(v)$ is equal to $\overline{\mathcal{C}_{\textup{enc}}(v + 1)}$. This gives
    \begin{equation*}
        H(\mathcal{L}_{\textup{enc}}(v),\mathcal{L}_{\textup{enc}}(v + 1)) = d-H(\mathcal{C}_{\textup{enc}}(v),\mathcal{C}_{\textup{enc}}(v + 1)) + D(\mathcal{C}).
    \end{equation*}
    Substituting this into \eqref{eq:const-dist-hdist} completes the proof.
\end{proof}

The second important property of \cref{con:const-dist-code} is that because it is composed of $4$ codes, we can allow any one of the codes to be modified to a degree where we are unable to decode it correctly as we can discover the encoded value from the other three codes. As we might be unable to directly decide which codeword is the one we are unable to decode, the final decoding is decided by a majority vote.

For the next construction, we define the functions $\pre_i(s)$ on the binary string $s$ to be the prefix of $s$ containing $i$ characters, and similarly we define $\suf_i(s)$ to be the suffix of $s$ containing $i$ characters.

\begin{construction}[Error Correcting Gray Code]\label{con:error-correcting-gray-code}
    Let $\mathcal{C}$ be a code with block length $d$ and message length $\lg m$ and let $\mathcal{K} = (\mathcal{K}_{\textup{enc}},\mathcal{K}_{\textup{dec}})$ be a code obtained from \cref{con:const-dist-code} on $\mathcal{C}$. Let $g = 2 (d + D(\mathcal{C}))$. We define the Error Correcting Gray Code $\mathcal{G} = (\mathcal{G}_{\textup{enc}},\mathcal{G}_{\textup{dec}})$ with block length $4d + 2D(\mathcal{C})$ and message length $\lg mg$.

    Let $b^{(v)}_{1},...,b^{(v)}_{g}$ be the indices of the bits where $\mathcal{K}_{\textup{enc}}(v)$ and $\mathcal{K}_{\textup{enc}}(v + 1)$ are different, in sorted order and define $b^{(v)}_{0} = 0$. Observe that by \cref{lem:const-dist}, $\mathcal{K}_{\textup{enc}}(v)$ and $\mathcal{K}_{\textup{enc}}(v + 1)$ are different in exactly $g$ bits. 
    
    Let $v = qg + r$ where $0 \le r < g$. Then
    \begin{equation}\label{eq:error-correcting-gray-code-enc}
        \mathcal{G}_{\textup{enc}}(v) = \pre_{b^{(q)}_{r}}\left(\mathcal{K}_{\textup{enc}}(q + 1)\right)\suf_{g-b^{(q)}_{r}}\left(\mathcal{K}_{\textup{enc}}(q)\right).
    \end{equation}
    Define the decoding $\mathcal{G}_{\textup{dec}}(c)$ as follows: Let $h_v\in \{0,1\}^g$ such that 
    \begin{equation*}
        (h_{v})_i = \begin{cases}0 & \text{ if } c_{b^{(v)}_{i}} = \mathcal{K}_{\textup{enc}}(v)_{b^{(v)}_{i}}\\ 1 & \text{ if } c_{b^{(v)}_{i}} = \mathcal{K}_{\textup{enc}}(v + 1)_{b^{(v)}_{i}}\\\end{cases}.
    \end{equation*}
    From this construction, $h_v$ is an unary code. Let $\mathcal{U}_{\textup{dec}}$ be a decoder for unary codes and let $t = \mathcal{K}_{\textup{dec}}(c)$. We then end up with two candidate decodings, which we name 
    $$v_0~=~g(t-1) + \mathcal{U}_{\textup{dec}}(h_{t-1}) \text{ and } v_1 = gt + \mathcal{U}_{\textup{dec}}(h_t)$$
    
    Then we define
    \begin{equation}
        \mathcal{G}_{\textup{dec}}(c) = \argmin_{v \in \{v_0,v_1\}} H(c,\mathcal{G}_{\textup{enc}}(v))
    \end{equation}
    We refer to \cref{lem:g-gray-code} to show that this indeed is a Gray code. 
\end{construction}

\begin{figure}
    \centering
    \ctikzfig{codes}
    \caption{Sketch of the two different perspectives the code can be viewed. The upper one shows the four different component codes, while the lower one shows how it consists of a prefix and a suffix}
    \label{fig:code-sketch}
\end{figure}

\begin{Remark}
    Observe that when constructing $\mathcal{G}$ from $\mathcal{C}$ using \cref{con:error-correcting-gray-code}, the time complexity of the encoding and decoding of $\mathcal{G}$ is the same as that of $\mathcal{C}$, up to constant factors.
\end{Remark}

From the previous construction, we observe that we end up needing $4$ copies of the original code $\mathcal{C}$. We have to use at least $3$ to be able to uniquely decode the outer code, while we needed an even number to be able to easily decode the code from having constant distance.

In the next three lemmas, we are going to show that our construction indeed is a Gray code. Recall that a code has the property of $\mathcal{G}_{\textup{dec}}(\mathcal{G}_{\textup{enc}}(v)) = v$ for all $v$ and that a Gray code furthermore has sensitivity of $1$. We show that $\mathcal{G}$ has a sensitivity of $1$ in \cref{lem:one-ham-dist}. In \cref{lem:injective-code} we show that $\mathcal{G}_{\textup{enc}}(v)$ is injective. These two lemmas are combined in \cref{lem:g-gray-code} to show that $\mathcal{G}$ indeed is a Gray code.

\begin{lemma}\label{lem:one-ham-dist}
     Let $\mathcal{C}$ be a code with message length $\lg m$ and let $\mathcal{G} = (\mathcal{G}_{\textup{enc}},\mathcal{G}_{\textup{dec}})$ be obtained from \cref{con:error-correcting-gray-code} on $\mathcal{C}$. Then for any $v\in [m-1]$,
     \begin{equation*}
         H(\mathcal{G}_{\textup{enc}}(v),\mathcal{G}_{\textup{enc}}(v + 1)) = 1.
     \end{equation*}
\end{lemma}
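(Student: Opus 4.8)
The plan is to unfold the definition of $\mathcal{G}_{\textup{enc}}$ from \cref{con:error-correcting-gray-code} and do a direct case analysis on whether incrementing $v$ crosses a multiple of $g$. Write $v = qg + r$ with $0 \le r < g$. I would first handle the generic case $r < g-1$, where $v+1 = qg + (r+1)$ uses the same quotient $q$, so both $\mathcal{G}_{\textup{enc}}(v)$ and $\mathcal{G}_{\textup{enc}}(v+1)$ are a prefix of $\mathcal{K}_{\textup{enc}}(q+1)$ glued to a suffix of $\mathcal{K}_{\textup{enc}}(q)$, with the split point moving from $b^{(q)}_r$ to $b^{(q)}_{r+1}$. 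Since $b^{(q)}_1 < \dots < b^{(q)}_g$ are the sorted distinct positions where $\mathcal{K}_{\textup{enc}}(q)$ and $\mathcal{K}_{\textup{enc}}(q+1)$ differ, the only coordinate that changes between the two encodings is position $b^{(q)}_{r+1}$: to the left of it both strings already agree with $\mathcal{K}_{\textup{enc}}(q+1)$, to the right of it both agree with $\mathcal{K}_{\textup{enc}}(q)$, and at position $b^{(q)}_{r+1}$ itself $\mathcal{G}_{\textup{enc}}(v)$ takes the $\mathcal{K}_{\textup{enc}}(q)$ value while $\mathcal{G}_{\textup{enc}}(v+1)$ takes the $\mathcal{K}_{\textup{enc}}(q+1)$ value — and these differ by definition of $b^{(q)}_{r+1}$. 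Hence the Hamming distance is exactly $1$.

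Next I would handle the boundary case $r = g-1$, i.e. $v = (q+1)g - 1$ and $v+1 = (q+1)g + 0$. Here $\mathcal{G}_{\textup{enc}}(v) = \pre_{b^{(q)}_{g-1}}(\mathcal{K}_{\textup{enc}}(q+1))\,\suf_{g - b^{(q)}_{g-1}}(\mathcal{K}_{\textup{enc}}(q))$, and $\mathcal{G}_{\textup{enc}}(v+1) = \pre_{b^{(q+1)}_{0}}(\mathcal{K}_{\textup{enc}}(q+2))\,\suf_{g - b^{(q+1)}_{0}}(\mathcal{K}_{\textup{enc}}(q+1))$. Since $b^{(q+1)}_0 = 0$, the second string is just $\mathcal{K}_{\textup{enc}}(q+1)$ in full. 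The first string agrees with $\mathcal{K}_{\textup{enc}}(q+1)$ on its first $b^{(q)}_{g-1}$ coordinates by construction; on the remaining coordinates it equals $\mathcal{K}_{\textup{enc}}(q)$. Among positions $> b^{(q)}_{g-1}$, the strings $\mathcal{K}_{\textup{enc}}(q)$ and $\mathcal{K}_{\textup{enc}}(q+1)$ differ only at $b^{(q)}_g$ (the largest of the $g$ differing positions), so again the two encodings differ in exactly the single coordinate $b^{(q)}_g$, giving Hamming distance $1$.

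I should also check the edge case $q = 0$ versus $v$ ranging only up to $m-1$ in the statement — but the message length of $\mathcal{G}$ is $\lg mg$, so $v \in [mg]$ and $v+1$ is still a valid argument; the only subtlety is ensuring $\mathcal{K}_{\textup{enc}}(q+1)$ and $\mathcal{K}_{\textup{enc}}(q+2)$ are defined, which holds as long as $v+1 < mg$, consistent with the $v \in [mg-1]$ hypothesis. I expect the main (mild) obstacle to be bookkeeping the indexing carefully: being precise that $\pre_{b^{(q)}_r}$ takes coordinates $1,\dots,b^{(q)}_r$, that positions strictly between consecutive $b$'s contribute no differences, and that at the single split coordinate the two codewords genuinely disagree (which is exactly why $b^{(q)}_{r+1}$ was put in the list of differing positions). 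A clean way to package both cases uniformly is to observe that in all cases $\mathcal{G}_{\textup{enc}}(v)$ and $\mathcal{G}_{\textup{enc}}(v+1)$ each have the form "prefix of some $\mathcal{K}$-codeword $A$ concatenated with suffix of the next $\mathcal{K}$-codeword $B$", with the same pair $(A,B) = (\mathcal{K}_{\textup{enc}}(q+1), \mathcal{K}_{\textup{enc}}(q))$ after noting $\mathcal{K}_{\textup{enc}}(q+1)$ is the trivial "prefix of $\mathcal{K}_{\textup{enc}}(q+1)$ + empty suffix" — then it reduces entirely to the claim that two such hybrid strings with adjacent split points (in the sorted list of the $g$ disagreement positions of $A$ and $B$) are at Hamming distance $1$, which is the generic argument above.
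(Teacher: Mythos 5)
Your proposal is correct and follows essentially the same route as the paper's proof: the same decomposition $v = qg + r$, the same case split on $r < g-1$ versus $r = g-1$, and the same identification of the single changed coordinate as $b^{(q)}_{r+1}$ (respectively $b^{(q)}_{g}$, after noting $\mathcal{G}_{\textup{enc}}(v+1) = \mathcal{K}_{\textup{enc}}(q+1)$ in the boundary case). Your treatment of the generic case is in fact slightly more detailed than the paper's, which simply asserts that the two encodings differ only in bit $b^{(q)}_{r+1}$.
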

\begin{proof}
    Let $\mathcal{C}$ have block length $d$ and let $g = 2(d + D(\mathcal{C}))$. Let $v = qg + r$ where $0 \le r < g$. From \cref{con:error-correcting-gray-code}, observe that if $r < g-1$, then $\mathcal{G}_{\textup{enc}}(v)$ and $\mathcal{G}_{\textup{enc}}(v + 1)$ are different only in bit $b^{(q)}_{r+1}$. Otherwise, if  $r = g-1$ then 
    \begin{align*}
        \mathcal{G}_{\textup{enc}}(v + 1) &= \pre_{b^{(q+1)}_{0}}\left(\mathcal{K}_{\textup{enc}}(q + 2)\right)\suf_{g-b^{(q + 1)}_{0}}\left(\mathcal{K}_{\textup{enc}}(q + 1)\right)\\
        &= \pre_{0}\left(\mathcal{K}_{\textup{enc}}(q + 2)\right)\suf_{g}\left(\mathcal{K}_{\textup{enc}}(q + 1)\right)\\
        &= \mathcal{K}_{\textup{enc}}(q + 1)\\
        &=  \pre_{b^{(q)}_{g}}\left(\mathcal{K}_{\textup{enc}}(q + 1)\right)\suf_{g-b^{(q)}_{g}}\left(\mathcal{K}_{\textup{enc}}(q)\right)
    \end{align*}
    Since $\mathcal{G}_{\textup{enc}}(v) = \pre_{b^{(q)}_{g-1}}\left(\mathcal{K}_{\textup{enc}}(q + 1)\right)\suf_{g-b^{(q)}_{g-1}}\left(\mathcal{K}_{\textup{enc}}(q)\right)$, this means that $\mathcal{G}_{\textup{enc}}(v)$ and $\mathcal{G}_{\textup{enc}}(v + 1)$ only differs in the bit $b^{(q)}_{g}$, proving the statement.
\end{proof}

\begin{lemma}\label{lem:injective-code}
    Let $\mathcal{C}$ be a code and let $\mathcal{G} = (\mathcal{G}_{\textup{enc}},\mathcal{G}_{\textup{dec}})$ be obtained using \cref{con:error-correcting-gray-code} on $\mathcal{C}$. Then $\mathcal{G}_{\textup{enc}}$ is injective.
\end{lemma}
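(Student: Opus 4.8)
The plan is to show that from $\mathcal{G}_{\textup{enc}}(v)$ alone one can recover $v = qg+r$ uniquely, which is exactly injectivity. The key structural fact from \cref{con:error-correcting-gray-code} is that $\mathcal{G}_{\textup{enc}}(v)$ is a prefix of $\mathcal{K}_{\textup{enc}}(q+1)$ of length $b^{(q)}_r$ concatenated with a suffix of $\mathcal{K}_{\textup{enc}}(q)$ of length $g-b^{(q)}_r$, and crucially that the two constituent codewords $\mathcal{K}_{\textup{enc}}(q)$ and $\mathcal{K}_{\textup{enc}}(q+1)$ agree on every bit position \emph{outside} the set $\{b^{(q)}_1,\dots,b^{(q)}_g\}$. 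Therefore, on those ``stable'' coordinates, $\mathcal{G}_{\textup{enc}}(v)$ coincides with $\mathcal{K}_{\textup{enc}}(q)$ (equivalently $\mathcal{K}_{\textup{enc}}(q+1)$) regardless of $r$.

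First I would argue that $q$ is determined. Recall $\mathcal{K}$ is built by repeating $\mathcal{C}_{\textup{enc}}$ and $\mathcal{L}_{\textup{enc}}$ four times in the pattern $\mathcal{C}\mathcal{L}\mathcal{C}\mathcal{L}$, and $\mathcal{L}$ in turn contains a parity-padding block of $D(\mathcal{C})$ bits. The differing positions $b^{(q)}_1,\dots,b^{(q)}_g$ between $\mathcal{K}_{\textup{enc}}(q)$ and $\mathcal{K}_{\textup{enc}}(q+1)$ number exactly $g = 2(d+D(\mathcal{C}))$, which is strictly fewer than the full block length $4d+2D(\mathcal{C})$ as soon as $d \ge 1$; hence at least $2d$ coordinates are stable. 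Among the four component blocks $c_1 l_1 c_2 l_2$ of $\mathcal{G}_{\textup{enc}}(v)$, at least two of them are read off unchanged from $\mathcal{K}_{\textup{enc}}(q)$ (if the ``boundary'' index $b^{(q)}_r$ falls inside one block, the blocks entirely before it come from $\mathcal{K}_{\textup{enc}}(q+1)$ and those entirely after come from $\mathcal{K}_{\textup{enc}}(q)$). More carefully: $\mathcal{G}_{\textup{enc}}(v)$ restricted to any single component block is either a prefix of that block from $\mathcal{K}_{\textup{enc}}(q+1)$ and the complementary suffix from $\mathcal{K}_{\textup{enc}}(q)$, or entirely from one of the two. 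Since both $\mathcal{K}_{\textup{enc}}(q)$ and $\mathcal{K}_{\textup{enc}}(q+1)$ decode (via $\mathcal{C}_{\textup{dec}}$ or $\mathcal{L}_{\textup{dec}}$, with zero noise) to $q$ and $q+1$ respectively, running the majority-vote decoder $\mathcal{K}_{\textup{dec}}$ on $\mathcal{G}_{\textup{enc}}(v)$ — or more simply, observing that the stable coordinates alone pin down which of $\{q, q+1\}$ the encoding is ``between'' — recovers $q$. Concretely I would show $\mathcal{K}_{\textup{dec}}(\mathcal{G}_{\textup{enc}}(v)) \in \{q, q+1\}$ and then use the monotone/unary structure of $h$ to disambiguate, but for injectivity it suffices to note that the pair $(q, q+1)$, hence $q = \lfloor v/g\rfloor$ up to the boundary case, is recoverable.

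Second, given $q$, I recover $r$. Knowing $q$ fixes $\mathcal{K}_{\textup{enc}}(q)$, $\mathcal{K}_{\textup{enc}}(q+1)$, and the sorted differing-index list $b^{(q)}_0 < b^{(q)}_1 < \dots < b^{(q)}_g$. On each differing coordinate $b^{(q)}_i$, the bit of $\mathcal{G}_{\textup{enc}}(v)$ equals $\mathcal{K}_{\textup{enc}}(q+1)_{b^{(q)}_i}$ precisely when $b^{(q)}_i \le b^{(q)}_r$, i.e. when $i \le r$, and equals $\mathcal{K}_{\textup{enc}}(q)_{b^{(q)}_i}$ otherwise — this is exactly the statement that the indicator vector $h_v$ from the construction equals $1^r 0^{g-r}$. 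Since these two codewords genuinely differ at every $b^{(q)}_i$, each such coordinate unambiguously reveals whether $i \le r$, so $r$ is read off as the number of leading $1$s, giving $v = qg + r$ uniquely. The main obstacle is the bookkeeping around the boundary value $r = g-1$ and the edge where $q+1$ is not a valid index of $\mathcal{K}$ — one must check $v$ ranges over $[mg]$ while $\mathcal{K}$ is defined on $[m]$, so the last block is handled by the convention in \cref{con:error-correcting-gray-code} — but this is exactly the case already worked out in the proof of \cref{lem:one-ham-dist}, so I would reuse that computation rather than redo it.
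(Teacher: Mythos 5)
Your overall plan --- recover $q=\lfloor v/g\rfloor$ from the unsplit component codewords, then recover $r$ from the unary pattern on the differing positions $b^{(q)}_1,\dots,b^{(q)}_g$ --- is the same strategy as the paper's, and your second step (that for a \emph{fixed} $q$ the vector $h_v=1^r0^{g-r}$ determines $r$, so $r\mapsto\mathcal{G}_{\textup{enc}}(qg+r)$ is injective) is correct and matches the paper's closing argument via Hamming distances to $\mathcal{G}_{\textup{enc}}(gq)$.

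The gap is in the first step. From the fact that at most one of the four blocks $c_1,l_1,c_2,l_2$ is split, and that the unsplit blocks decode (noiselessly) to values in $\{q,q+1\}$, you can conclude that two colliding values $v,v'$ satisfy $\bigl|\lfloor v/g\rfloor-\lfloor v'/g\rfloor\bigr|\le 1$ --- but you cannot conclude that $q$ itself is determined. Your phrase ``the stable coordinates alone pin down which of $\{q,q+1\}$ the encoding is between'' is circular: the set of stable coordinates depends on $q$, so knowing the word agrees with both $\mathcal{K}_{\textup{enc}}(q)$ and $\mathcal{K}_{\textup{enc}}(q+1)$ there does not exclude that the \emph{same} word is also a valid prefix of $\mathcal{K}_{\textup{enc}}(q+2)$ glued to a suffix of $\mathcal{K}_{\textup{enc}}(q+1)$, i.e.\ a collision $\mathcal{G}_{\textup{enc}}(qg+r)=\mathcal{G}_{\textup{enc}}((q+1)g+r')$. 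Ruling this out is the heart of the paper's proof: it shows that a codeword unsplit in both decompositions would have to lie in the prefix $p$ of one decomposition and the suffix $s'$ of the other, forcing the suffix $s$ of the first decomposition to contain no complete codeword and ultimately to be empty, a contradiction with the construction. Your deferral of ``the boundary case'' to the computation in the proof of the sensitivity lemma does not cover this: that lemma only evaluates $H(\mathcal{G}_{\textup{enc}}(v),\mathcal{G}_{\textup{enc}}(v+1))$ for consecutive $v$ and says nothing about two distinct prefix/suffix decompositions of one word. (A minor additional slip: it is not true that at least two blocks always come unchanged from $\mathcal{K}_{\textup{enc}}(q)$; when $r$ is large most blocks come from $\mathcal{K}_{\textup{enc}}(q+1)$. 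What is true is that at least three of the four blocks are unsplit.)
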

\begin{proof}
    Let $\mathcal{G}$ have block length $d$ and message length $\lg m$. Let $v,v'\in [m]$ such that $\mathcal{G}_{\textup{enc}}(v) = \mathcal{G}_{\textup{enc}}(v') = w$. Let $g = 2 (d + D(\mathcal{C}))$. Assume without loss of generality that $v \le v'$. From the definition of $\mathcal{G}$, we can split $w$ into $4$ codewords $w = c_1l_1c_2l_2$, such that $c_1,c_2 \in \{0,1\}^d$ and $l_1,l_2 \in \{0,1\}^{d + D(\mathcal{C})}$. From \eqref{eq:error-correcting-gray-code-enc}, observe that $w = ps = p's'$ can be seen as composed of a prefix $p$ of $\mathcal{K}_{\textup{enc}}\left(\left\lfloor v/g\right\rfloor + 1\right)$ and a suffix $s$ of $\mathcal{K}_{\textup{enc}}\left(\left\lfloor v/g\right\rfloor\right)$. It can also be seen as a prefix $p'$ of $\mathcal{K}_{\textup{enc}}\left(\left\lfloor v'/g\right\rfloor + 1\right)$ and a suffix $s'$ of $\mathcal{K}_{\textup{enc}}\left(\left\lfloor v'/g\right\rfloor\right)$. This means that some codewords of $c_1,l_1,c_2,l_2$ are to the left of the split, at most one of them is split by the prefix and the suffix, and some of them are to the right of the split. There are therefore at least one codeword $x$ among $c_1,l_1,c_2,l_2$, for which $x$ is not split by neither $p$ and $s$ nor $p'$ and $s'$. Let $\mathcal{X}_{\textup{enc}}\in \{\mathcal{C}_{\textup{enc}},\mathcal{L}_{\textup{enc}}\}$ be the encoder used to encode $x$. It then holds that $x\in \left\{\mathcal{X}_{\textup{enc}}\left(\left\lfloor v/g\right\rfloor\right),\mathcal{X}_{\textup{enc}}\left(\left\lfloor v/g\right\rfloor + 1\right)\right\}$ and  $x\in \left\{\mathcal{X}_{\textup{enc}}\left(\left\lfloor v'/g\right\rfloor\right),\mathcal{X}_{\textup{enc}}\left(\left\lfloor v'/g\right\rfloor + 1\right)\right\}$. Since both $\mathcal{C}$ and $\mathcal{L}$ are codes, and $\mathcal{C}_{\textup{enc}}$ and $\mathcal{L}_{\textup{enc}}$ therefore injective, we can conclude that $\left|\left\lfloor v/g\right\rfloor - \left\lfloor v'/g\right\rfloor\right| \le 1$.
    
    Now, assume for the purpose of contradiction that $\left\lfloor v/g\right\rfloor + 1 = \left\lfloor v'/g\right\rfloor$. This means the codeword they share which is not split, $x$, must be fully contained in $p$ and in $s'$, implying that $s$ is fully contained in $s'$. See \cref{fig:prefixsuffix-sketch} for a sketch of this. We now consider any codeword $y$ fully contained in $s$. As $s$ is contained in $s'$ so is $y$. Since $y$ is fully contained in $s'$ then $y$ must be the result of encoding $\left\lfloor v'/g\right\rfloor$. This would however imply that $y$ is not in $s$ by injectivity of $\mathcal{C}_{\textup{enc}}$. Therefore such a $y$ cannot exist. Furthermore, no codeword $z$ can be split by $p$ and $s$. To see this, observe that $z$ would also be fully contained in $s'$, a contradiction since this implies $z$ is fully contained in $p$. We can therefore conclude that $s$ is empty. This is a contradiction since by construction, $s$ is non-empty. We conclude that $\left\lfloor v/g\right\rfloor = \left\lfloor v'/g\right\rfloor$.
    
    Next, for any $q\in [m/g-1]$, $H(\mathcal{G}_{\textup{enc}}(gq),\mathcal{G}_{\textup{enc}}(g(q + 1))) = g$ by \cref{lem:const-dist}, and at the same time for any $u\in [m-1]$, $H(\mathcal{G}_{\textup{enc}}(u),\mathcal{G}_{\textup{enc}}(v + 1)) = 1$. This means that for every $r\in [g]$, $H(\mathcal{G}_{\textup{enc}}(gq),\mathcal{G}_{\textup{enc}}(gq + r)) = r$ and so all encodings must be different. We conclude that $v = v'$, showing that $\mathcal{G}_{\textup{enc}}$ is injective.
\end{proof}
\begin{figure}
    \centering
    \ctikzfig{prefixsuffix}
    \caption{Sketch of the how $\mathcal{G}$ would have to look if the code was not injective.}
    \label{fig:prefixsuffix-sketch}
\end{figure}

\begin{lemma}\label{lem:g-gray-code}
    Let $\mathcal{C}$ be a code and let $\mathcal{G}$ be obtained using \cref{con:error-correcting-gray-code} on $\mathcal{C}$. Then $\mathcal{G}$ is a Gray code.
\end{lemma}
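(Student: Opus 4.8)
The plan is to deduce the statement from \cref{lem:one-ham-dist}, \cref{lem:injective-code}, and one remaining check. Being a Gray code means two things: (a) $\mathcal{G}$ is a code, i.e.\ $\mathcal{G}_{\textup{dec}}(\mathcal{G}_{\textup{enc}}(v)) = v$ for every encoded value $v$; and (b) consecutive encodings are at Hamming distance exactly $1$. Property (b) is exactly \cref{lem:one-ham-dist}, so the whole task reduces to verifying the round-trip identity (a); \cref{lem:injective-code} (injectivity of $\mathcal{G}_{\textup{enc}}$) will be used only to settle a possible tie in the final $\argmin$ of the decoder.

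Fix an encoded value $v$ and write $v = qg + r$ with $0 \le r < g$, and put $c = \mathcal{G}_{\textup{enc}}(v)$. By \eqref{eq:error-correcting-gray-code-enc}, $c$ coincides with $\mathcal{K}_{\textup{enc}}(q+1)$ on its first $b^{(q)}_r$ coordinates and with $\mathcal{K}_{\textup{enc}}(q)$ on the remaining coordinates. The first step is to show $t := \mathcal{K}_{\textup{dec}}(c) \in \{q, q+1\}$. Since $\mathcal{K}_{\textup{enc}}(\cdot)$ is a concatenation of four component codewords, each of the form $\mathcal{C}_{\textup{enc}}(\cdot)$ or $\mathcal{L}_{\textup{enc}}(\cdot)$, the split coordinate $b^{(q)}_r$ falls inside at most one of the four component blocks. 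Every block lying entirely within the first $b^{(q)}_r$ coordinates equals the corresponding component of $\mathcal{K}_{\textup{enc}}(q+1)$, hence is decoded (by $\mathcal{C}_{\textup{dec}}$ or $\mathcal{L}_{\textup{dec}}$) to $q+1$; every block lying entirely after coordinate $b^{(q)}_r$ is likewise decoded to $q$. Thus at least three of the four values computed inside $\mathcal{K}_{\textup{dec}}(c)$ lie in $\{q, q+1\}$, so regardless of how the majority vote (and its tie-break, including a two-way tie) resolves, $t \in \{q, q+1\}$.

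The second step identifies $h_q$. Coordinate $b^{(q)}_i$ of $c$ equals the bit of $\mathcal{K}_{\textup{enc}}(q+1)$ precisely when $b^{(q)}_i \le b^{(q)}_r$, and since $b^{(q)}_1 < \dots < b^{(q)}_g$ this is equivalent to $i \le r$; hence $h_q = 1^r 0^{g-r} = \mathcal{U}_{\textup{enc}}(r)$ and $\mathcal{U}_{\textup{dec}}(h_q) = r$. Combining with $t \in \{q, q+1\}$: if $t = q$ then the candidate $v_1 = gt + \mathcal{U}_{\textup{dec}}(h_t) = gq + r = v$; if $t = q+1$ then $v_0 = g(t-1) + \mathcal{U}_{\textup{dec}}(h_{t-1}) = gq + r = v$. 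In either case $v \in \{v_0, v_1\}$. Since $H(c, \mathcal{G}_{\textup{enc}}(v)) = 0$, the quantity $\mathcal{G}_{\textup{dec}}(c) = \argmin_{v' \in \{v_0, v_1\}} H(c, \mathcal{G}_{\textup{enc}}(v'))$ attains its minimum value $0$; and if both candidates attain it, then $\mathcal{G}_{\textup{enc}}(v_0) = \mathcal{G}_{\textup{enc}}(v_1)$, so $v_0 = v_1 = v$ by \cref{lem:injective-code}. Therefore $\mathcal{G}_{\textup{dec}}(\mathcal{G}_{\textup{enc}}(v)) = v$, which establishes (a), and together with \cref{lem:one-ham-dist} completes the proof that $\mathcal{G}$ is a Gray code.

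I expect the only real (if mild) obstacle to be the two structural claims in the middle --- that the majority vote inside $\mathcal{K}_{\textup{dec}}$ necessarily lands in $\{q, q+1\}$ (including the case of a two-way tie), and that $h_q$ is literally the unary encoding of $r$ --- both of which just require carefully placing the split coordinate $b^{(q)}_r$ relative to the component-block boundaries of $\mathcal{K}_{\textup{enc}}$ and relative to the sorted list of differing positions. Everything else, including the trivial boundary indices (e.g.\ $q = 0$, where $v_0$ refers to an out-of-range index but $v = v_1$ anyway, and $t$ at the top of $\mathcal{K}$'s message range), is routine bookkeeping.
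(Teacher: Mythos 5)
Your proposal is correct and follows essentially the same route as the paper's proof: show via the pigeonhole/majority argument that $\mathcal{K}_{\textup{dec}}$ returns a value in $\{q,q+1\}$, observe that the differing positions form a unary encoding of $r$ so that $v\in\{v_0,v_1\}$, and use injectivity (\cref{lem:injective-code}) together with the $\argmin$ to conclude $\mathcal{G}_{\textup{dec}}(\mathcal{G}_{\textup{enc}}(v))=v$, finishing with \cref{lem:one-ham-dist}. Your handling of the possible tie in the $\argmin$ (noting that a tie forces $v_0=v_1=v$) is in fact slightly more careful than the paper's phrasing.
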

\begin{proof}
    Let $\mathcal{G}$ have block length $d$ and message length $\lg m$ and let $g = 2 (d + D(\mathcal{C}))$. To show that $\mathcal{G}$ indeed is a Gray code, we need to show that \cref{con:error-correcting-gray-code} is a code, or in other words that we decode it correctly. Let $v\in [m]$ and let $\mathcal{G}_{\textup{enc}}(v) = c_1l_1c_2l_2$ such that $c_1,c_2 \in \{0,1\}^d$ and $l_1,l_2 \in \{0,1\}^{d + D(\mathcal{C})}$. Let $g = 2 (d + D(\mathcal{C}))$. By construction, we have $s\not\in \left\{\floor*{v/g},\floor*{v/g + 1}\right\}$ for at most $1$ codeword $s\in\{\mathcal{C}_{\textup{dec}}(c_1),\mathcal{L}_{\textup{dec}}(l_1),\mathcal{C}_{\textup{dec}}(c_2),\mathcal{L}_{\textup{dec}}(l_2)\}$. By the pigeonhole principle, at least one of these options must therefore occur twice. This means that in the decoding, the most frequent element $t$ of the multiset $\{\mathcal{C}_{\textup{dec}}(c_1),\mathcal{L}_{\textup{dec}}(l_1),\mathcal{C}_{\textup{dec}}(c_2),\mathcal{L}_{\textup{dec}}(l_2)\}$ has the property that $t \in \left\{\floor*{v/g + 1}, \floor*{v/g}\right\}$. From the construction of the decoder, both these cases are considered.
    
    Next, we observe that using the prefix of $\mathcal{K}_{\textup{enc}}\left(\floor*{v/g} + 1\right)$ and the suffix of $\mathcal{K}_{\textup{enc}}\left(\floor*{v/g}\right)$ exactly corresponds to a unary encoding on the changing bits in \eqref{eq:error-correcting-gray-code-enc}, implying that decoding $h_{\floor{v/g}}$ with $\mathcal{U}_{\textup{dec}}$ determines the number of bits belonging to the prefix and the number belonging to the suffix, where $h_{t}$ is defined as in \cref{con:error-correcting-gray-code}. This means that $v\in \{v_0,v_1\}$ for $v_0 = g(t-1) + \mathcal{U}_{\textup{dec}}(h_{t-1})$ and $v_1 = gt + \mathcal{U}_{\textup{dec}}(h_t)$. Since $\mathcal{G}_{\textup{enc}}$ is injective by \cref{lem:injective-code}, $\mathcal{G}_{\textup{enc}}(v)\in \{\mathcal{G}_{\textup{enc}}(v_0), \mathcal{G}_{\textup{enc}}(v_1)\}$ and $\mathcal{G}_{\textup{enc}}(v_0)\neq \mathcal{G}_{\textup{enc}}(v_1)$. This means exactly one of $H(\mathcal{G}_{\textup{enc}}(v),\mathcal{G}_{\textup{enc}}(v_0))$ and $H(\mathcal{G}_{\textup{enc}}(v),\mathcal{G}_{\textup{enc}}(v_1))$ is equal to zero. As the decoder minimises the Hamming distance, this implies that $\mathcal{G}_{\textup{dec}}(\mathcal{G}_{\textup{enc}}(v)) = v$, showing that $\mathcal{G}$ is a code. It then follows from  \cref{lem:one-ham-dist} that $\mathcal{G}$ is a Gray code.
\end{proof}

As we now have established that \cref{con:error-correcting-gray-code} is indeed a code and that it has a sensitivity of $1$ we can now start looking at the error handling properties of the code. Since our code essentially is an unary code built on top of a black box error correction code we will start by looking at the probability that adding noise results in another decoding being obtained. We are able to bound this based on the Hamming distance between the two bitstrings.

\begin{lemma}\label{lem:calc-prob}
    Let $c_1,c_2\in \{0,1\}^d$ be bitstrings, and let $p\in [0,1/2)$ and let $b_p \sim \Bern(p)^d$. Then 
    \begin{equation*}
        \Pr[H(c_1\oplus b_p,c_2) \le H(c_1\oplus b_p,c_1)] \le \exp(- \frac{\left(1 - 2p\right)^{2}}{4 p + 2}H(c_1,c_2)).
    \end{equation*}
\end{lemma}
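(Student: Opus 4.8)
The plan is to reduce the event to a sum of independent $\pm 1$ random variables and then apply a Chernoff/Hoeffding-type bound tuned to the Bernoulli parameter $p$. First I would observe that the two Hamming distances $H(c_1\oplus b_p,c_2)$ and $H(c_1\oplus b_p,c_1)$ only disagree on the set $S$ of coordinates where $c_1$ and $c_2$ differ, which has size $h := H(c_1,c_2)$; outside $S$ the two distances receive identical contributions. So the event $\{H(c_1\oplus b_p,c_2)\le H(c_1\oplus b_p,c_1)\}$ is equivalent to a statement about the $h$ coordinates in $S$ alone. On each such coordinate $i$, flipping bit $i$ (i.e.\ $(b_p)_i=1$) moves the noisy string one step \emph{closer} to $c_2$ and one step \emph{farther} from $c_1$, while not flipping it does the opposite. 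Writing $X_i = 1$ if $(b_p)_i = 0$ and $X_i = -1$ if $(b_p)_i = 1$, the event becomes $\sum_{i\in S} X_i \le 0$, where the $X_i$ are i.i.d.\ with $\Pr[X_i=-1]=p$, $\Pr[X_i=1]=1-p$, and $\mathbb{E}[X_i] = 1-2p > 0$.

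Next I would bound $\Pr[\sum_{i\in S} X_i \le 0]$ by a standard exponential moment (Bernstein/Hoeffding) argument. Using $\Pr[\sum X_i \le 0] = \Pr[e^{-\lambda \sum X_i} \ge 1] \le \mathbb{E}[e^{-\lambda X_i}]^{h}$ for any $\lambda > 0$, I need a good upper bound on the MGF $\mathbb{E}[e^{-\lambda X_i}] = p e^{\lambda} + (1-p) e^{-\lambda}$. The cleanest route to the stated constant $\tfrac{(1-2p)^2}{4p+2}$ is a Bernstein-style bound: shifting to the centered variable $Y_i = X_i - (1-2p)$, which satisfies $|Y_i|\le 2$ and $\mathrm{Var}(Y_i) = 4p(1-p)$, one gets $\mathbb{E}[e^{-\lambda Y_i}] \le \exp\!\big(\tfrac{\lambda^2 \mathrm{Var}(Y_i)/2}{1 - 2\lambda/3}\big)$ (using $\mathbb{E}[e^{-\lambda Y}] \le \exp(\tfrac{\sigma^2(e^{2\lambda}-1-2\lambda)}{4})$ and then the usual $e^z - 1 - z \le \tfrac{z^2/2}{1-z/3}$ type inequality), so that $\Pr[\sum X_i \le 0] = \Pr[\sum Y_i \le -h(1-2p)] \le \exp\!\big(-\tfrac{h(1-2p)^2/2}{4p(1-p) + \tfrac{2}{3}(1-2p)}\big)$. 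One then checks the denominator is at most $4p + 2$ (indeed $4p(1-p) + \tfrac{2}{3}(1-2p) \le 4p + \tfrac{2}{3} \le 2$ for $p\le 1/2$ might be loose in the wrong direction, so more care is needed here), yielding the claimed inequality; alternatively, pick the explicit value $\lambda = \tfrac{1-2p}{2p+1}$ in the raw MGF bound $p e^\lambda + (1-p)e^{-\lambda}$ and verify by elementary calculus (e.g.\ via $e^{\lambda} \le 1 + \lambda + \lambda^2$ valid for $\lambda \le 1$, noting $\lambda < 1$ here) that $\log(pe^\lambda + (1-p)e^{-\lambda}) \le -\tfrac{(1-2p)^2}{4p+2}$.

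The main obstacle is precisely this last calculation: producing the exact constant $\tfrac{(1-2p)^2}{4p+2}$ rather than some comparable but uglier constant. The difficulty is that the optimal $\lambda$ in $p e^\lambda + (1-p) e^{-\lambda}$ is $\tfrac12\log\tfrac{1-p}{p}$, which does not give a clean closed form, so one must commit to a suboptimal but analytically tractable $\lambda$ and a suitable Taylor-type bound on $e^{\pm\lambda}$. I expect the right choice is $\lambda = \tfrac{1-2p}{2p+1}$ (which is the value one obtains by minimizing the quadratic upper bound $p(1+\lambda+\lambda^2) + (1-p)(1-\lambda+\tfrac{\lambda^2}{2})$, though symmetrizing the quadratic bounds needs a little thought since $e^{-\lambda} \le 1 - \lambda + \tfrac{\lambda^2}{2}$ only for $\lambda \ge 0$); substituting and simplifying should collapse to $1 - \tfrac{(1-2p)^2}{2p+1} + (\text{lower order}) \le \exp(-\tfrac{(1-2p)^2}{4p+2})$ after using $1 - x \le e^{-x}$. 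Everything else — the reduction to coordinates in $S$, the i.i.d.\ structure, the union/independence step in the MGF — is routine.
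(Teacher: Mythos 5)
Your proposal matches the paper's proof: both reduce the event to the $h=H(c_1,c_2)$ coordinates where $c_1$ and $c_2$ differ, observe that it occurs iff at least $h/2$ of those bits are flipped, and apply a Chernoff-type bound to the binomial count $Y$ with $\mathbb{E}[Y]=ph$. The ``main obstacle'' you identify dissolves once you note that the target constant is exactly the standard multiplicative Chernoff form $\Pr[Y\ge(1+\delta)\mu]\le\exp\left(-\frac{\delta^2\mu}{2+\delta}\right)$ evaluated at $\mu=ph$ and $\delta=\frac{1-2p}{2p}$, which is what the paper invokes directly.
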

\begin{proof}
    Let $k= H(c_1,c_2)$. Note that if $H(c_1\oplus b_p,c_1) \ge H(c_1\oplus b_p,c_2)$, then at least $k/2$ of the $k$ bits where $c_1$ and $c_2$ are different must have been flipped. Letting $Y$ be the random variable denoting the number of the $k$ bits that have been flipped, we get
    \begin{equation*}
        \Pr[H(c_1\oplus b_p,c_2) \le H(c_1\oplus b_p,c_1)] \le \Pr[Y\ge k/2].
    \end{equation*}
    Due to independence, we can use a Chernoff bound. With $E[Y] = p k$ we get:
    \begin{equation*}
        \Pr[H(c_1\oplus b_p,c_2) \le H(c_1\oplus b_p,c_1)] 
        \le \Pr[Y\ge k/2] 
        = \Pr[Y\ge \frac{1}{2p} p k]
        \le \exp(- \frac{\left(1-2 p\right)^{2}}{4 p + 2}k),
    \end{equation*}
    completing the proof.
\end{proof}

The relation between any two values $v,v'\in [m]$ encoded with \cref{con:error-correcting-gray-code} depends in large part on $|v-v'|$. Recall that if $|v-v'|$ is large, then the component codes are going to be different. However if $|v-v'|$ is small then the decoding will be more like a unary code decoding. We show this formally with the next two lemmas. 

\begin{lemma}\label{lem:small-dist-close}
    Let $\mathcal{C} = (\mathcal{C}_{\textup{dec}},\mathcal{C}_{\textup{enc}})$ be a code with block length $d$ and message length $\lg m$, and let $\mathcal{G}=(\mathcal{G}_{\textup{dec}},\mathcal{G}_{\textup{enc}})$ be obtained using \cref{con:error-correcting-gray-code} on $\mathcal{C}$. Then for all $v,v'\in [m]$,  if $|v-v'| < D(\mathcal{C})$ then $H(\mathcal{G}_{\textup{enc}}(v),\mathcal{G}_{\textup{enc}}(v')) = |v-v'|$.
\end{lemma}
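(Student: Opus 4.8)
The plan is to follow the path $v, v+1, \ldots, v'$ and track which single bit flips at each step --- this is exactly the information extracted in the proof of \cref{lem:one-ham-dist} --- and then argue that along such a short path no coordinate is ever flipped twice, so the $|v-v'|$ flips accumulate without cancellation and $H(\mathcal{G}_{\textup{enc}}(v),\mathcal{G}_{\textup{enc}}(v')) = |v-v'|$. Assume without loss of generality $v < v'$ and set $\delta := v' - v$; if $\delta = 0$ the claim is trivial, so assume $1 \le \delta < D(\mathcal{C})$, and write $g := 2(d + D(\mathcal{C}))$. Since $\delta < D(\mathcal{C}) < g$, the integers $v, v+1, \ldots, v'$ lie in at most the two consecutive blocks $\{qg,\ldots,(q+1)g-1\}$ and $\{(q+1)g,\ldots,(q+2)g-1\}$, where $q := \lfloor v/g\rfloor$. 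The proof of \cref{lem:one-ham-dist} shows that $\mathcal{G}_{\textup{enc}}(qg+r)$ and $\mathcal{G}_{\textup{enc}}(qg+r+1)$ differ in exactly the coordinate $b^{(q)}_{r+1}$ (this covers $r=g-1$, where the coordinate is $b^{(q)}_g$ and $qg+r+1=(q+1)g$). Hence the coordinates flipped along the path are either a segment $b^{(q)}_{r_0+1},\ldots,b^{(q)}_{r_1}$ with $v = qg+r_0$, $v' = qg+r_1$ (if $v'$ is in block $q$), or a final segment $b^{(q)}_{r_0+1},\ldots,b^{(q)}_g$ followed by an initial segment $b^{(q+1)}_1,\ldots,b^{(q+1)}_{r'}$ with $v' = (q+1)g+r'$ (if $v'$ is in block $q+1$). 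Since $H(\mathcal{G}_{\textup{enc}}(v),\mathcal{G}_{\textup{enc}}(v'))$ is the number of coordinates flipped an odd number of times, it suffices to show these coordinates are pairwise distinct.

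Within a single block this is immediate from $b^{(q)}_1 < b^{(q)}_2 < \cdots < b^{(q)}_g$, so the one thing to prove is, in the boundary-crossing case,
\[
    \{b^{(q)}_{r_0+1},\ldots,b^{(q)}_g\}\cap\{b^{(q+1)}_1,\ldots,b^{(q+1)}_{r'}\} = \emptyset .
\]
This is where the hypothesis enters: from $\delta = (g - r_0) + r'$ with $g - r_0 \ge 1$ and $r' \ge 0$ we get $g - r_0 < D(\mathcal{C})$ and $r' < D(\mathcal{C})$. I would then use the explicit shape of $\mathcal{K}$: by \cref{con:const-dist-code} and \cref{con:complement-code}, $\mathcal{K}_{\textup{enc}}(w) = \mathcal{C}_{\textup{enc}}(w)\mathcal{L}_{\textup{enc}}(w)\mathcal{C}_{\textup{enc}}(w)\mathcal{L}_{\textup{enc}}(w)$ has block length $4d+2D(\mathcal{C})$, begins with the length-$d$ string $\mathcal{C}_{\textup{enc}}(w)$, and ends with the length-$D(\mathcal{C})$ padding $0^{D(\mathcal{C})}$ (for $w$ even) or $1^{D(\mathcal{C})}$ (for $w$ odd). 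Since $q$ and $q+1$ have opposite parity, $\mathcal{K}_{\textup{enc}}(q)$ and $\mathcal{K}_{\textup{enc}}(q+1)$ differ in every one of the last $D(\mathcal{C})$ coordinates; these being the largest coordinates, the $D(\mathcal{C})$ largest of $b^{(q)}_1,\ldots,b^{(q)}_g$ are precisely those last $D(\mathcal{C})$ coordinates, so from $g - r_0 < D(\mathcal{C})$ the final segment $\{b^{(q)}_{r_0+1},\ldots,b^{(q)}_g\}$ lies entirely among the last $D(\mathcal{C})$ coordinates. Symmetrically, $H(\mathcal{C}_{\textup{enc}}(q+1),\mathcal{C}_{\textup{enc}}(q+2)) \ge D(\mathcal{C})$ since $\mathcal{C}_{\textup{enc}}$ is injective and $D(\mathcal{C})$ is the minimum distance of $\mathcal{C}$, so $\mathcal{K}_{\textup{enc}}(q+1)$ and $\mathcal{K}_{\textup{enc}}(q+2)$ differ in at least $D(\mathcal{C})$ of their first $d$ coordinates; hence the $D(\mathcal{C})$ smallest of $b^{(q+1)}_1,\ldots,b^{(q+1)}_g$ all lie among the first $d$ coordinates, and from $r' < D(\mathcal{C})$ the initial segment $\{b^{(q+1)}_1,\ldots,b^{(q+1)}_{r'}\}$ lies entirely among the first $d$ coordinates. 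As the first $d$ and the last $D(\mathcal{C})$ coordinates of a length-$(4d+2D(\mathcal{C}))$ block are disjoint, so are the two segments, proving the displayed disjointness.

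Assembling the cases: in both the within-block and boundary-crossing situations, the $\delta$ coordinates flipped along $v,v+1,\ldots,v'$ are pairwise distinct, hence each is flipped exactly once, so $H(\mathcal{G}_{\textup{enc}}(v),\mathcal{G}_{\textup{enc}}(v')) = \delta = |v-v'|$. The only real obstacle is the disjointness step: one must observe that a displacement below $D(\mathcal{C})$ pins the prefix/suffix split point of $\mathcal{G}_{\textup{enc}}(v)$ into the trailing padding region of a $\mathcal{K}$-codeword and that of $\mathcal{G}_{\textup{enc}}(v')$ into the leading $\mathcal{C}$-block, and these two regions are far apart inside the codeword. Everything else is bookkeeping about consecutive $\mathcal{G}$-encodings that has already been carried out in \cref{lem:one-ham-dist} and \cref{lem:injective-code}.
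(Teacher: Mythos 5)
Your proof is correct and follows essentially the same strategy as the paper's: walk from $v$ to $v'$ one step at a time, note that each step flips a single coordinate (as established for \cref{lem:one-ham-dist}), and show that along a path of length less than $D(\mathcal{C})$ no coordinate is flipped twice, the only nontrivial case being a block-boundary crossing. The one point where you diverge is how you establish disjointness of the tail of $b^{(q)}$ and the head of $b^{(q+1)}$: the paper observes that exactly $g/2$ of the $g$ differing positions lie in each half of the $\mathcal{K}$-codeword, so the last $D(\mathcal{C})$ indices of one transition lie in the second half while the first $D(\mathcal{C})$ of the next lie in the first half, whereas you pin the tail into the final $D(\mathcal{C})$ padding positions (all of which flip, by the parity change) and the head into the first $d$ positions (where at least $D(\mathcal{C})$ flips occur, by minimum distance). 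Both localizations are valid and give the same conclusion.
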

\begin{proof}
    Let $g=2(d+D(\mathcal{C}))$ and fix $v$. Assume without loss of generality that $v < v'$. Let $b^{(v)}_i$ be defined as in \cref{con:error-correcting-gray-code}. To show the statement it suffices to show that all indices $b^{(v)}_0,\ldots,b^{(v)}_g$ are unique, and that all indices $b^{(v)}_{g-D(\mathcal{C}) + 1},\ldots, b^{(v)}_{g}, b^{(v+1)}_{0},\ldots,b^{(v+1)}_{D(\mathcal{C})}$ are unique, since one of these is a superset of the bit indices that changes one by one when transforming $\mathcal{G}_{\textup{enc}}(v)$ to $\mathcal{G}_{\textup{enc}}(w)$ through the series $\mathcal{G}_{\textup{enc}}(v),\mathcal{G}_{\textup{enc}}(v+1),\ldots,\mathcal{G}_{\textup{enc}}(w)$ by the assumption $|v-v'| < D(\mathcal{C})$.
    
    Observe that, $b^{(v)}_0,\ldots,b^{(v)}_g$ are all unique by definition. Furthermore, by \cref{lem:const-dist} the first $g/2$ bit changes are found in the first two component codes of $\mathcal{G}$, while the last $g/2$ are found in the last two component codes. In other words, $b^{(v)}_{g-D(\mathcal{C})+1},\ldots, b^{(v)}_{g} > 2d + D(\mathcal{C})$, while $b^{(v+1)}_{0},\ldots,b^{(v+1)}_{D(\mathcal{C})-1} \le 2d+D(\mathcal{C})$. This means that there can be no duplicates between $b^{(v)}_{g-D(\mathcal{C})+1},\ldots, b^{(v)}_{g}$ and $b^{(v+1)}_{0},\ldots,b^{(v+1)}_{D(\mathcal{C})-1}$.
\end{proof}
\begin{lemma}\label{lem:large-dist-far}
    Let $\mathcal{C} = (\mathcal{C}_{\textup{dec}},\mathcal{C}_{\textup{enc}})$ be a code with block length $d$ and message length $\lg m$, and let $\mathcal{G}=(\mathcal{G}_{\textup{dec}},\mathcal{G}_{\textup{enc}})$ be obtained using \cref{con:error-correcting-gray-code} on $\mathcal{C}$. Then for all $v,v'\in [m]$, if $|v-v'| \ge D(\mathcal{C})$ then $H(\mathcal{G}_{\textup{enc}}(v),\mathcal{G}_{\textup{enc}}(v')) \ge D(\mathcal{C})$.
\end{lemma}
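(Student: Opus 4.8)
The plan is to leverage the fact that $\mathcal{G}_{\textup{enc}}(v)$ and $\mathcal{G}_{\textup{enc}}(v')$ are each built from a prefix of one $\mathcal{K}$-codeword and a suffix of the adjacent one, together with the good distance of the component codes $\mathcal{C}$ and $\mathcal{L}$ hidden inside $\mathcal{K}$. Write $v = qg + r$ and $v' = q'g + r'$ with $g = 2(d + D(\mathcal{C}))$, and assume without loss of generality $v < v'$; by the previous Lemma~\ref{lem:small-dist-close} we may also assume $|v - v'| \ge D(\mathcal{C})$, so in particular $D(\mathcal{C}) \ge 1$ and the two strings are distinct. Recall from the injectivity argument (Lemma~\ref{lem:injective-code}) that $\mathcal{G}_{\textup{enc}}(v) = p\,s$ where $p$ is a prefix of $\mathcal{K}_{\textup{enc}}(q+1)$ and $s$ is a suffix of $\mathcal{K}_{\textup{enc}}(q)$, and similarly $\mathcal{G}_{\textup{enc}}(v') = p'\,s'$.

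First I would handle the case $q = q'$ (so $r < r'$). Here Lemma~\ref{lem:small-dist-close}'s internal computation essentially already applies along the chain $\mathcal{G}_{\textup{enc}}(v), \mathcal{G}_{\textup{enc}}(v+1), \dots, \mathcal{G}_{\textup{enc}}(v')$: consecutive steps flip the distinct bits $b^{(q)}_{r+1}, \dots, b^{(q)}_{r'}$, which are all distinct by definition, so $H(\mathcal{G}_{\textup{enc}}(v), \mathcal{G}_{\textup{enc}}(v')) = r' - r = |v-v'| \ge D(\mathcal{C})$, and we are done. The substantive case is $q' \ge q+1$. The key structural observation is that among the four component slots $c_1, l_1, c_2, l_2$ of $\mathcal{G}_{\textup{enc}}(v)$, the "boundary" $b^{(q)}_r$ lies in the first half when $r \le g/2 = d + D(\mathcal{C})$ and in the second half otherwise (by Lemma~\ref{lem:const-dist}, the first $g/2$ differing bits between $\mathcal{K}_{\textup{enc}}(q)$ and $\mathcal{K}_{\textup{enc}}(q+1)$ sit in $c_1 l_1$ and the last $g/2$ in $c_2 l_2$). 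Hence for every $v$, at least one of the two component blocks on one designated side — concretely, at least one of $c_2$ or $l_2$ when $b^{(q)}_r$ is in the first half, or one of $c_1, l_1$ when it is in the second half — is an \emph{unmodified} component codeword of $\mathcal{K}_{\textup{enc}}(q)$ or $\mathcal{K}_{\textup{enc}}(q+1)$; more carefully, I'd argue that in fact at least two of the four blocks are intact component codewords for indices in $\{q, q+1\}$, and likewise two of the four blocks of $\mathcal{G}_{\textup{enc}}(v')$ are intact for indices in $\{q', q'+1\}$.

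Then I would argue by a pigeonhole over the four positions $i \in \{1,2,3,4\}$: since both $\mathcal{G}_{\textup{enc}}(v)$ and $\mathcal{G}_{\textup{enc}}(v')$ have at least two positions that are unmodified component codewords, there is a position $i$ that is unmodified in \emph{both}. At that position, $\mathcal{G}_{\textup{enc}}(v)$ restricted to block $i$ equals $\mathcal{X}_{\textup{enc}}(a)$ for some $a \in \{q, q+1\}$ and $\mathcal{G}_{\textup{enc}}(v')$ restricted to block $i$ equals $\mathcal{X}_{\textup{enc}}(a')$ for some $a' \in \{q', q'+1\}$, with the same encoder $\mathcal{X} \in \{\mathcal{C}, \mathcal{L}\}$ since position parity fixes which encoder is used. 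If $a \ne a'$ then the Hamming distance contributed by that block alone is at least $D(\mathcal{C})$ (using $D(\mathcal{L}) \ge D(\mathcal{C})$, which is immediate from Construction~\ref{con:complement-code} since the complement map is an isometry and the parity padding only adds distance), and we are done. The remaining subcase is that for \emph{every} common-unmodified position the codewords agree, forcing $\{q,q+1\} \cap \{q',q'+1\} \ne \emptyset$, i.e. $q' = q+1$; I would then observe that $v < v'$ with $q' = q+1$ and $|v - v'| \ge D(\mathcal{C})$ forces $r \ge r' + g - D(\mathcal{C}) \ge g/2$ roughly, pinning the boundary of $v$ into the second half and of $v'$ into the first half, so that the intact blocks of $v$ are $c_1, l_1$ (encoding $q+1$) and the intact blocks of $v'$ include $c_2, l_2$ (encoding $q'+1 = q+2$); comparing, say, block $c_1$ of $v$ against block $c_1$ of $v'$ — which encodes $\mathcal{C}_{\textup{enc}}(q+1)$ versus a mangled mixture — requires a bit more care, so I would instead directly bound $H$ below by tracking that $\mathcal{G}_{\textup{enc}}(v)$ agrees with $\mathcal{K}_{\textup{enc}}(q+1)$ on a prefix of length $\ge d + D(\mathcal{C})$ while $\mathcal{G}_{\textup{enc}}(v')$ agrees with $\mathcal{K}_{\textup{enc}}(q+1)$ only on a short prefix, so on the block $c_2$ (positions $2d + D(\mathcal{C})$ to $3d + D(\mathcal{C})$) the string $\mathcal{G}_{\textup{enc}}(v)$ shows $\mathcal{C}_{\textup{enc}}(q+1)$ while $\mathcal{G}_{\textup{enc}}(v')$ shows $\mathcal{C}_{\textup{enc}}(q+2)$, contributing $\ge D(\mathcal{C})$.

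The main obstacle I anticipate is the boundary bookkeeping in the $q' = q+1$ subcase: making the "at least two intact blocks on a prescribed side" claim airtight and then ensuring that the intact block chosen for $v$ and the intact block chosen for $v'$ actually align to the \emph{same} of the four positions with \emph{different} encoded indices. It may be cleanest to avoid the pigeonhole entirely in this subcase and instead note that when $q' = q+1$ the hypothesis $|v-v'| \ge D(\mathcal{C})$ together with $r \in [0,g)$, $r' \in [0,g)$ forces $(g - r) + r' \ge D(\mathcal{C})$, i.e. the suffix length $g - r$ used by $v$ plus the prefix length $r'$ used by $v'$ is at least $D(\mathcal{C})$; then the region where $\mathcal{G}_{\textup{enc}}(v)$ follows $\mathcal{K}_{\textup{enc}}(q)$ but $\mathcal{G}_{\textup{enc}}(v')$ follows $\mathcal{K}_{\textup{enc}}(q+1)$ or later has size $\ge g - \max(r, \text{something})$, and I would combine this with the fact that within any aligned window of $d$ consecutive positions corresponding to one component slot, $\mathcal{K}_{\textup{enc}}(q)$ and $\mathcal{K}_{\textup{enc}}(q+1)$ (or $\mathcal{K}_{\textup{enc}}(q+2)$) differ in at least $D(\mathcal{C})$ places. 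Once the combinatorics of "which slot is clean on which side" is organized, each subcase reduces to a one-line application of $D(\mathcal{C}) \le H(\mathcal{C}_{\textup{enc}}(a), \mathcal{C}_{\textup{enc}}(a'))$ for $a \ne a'$ and the analogous inequality for $\mathcal{L}$.
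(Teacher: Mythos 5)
Your skeleton matches the paper's: locate a component block that is an intact codeword in both $\mathcal{G}_{\textup{enc}}(v)$ and $\mathcal{G}_{\textup{enc}}(v')$, use $D(\mathcal{C})$ (and $D(\mathcal{L})\ge D(\mathcal{C})$) to force the corresponding $\mathcal{K}$-indices to be within $1$ of each other, and then fall back on counting the single-bit flips along the chain from $v$ to $v'$. However, two steps do not go through as written. First, your pigeonhole premise is too weak: from ``at least two of the four blocks are intact'' on each side you cannot conclude that some position is intact in both, since two $2$-element subsets of a $4$-element set can be disjoint. The correct count is at least \emph{three} intact blocks per codeword (the single prefix/suffix boundary can split at most one block), and $3+3>4$ does give a common intact position; this is exactly what the paper uses.

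Second, the $q'=q+1$ subcase is not actually closed. Your concrete claim --- that some block (e.g.\ $c_2$) shows $\mathcal{C}_{\textup{enc}}(q+1)$ in $\mathcal{G}_{\textup{enc}}(v)$ and $\mathcal{C}_{\textup{enc}}(q+2)$ in $\mathcal{G}_{\textup{enc}}(v')$ --- fails when $r$ is close to $g$ and $r'$ is close to $0$: then nearly every block of \emph{both} codewords encodes index $q+1$, yet $|v-v'|=(g-r)+r'$ can still be as large as $D(\mathcal{C})$, so no block-level disagreement of $\ge D(\mathcal{C})$ is available. Your fallback sketch (``the region where \dots has size $\ge g-\max(r,\text{something})$'') does not resolve this. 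The argument that works, and that the paper gives, is positional: the $(g-r)+r'$ bits flipped along the chain $\mathcal{G}_{\textup{enc}}(v),\mathcal{G}_{\textup{enc}}(v+1),\dots,\mathcal{G}_{\textup{enc}}(v')$ are the positions $b^{(q)}_{r+1},\dots,b^{(q)}_{g}$ together with $b^{(q+1)}_{1},\dots,b^{(q+1)}_{r'}$; a repeated position would have to lie strictly after $v$'s boundary $b^{(q)}_{r}$ and at or before $v'$'s boundary $b^{(q+1)}_{r'}$, but the common intact block lies entirely at or before $b^{(q)}_{r}$ and entirely after $b^{(q+1)}_{r'}$, forcing $b^{(q+1)}_{r'}<b^{(q)}_{r}$ and making such a repeat impossible. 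Hence all flip positions are distinct and $H(\mathcal{G}_{\textup{enc}}(v),\mathcal{G}_{\textup{enc}}(v'))=|v-v'|\ge D(\mathcal{C})$. With these two repairs your proof becomes essentially the paper's.
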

\begin{proof}
    Assume for the purpose of contradiction that there exist $v$ and $v'$ such that $|v-v'| \ge D(\mathcal{C})$ but $H(\mathcal{G}_{\textup{enc}}(v),\mathcal{G}_{\textup{enc}}(v')) < D(\mathcal{C})$. Let $\mathcal{G}_{\textup{enc}}(v) = c_1l_1c_2l_2$ and let $\mathcal{G}_{\textup{enc}}(v') = c_1'l_1'c_2'l_2'$ where $c_1,c_2,c_1',c_2' \in \{0,1\}^d$ and $l_1,l_2,l_1',l_2' \in \{0,1\}^{d + D(\mathcal{C})}$. Then $H(c_i,c_i') < D(\mathcal{C})$ and $H(l_i,l_i') < D(\mathcal{C})$ for $ \; i\in \{1,2\}$.

    Observe that by construction at most one of $c_1,l_1,c_2,l_2$ is not a codeword of their respective codes, and at most one of $c_1',l_1',c_2',l_2'$ is not a codeword of their respective codes. This means that there exist a pair: 
    $(x,y) \in \{(c_1,c_1'),(l_1,l_1'),(c_2,c_2'),(l_2,l_2')\}$
    such that both $x$ and $y$ are codewords of the same code. By the definition of $D(\mathcal{C})$, this implies that $x=y$, and so that $\left|\left\lfloor v/g\right\rfloor - \left\lfloor v'/g\right\rfloor\right| \le 1$ by injectivity of $\mathcal{C}_{\textup{enc}}$.

    Next, let $f_i$ be the index of the bit where $\mathcal{G}_{\textup{enc}}(v+i)$ and $\mathcal{G}_{\textup{enc}}(v+i+1)$ differ. Observe that since $x=y$, no value in $f_0,\ldots,f_{|v-v'|}$ can lie in the bit interval of $\mathcal{G}_{\textup{enc}}(v)$ and $\mathcal{G}_{\textup{enc}}(v')$ where $x$ and $y$ are placed respectively. This however directly implies that all $f_0,\ldots,f_{|v-v'|}$ are unique, since for any duplicates to exist, there would have to be at least one bit-flip in the interval covered by $x$ and $y$ in $\mathcal{G}_{\textup{enc}}(v)$ and $\mathcal{G}_{\textup{enc}}(v')$. This implies that $H(\mathcal{G}_{\textup{enc}}(v),\mathcal{G}_{\textup{enc}}(v'))=|v-v'|$, a contradiction.
\end{proof}
Finally, we will look at how well our the result of encoding and decoding is concentrated around the encoded value when adding noise after encoding. 
\begin{theorem}\label{thm:prob-diff-greater-than-t}
    Let $\mathcal{C} = (\mathcal{C}_{\textup{dec}},\mathcal{C}_{\textup{enc}})$ be a code with block length $d$ and message length $\lg m$, and let $\mathcal{G}=(\mathcal{G}_{\textup{dec}},\mathcal{G}_{\textup{enc}})$ be obtained using \cref{con:error-correcting-gray-code} on $\mathcal{C}$. Let $p\in(0,1)$ and let $b_p \sim \Bern(p)^{4d+2D(\mathcal{C})}$. Let $c=\left(1 - 2 p \right)^{2}/(4 p + 2)$. Then for all $t\ge 0$,
    \begin{equation*}
        \Pr[|v- \mathcal{G}_{\textup{dec}}(\mathcal{G}_{\textup{enc}}(v)\oplus b_p)| \ge t] \le \frac{2}{1-\exp(-c)} \exp(-ct) + 12d\exp(-cD(\mathcal{C})) + 5P_p(\mathcal{C})
    \end{equation*}
\end{theorem}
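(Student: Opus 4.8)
The plan is to isolate three essentially independent failure modes of $\mathcal{G}_{\textup{dec}}$ and bound each. Fix $v$ and abbreviate $g = 2(d+D(\mathcal{C}))$, $q = \lfloor v/g\rfloor$, $c = \mathcal{G}_{\textup{enc}}(v)\oplus b_p$ and $\hat v = \mathcal{G}_{\textup{dec}}(c)$. Let $B$ be the event that the outer block decode fails, i.e. $\mathcal{K}_{\textup{dec}}(c)\notin\{q,q+1\}$. Then $\Pr[|v-\hat v|\ge t]\le\Pr[B]+\Pr[B^c\cap\{|v-\hat v|\ge t\}]$, and I would bound the two terms separately. The tail parameter $t=0$ case is trivial (the first term of the bound already exceeds $1$), so assume $t\ge 1$.

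For $\Pr[B]$: by \eqref{eq:error-correcting-gray-code-enc} the string $c$ decomposes as $c_1l_1c_2l_2$ in which at most one of the four blocks is ``split'' between a codeword of $\mathcal{K}_{\textup{enc}}(q)$ and one of $\mathcal{K}_{\textup{enc}}(q+1)$ (plus noise), while the remaining at least three blocks are genuine noisy codewords of $\mathcal{C}$ or of $\mathcal{L}$, each encoding $q$ or $q+1$; by pigeonhole two of these pure blocks encode the same value. As in the proof of \cref{lem:g-gray-code}, if no pure block is misdecoded then the plurality computed by $\mathcal{K}_{\textup{dec}}$ lies in $\{q,q+1\}$, so $B$ is contained in the event that some pure block is misdecoded. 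A union bound over at most two $\mathcal{C}$-blocks (each misdecoded with probability at most $P_p(\mathcal{C})$) and at most two $\mathcal{L}$-blocks (each at most $P_p(\mathcal{L})\le 2P_p(\mathcal{C})$ by \cref{lem:complement-code-error}), using that at most three blocks are pure, gives $\Pr[B]\le 5P_p(\mathcal{C})$ (the worst case being a split $\mathcal{C}$-block, leaving one $\mathcal{C}$- and two $\mathcal{L}$-blocks pure).

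The second term I would control via a structural observation about the decoder on $B^c$, where I write $t' := \mathcal{K}_{\textup{dec}}(c)\in\{q,q+1\}$. For any index $k$, all codewords $\mathcal{G}_{\textup{enc}}(gk),\dots,\mathcal{G}_{\textup{enc}}(gk+g-1)$ of the $k$-th block agree off the $g$ positions $b^{(k)}_1,\dots,b^{(k)}_g$ and, on those positions, realise the unary codewords $1^00^g,1^10^{g-1},\dots,1^g0^0$ (this is exactly the correspondence used in the proof of \cref{lem:g-gray-code}). Hence $v_0 = g(t'-1)+\mathcal{U}_{\textup{dec}}(h_{t'-1})$ and $v_1 = gt'+\mathcal{U}_{\textup{dec}}(h_{t'})$ are precisely the codewords minimising Hamming distance to $c$ within the $(t'-1)$-th and $t'$-th blocks, respectively, and therefore $\hat v = \argmin_{w} H(c,\mathcal{G}_{\textup{enc}}(w))$ where $w$ ranges over the union $W$ of those two blocks. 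Because $t'\in\{q,q+1\}$, the set $W$ contains $v$ and is a subset of the window $\{g(q-1),\dots,g(q+2)-1\}$ of at most $3g$ values. Consequently, on $B^c$, whenever $|v-\hat v|\ge t$ we have $\hat v\in W$, $\hat v\neq v$ and $H(c,\mathcal{G}_{\textup{enc}}(\hat v))\le H(c,\mathcal{G}_{\textup{enc}}(v))$.

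It remains to union-bound over the window: for each $w\neq v$ in it, \cref{lem:calc-prob} (with $c_1=\mathcal{G}_{\textup{enc}}(v)$, $c_2=\mathcal{G}_{\textup{enc}}(w)$) gives $\Pr[H(\mathcal{G}_{\textup{enc}}(v)\oplus b_p,\mathcal{G}_{\textup{enc}}(w))\le H(\mathcal{G}_{\textup{enc}}(v)\oplus b_p,\mathcal{G}_{\textup{enc}}(v))]\le\exp(-c\,H(\mathcal{G}_{\textup{enc}}(v),\mathcal{G}_{\textup{enc}}(w)))$. I would split the sum over $w$ with $|v-w|\ge t$ into a ``near'' part, $t\le|v-w|<D(\mathcal{C})$, where \cref{lem:small-dist-close} gives $H(\mathcal{G}_{\textup{enc}}(v),\mathcal{G}_{\textup{enc}}(w))=|v-w|$ and there are at most two choices of $w$ at each distance, contributing at most $2\sum_{j\ge t}\exp(-cj)=\frac{2}{1-\exp(-c)}\exp(-ct)$; and a ``far'' part, $|v-w|\ge D(\mathcal{C})$, where \cref{lem:large-dist-far} gives $H(\mathcal{G}_{\textup{enc}}(v),\mathcal{G}_{\textup{enc}}(w))\ge D(\mathcal{C})$ and there are at most $3g\le 12d$ such $w$ (since $D(\mathcal{C})\le d$), contributing at most $12d\exp(-cD(\mathcal{C}))$. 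Adding $\Pr[B]\le 5P_p(\mathcal{C})$ yields the claimed bound. The main obstacle is the structural observation of the third paragraph — recognising that, conditioned on the outer block decode succeeding, $\mathcal{G}_{\textup{dec}}$ is exactly maximum-likelihood decoding over a window of three consecutive blocks around $v$ — since this is what collapses the final Hamming-distance tie-break into one clean union bound; after that the argument is only Chernoff bounds and the two distance lemmas, with the residual care going into the component-code bookkeeping of the second paragraph.
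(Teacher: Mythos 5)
Your proof is correct and follows essentially the same route as the paper's: the same three-way decomposition (outer-code failure bounded by $5P_p(\mathcal{C})$ via \cref{lem:complement-code-error}, a geometric sum over near misses via \cref{lem:calc-prob,lem:small-dist-close}, and a $12d\exp(-cD(\mathcal{C}))$ union bound over the $\le 3g$ far candidates via \cref{lem:large-dist-far}). The only difference is cosmetic: you condition on the outer decode $\mathcal{K}_{\textup{dec}}(c)\notin\{q,q+1\}$ rather than on the paper's event that some unsplit component block is misdecoded (the former is contained in the latter), and you make explicit the observation that on the complement the decoder is maximum-likelihood over a three-block window containing $v$, which the paper uses implicitly.
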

\begin{proof}
    Let $v' = \mathcal{G}_{\textup{dec}}(\mathcal{G}_{\textup{enc}}(v)\oplus b_p)$. Observe that $v'$ is a random variable. To show the statement we are going to split all possible decoding events for $v'$ into two sets, $S_1$ and  $S_2,$. $S_1$ contains all events such that $t \le |v-v'|< D(C)$. $S_2$ contains all events such that $|v-v'|\ge D(C)$. Observe that for any $v'$ such that $|v'-v| \ge t$, $v'\in S_1\cup S_2$. Finally, let $F$ be the event that at least one of the $3$ codewords of $\mathcal{G}_{\textup{enc}}(v) = c_1 l_1 c_2 l_2$ that are not a concatenation of two different codes, is decoded incorrectly.
    
    We can now rewrite the probability:
    \begin{align}
         \Pr[|v- v'| \ge t] &= \Pr[(|v-v'|\ge t) \cap F^c] + \Pr[|v-v'|\ge t |F]\Pr[F]\nonumber\\
         &\le \Pr[(S_1\cup S_2) \cap F^c] + \Pr[F]\nonumber\\
         &\le \Pr[S_1\cap F^c] + \Pr[S_2 \cap F^c] + \Pr[F].\label{eq:bounded-terms}
    \end{align}
    We will bound each of these terms.  

    We start by bounding $\Pr[S_1\cap F^c]$. By the definition of $F^c$, $v$ must have been considered in the second phase of the decoding. This means that $v'$ was chosen over $v$ implying that 
    \begin{equation*}
        H(\mathcal{G}_{\textup{enc}}(v)\oplus b_p,\mathcal{G}_{\textup{enc}}(v'))\le H(\mathcal{G}_{\textup{enc}}(v)\oplus b,\mathcal{G}_{\textup{enc}}(v))   
    \end{equation*}
    From this, we can bound the probability $\Pr[S_1\cap F^c]$ by the sum of probabilities of each possible value of $v'$ that is decodable in the event $S_1 \cap F^c$ being chosen over $v$. That is for all $w\in [m]$ such that $t \le |v-w| < D(\mathcal{C})$,
    \begin{align}
        \Pr[(v'=w)\cap(S_1\cap F^c)] &\le \Pr[H(\mathcal{G}_{\textup{enc}}(v)\oplus b_p,\mathcal{G}_{\textup{enc}}(w))\le H(\mathcal{G}_{\textup{enc}}(v)\oplus b_p,\mathcal{G}_{\textup{enc}}(v))]\nonumber\\
        &\le  \exp(-c H(\mathcal{G}_{\textup{enc}}(v),\mathcal{G}_{\textup{enc}}(w)))\nonumber\\
        &= \exp(-c |v-w|) \label{eq:s1-vvp-eq}
    \end{align}
    using \cref{lem:calc-prob,lem:small-dist-close}. Observe that for each value $l = |v-v'|$ there exists at most two possible values of $v'$. Summing over the different values of $l$, using \cref{eq:s1-vvp-eq}, and that it is a geometric progression we get
    \begin{align}
        \Pr[S_1\cap F^c] &= \sum_{l=t}^{D(\mathcal{C})-1} \Pr[(l=|v'-v|)\cap(S_1\cap F^c)]\nonumber\\
        &\le \sum_{l=t}^{D(\mathcal{C})-1} 2\exp(-c |v-w|)\nonumber\\
        &=2\frac{1 - \exp(-cD(\mathcal{C}))}{1-\exp(-c)} - 2\frac{1-\exp(-ct)}{1-\exp(-c)}\nonumber\\
        &\le \frac{2}{1-\exp(-c)} \exp(-ct) \label{eq:s1-fc-bound}
    \end{align}
    
    For the event $S_2\cap F^c$, observe only if $\left|\floor*{v/g}-\floor*{v'/g}\right|\le 1$ can both $v$ and $v'$ be considered during second decoding step. This means that at most $3g\le 12d$ different $v'$ values can be compared to $v$, the ones in the interval $[v-6d;v+6d]$. Of these, only the ones in the intervals $[v-6d;v-D(\mathcal{C})]$ and $[v+D(\mathcal{C});v+6d]$ are actually part of the event $S_2\cup F^c$ per definition. All other values of $v'$ have a probability of $0$ to be decoded by the fact that we are looking at events that are a subset of $F^c$. By \cref{lem:large-dist-far} we have $H(\mathcal{G}_{\textup{enc}}(v),\mathcal{G}_{\textup{enc}}(v')) \ge D(\mathcal{C})$. From \cref{lem:calc-prob}, we can therefore calculate the probabilities of these $v'$ values as:
    \begin{align}
        \Pr[S_2\cap F^c] \le&\sum_{w\in [m]} \Pr[(v'=w) \cap (S_2 \cap F^c)]\nonumber\\
        \le& \sum_{w = v-6d}^{v-D(\mathcal{C})} \Pr[(v'=w) \cap S_2 \cap F^c] + \sum_{w = v+D(\mathcal{C})}^{v+6d} \Pr[(v'=w) \cap S_2 \cap F^c]\nonumber\\
        \le& \sum_{w = v-6d}^{v-D(\mathcal{C})} \exp(- \frac{\left(1 - 2p\right)^{2}}{4 p + 2}H(\mathcal{G}_{\textup{enc}}(v),\mathcal{G}_{\textup{enc}}(w)))\nonumber + \sum_{w = v+D(\mathcal{C})}^{v+6d} \exp(- \frac{\left(1 - 2p\right)^{2}}{4 p + 2}H(\mathcal{G}_{\textup{enc}}(v),\mathcal{G}_{\textup{enc}}(w)))\nonumber\\
        \le& 12d\exp(-cD(\mathcal{C})). \label{eq:s2-fc-bound}
    \end{align}
    Finally, we can determine the probability of $F$ happening as the union bound of the probability of any of the $3$ component codes being decoded incorrectly, which means 
    \begin{equation}\label{eq:f-bound}
        \Pr[F] \le P_p(D(\mathcal{C})) + 2 P_p(D(\mathcal{L})) \le 5 P_p(D(\mathcal{C})),
    \end{equation}
    where $\mathcal{L}$ is a code obtained using \cref{con:complement-code} on $\mathcal{C}$ and the error probability is obtained from \cref{lem:complement-code-error}.
    Substituting \cref{eq:s1-fc-bound,eq:s2-fc-bound,eq:f-bound} into \cref{eq:bounded-terms}, we get
    \[
        \Pr[|v- v'| \ge t] \le \frac{2}{1-\exp(-c)} \exp(-ct) + 12d\exp(-cD(\mathcal{C})) + 5P_p(\mathcal{C}),
    \]
    as desired.
\end{proof}

\section{Concrete Error-Correcting Gray Codes}
In this section, we will be looking at instantiating the presented codes, and what kinds of guarantees these give us. The idea is that we will instantiate \cref{con:error-correcting-gray-code} with polar codes as well as expander codes to show the properties we are able to achieve with these codes.

First, we look at expander codes \cite{expander-codes,expander-codes-linear-time}. These codes are linear codes that cannot quite reach the information theoretical limit, but on the other hand, they are robust when the noise is lower than their decoding limit. Selecting~$\mathcal{C}$ to be an expander code constructed to be able to handle a ratio of $\alpha< 1/4$ bit flips we will look at how our code performs with an error of $p=\alpha/2$. We furthermore know that expander codes can be encoded and decoded in $O(d)$ time \cite{expander-codes-linear-time}.
\begin{lemma}\label{lem:expand-error}
    Let $\mathcal{C} = (\mathcal{C}_{\textup{enc}},\mathcal{C}_{\textup{dec}})$ be an expander code with block length $d$ that can correct all errors of at most $\alpha d$ bit flips, then 
    \begin{equation*}
        P_{\alpha/2}(\mathcal{C}) \le e^{-\alpha d/6}
    \end{equation*}
\end{lemma}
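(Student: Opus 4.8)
The plan is to bound the failure probability $P_{\alpha/2}(\mathcal{C})$ by the probability that the binary symmetric channel with flip probability $p = \alpha/2$ introduces more than $\alpha d$ bit flips, since an expander code by assumption decodes correctly whenever at most $\alpha d$ bits are corrupted. So the first step is to write $P_{\alpha/2}(\mathcal{C}) \le \Pr[\|b_p\|_1 > \alpha d]$ where $b_p \sim \Bern(\alpha/2)^d$; this is immediate from the definition of failure probability together with the decoding guarantee.

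Second, I would apply a Chernoff/Hoeffding-type tail bound to $\|b_p\|_1$, which is a sum of $d$ independent Bernoulli$(\alpha/2)$ variables with mean $\mu = \alpha d/2$. We want to bound $\Pr[\|b_p\|_1 > \alpha d] = \Pr[\|b_p\|_1 > 2\mu]$, i.e., a deviation by a factor of $2$ above the mean. The standard multiplicative Chernoff bound gives $\Pr[X \ge (1+\delta)\mu] \le e^{-\delta^2\mu/3}$ for $\delta \le 1$; here $\delta = 1$, so this yields $\Pr[\|b_p\|_1 \ge 2\mu] \le e^{-\mu/3} = e^{-\alpha d/6}$, which is exactly the claimed bound.

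The only mild subtlety is making sure the constant works out cleanly: one should use the form of the Chernoff bound $\Pr[X \ge (1+\delta)\mu] \le \left(\frac{e^\delta}{(1+\delta)^{1+\delta}}\right)^\mu$ and check that at $\delta = 1$ this is $(e/4)^\mu \le e^{-\mu/3}$ since $e/4 < e^{-1/3}$ (indeed $\ln(4/e) = 2\ln 2 - 1 \approx 0.386 > 1/3$). So the $e^{-\mu/3}$ bound is valid and plugging in $\mu = \alpha d/2$ finishes the proof. There is essentially no real obstacle here — the argument is a one-line reduction to a textbook concentration inequality; the "hard part," such as it is, is merely bookkeeping the constant to land on exactly $\alpha d / 6$ rather than some other constant in the exponent.
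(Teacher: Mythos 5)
Your proposal is correct and matches the paper's proof essentially verbatim: both bound $P_{\alpha/2}(\mathcal{C})$ by the probability that more than $\alpha d$ bits are flipped and then apply the multiplicative Chernoff bound $\Pr[Y\ge 2E[Y]]\le e^{-E[Y]/3}$ with $E[Y]=\alpha d/2$. Your extra check that $(e/4)^\mu\le e^{-\mu/3}$ is a nice verification of the constant but adds nothing beyond the paper's argument.
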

\begin{proof}
    Let $Y$ be the number of errors. Since we are guaranteed to be able to handle $\alpha d$ errors and the expected number of errors is $E[Y] = \alpha d/2$, we have
    we have
    \begin{equation*}
        P_{\alpha/2}(\mathcal{C}) \le \Pr[Y\ge \alpha d] \le \exp(-\alpha d/6)
    \end{equation*}
    using the Chernoff bound $\Pr[Y\ge 2E[Y]]\le \exp(-E[Y]/3)$.
\end{proof}
From this, we present the instantiation of our codes using expander codes,
\begin{corollary}\label{cor:expandercode}
    Let $\mathcal{C}$ be an expander code with block length $d = O_\alpha(\lg m)$ and message length $\lg m$ that can correct all errors of at most $\alpha d$ bit flips, where $\alpha < \frac{1}{4}$ and let $\mathcal{G} = (\mathcal{G}_{\textup{enc}},\mathcal{G}_{\textup{dec}})$ be a code constructed using \cref{con:error-correcting-gray-code} on $\mathcal{C}$.  Then $\mathcal{G}$ is a Gray code with message length of at least $\lg m$ and block length $d' \le 6d$ such that for all $t,v\in [m]$.
    \begin{equation*}
        \Pr[|v-\mathcal{G}_{\textup{dec}}(\mathcal{G}_{\textup{enc}}(v)\oplus b_{\alpha/2})| \ge t] \le  \exp(- \Omega(t)) + \exp(-\Omega(d))
    \end{equation*}
    where $b_{\alpha/2} \sim \Bern(\alpha/2)^{d'}$ with running time $O(d')$ for both encoding and decoding. In addition, we have $\mathcal{G}_{\textup{enc}}(0) = 0^{d'}$
\end{corollary}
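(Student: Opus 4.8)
The plan is to instantiate Theorem~\ref{thm:prob-diff-greater-than-t} with the expander code $\mathcal{C}$ and combine it with the error bound from Lemma~\ref{lem:expand-error}. First I would verify the structural parameters: by Construction~\ref{con:error-correcting-gray-code}, $\mathcal{G}$ has block length $4d + 2D(\mathcal{C}) \le 6d$ (using $D(\mathcal{C}) \le d$), and message length $\lg(mg)\ge \lg m$, so $d' = 4d + 2D(\mathcal{C}) \le 6d$. By Lemmas~\ref{lem:one-ham-dist}, \ref{lem:injective-code}, and~\ref{lem:g-gray-code}, $\mathcal{G}$ is a Gray code. The running-time claim follows from the Remark after Construction~\ref{con:error-correcting-gray-code}, since expander codes encode and decode in $O(d) = O(d')$ time~\cite{expander-codes-linear-time}, and all the auxiliary operations (unary decode on $g$ bits, majority vote over four values, two Hamming-distance comparisons) cost $O(d')$.

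Next I would plug $p = \alpha/2$ into Theorem~\ref{thm:prob-diff-greater-than-t}. Since $\alpha < 1/4$, we have $p < 1/8$, so the constant $c = (1-2p)^2/(4p+2)$ is an absolute positive constant bounded below (roughly $c \ge (3/4)^2/(5/2) = 9/40$), hence $c = \Omega(1)$ and $\tfrac{2}{1-\exp(-c)} = O(1)$. The first term $\tfrac{2}{1-\exp(-c)}\exp(-ct)$ is therefore $\exp(-\Omega(t))$. For the third term, Lemma~\ref{lem:expand-error} gives $P_{\alpha/2}(\mathcal{C}) \le e^{-\alpha d/6} = \exp(-\Omega(d))$, so $5P_p(\mathcal{C}) = \exp(-\Omega(d))$. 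For the middle term $12d\exp(-cD(\mathcal{C}))$, I need $D(\mathcal{C}) = \Omega(d)$: an expander code correcting a constant fraction $\alpha d$ of adversarial errors has minimum distance at least $2\alpha d + 1 = \Omega(d)$, so $cD(\mathcal{C}) = \Omega(d)$, and the polynomial prefactor $12d$ is absorbed since $12d\exp(-\Omega(d)) = \exp(-\Omega(d))$. Summing the three bounds yields $\Pr[|v - \mathcal{G}_{\textup{dec}}(\mathcal{G}_{\textup{enc}}(v)\oplus b_{\alpha/2})| \ge t] \le \exp(-\Omega(t)) + \exp(-\Omega(d))$, as claimed.

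Finally, for the normalization claim $\mathcal{G}_{\textup{enc}}(0) = 0^{d'}$: I would choose the underlying expander code so that $\mathcal{C}_{\textup{enc}}(0) = 0^d$ — this is possible for any linear code, since $0$ maps to the zero codeword. Then in Construction~\ref{con:complement-code}, $\mathcal{L}_{\textup{enc}}(0) = \mathcal{C}_{\textup{enc}}(0)0^{D(\mathcal{C})} = 0^{d + D(\mathcal{C})}$ (as $0$ is even), so in Construction~\ref{con:const-dist-code}, $\mathcal{K}_{\textup{enc}}(0) = 0^{4d+2D(\mathcal{C})}$, and in Construction~\ref{con:error-correcting-gray-code} with $v = 0 = 0\cdot g + 0$ we get $\mathcal{G}_{\textup{enc}}(0) = \pre_{0}(\mathcal{K}_{\textup{enc}}(1))\suf_{g}(\mathcal{K}_{\textup{enc}}(0)) = \mathcal{K}_{\textup{enc}}(0) = 0^{d'}$.

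I expect the only real subtlety to be the middle term: one must explicitly note that an expander code correcting an $\alpha$-fraction of errors has distance $D(\mathcal{C}) = \Omega(d)$ (so that $\exp(-cD(\mathcal{C}))$ beats the $12d$ factor), rather than treating $D(\mathcal{C})$ as an independent small quantity — everything else is a direct substitution into Theorem~\ref{thm:prob-diff-greater-than-t} together with the constant-factor bookkeeping $d' \le 6d$ and $c = \Omega(1)$.
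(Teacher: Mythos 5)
Your proposal is correct and follows essentially the same route as the paper: instantiate Theorem~\ref{thm:prob-diff-greater-than-t} at $p=\alpha/2$, bound $c>9/40$, control the middle term via $D(\mathcal{C})\ge 2\alpha d$, bound $P_{\alpha/2}(\mathcal{C})$ by Lemma~\ref{lem:expand-error}, and derive $\mathcal{G}_{\textup{enc}}(0)=0^{d'}$ from linearity of the expander code. The paper's proof is just a terser version of the same substitutions, so there is nothing to add.
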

\begin{proof}
    Let $c = (1-2p)^2/(4p+2) = (1-\alpha)^2/(2\alpha+2) > \frac{9}{40}$. By \Cref{thm:prob-diff-greater-than-t}, we have 
    \begin{equation*}
        \Pr[|v-v'| \ge t] \le \frac{2}{1-\exp(-c)} \exp(-ct) + 12d e^{-cD(\mathcal{C})} + 5P_p(\mathcal{C}).
    \end{equation*}
    From evaluation we find $\frac{2}{1-\exp(-c)} < 10$. Substituting the bound on $c$ and using \cref{lem:expand-error} yields:
    \begin{equation*}
         \Pr[|v-\mathcal{G}_{\textup{dec}}(\mathcal{G}_{\textup{enc}}(v)\oplus b_{\alpha/2})| \ge t] < 10 e^{- \frac{9}{40}t} + 12d e^{-\frac{9}{40}D(\mathcal{C})} + 5 e^{-\alpha d/6}
    \end{equation*}
    which can be simplified to the desired result since $D(\mathcal{C})\ge 2\alpha d$.

    To see that $\mathcal{G}_{\textup{enc}}(0) = 0^{d'}$, by construction we have $\mathcal{G}_{\textup{enc}}(0) = \mathcal{C}_{\textup{enc}}(0)\mathcal{C}_{\textup{enc}}(0)0^{D(\mathcal{C})}\mathcal{C}_{\textup{enc}}(0)\mathcal{C}_{\textup{enc}}(0)0^{D(\mathcal{C})}$. Then the fact that expander codes are linear implies $\mathcal{C}_{\textup{enc}}(0)=0^d$.
\end{proof}

Another family of codes that is worth considering for instantiation are polar codes \cite{polar-codes}. These codes are of interest since they achieve the capacity of the information in the channel. For a polar code $\mathcal{C}$ with message length $\lg m$ and block length $d$, the probability of error is $P_p(\mathcal{C}) = O(d^{-1/4})$, with a running time of the decoder and encoder of $O(d\lg d)$, see \cite{polar-codes}. This leads to the corollary:
\begin{corollary}
    Let $\mathcal{C}$ be a polar code with block length $d$ and message length $\lg m$ and let $\mathcal{G} = (\mathcal{G}_{\textup{enc}},\mathcal{G}_{\textup{dec}})$ be a code constructed using \cref{con:error-correcting-gray-code} on $\mathcal{C}$. Then $\mathcal{G}$ is a Gray code with message length of at least $\lg m$ and block length $d'\le 6d$ such that for all $t,v\in [m]$
    \begin{equation*}
        \Pr[|v-\mathcal{G}_{\textup{dec}}(\mathcal{G}_{\textup{enc}}(v)\oplus b_{p})| \ge t] \le e^{- \Omega(t)} +  O(d^{-1/4})
    \end{equation*}
    where $b_{p} \sim \Bern(p)^{d'}$ with running time $O(d\lg d)$ for both encoding and decoding.
\end{corollary}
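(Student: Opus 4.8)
The plan is to read the statement off \cref{thm:prob-diff-greater-than-t}: the code $\mathcal{G}$, its sensitivity-$1$ (Gray) property, injectivity and decoding correctness are already established in full generality, so the only task is to check that the three error terms of \cref{thm:prob-diff-greater-than-t} collapse to the claimed form when $\mathcal{C}$ is a polar code, and then to record the parameters. Instantiating that theorem with $\mathcal{G}$ built from a polar code $\mathcal{C}$ of block length $d$ and message length $\lg m$, at the fixed noise level $p$ and with $c = (1-2p)^2/(4p+2)$, gives
\begin{equation*}
    \Pr[|v - \mathcal{G}_{\textup{dec}}(\mathcal{G}_{\textup{enc}}(v)\oplus b_p)| \ge t] \le \frac{2}{1-\exp(-c)}\exp(-ct) + 12d\exp(-cD(\mathcal{C})) + 5P_p(\mathcal{C}),
\end{equation*}
so it remains to bound each summand.

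The first summand is $e^{-\Omega(t)}$, since $c$ is a positive constant depending only on $p$ and hence $\frac{2}{1-\exp(-c)}$ is also a constant. For the third summand I would invoke the standard performance guarantee of polar codes on $\mathrm{BSC}(p)$ below capacity, $P_p(\mathcal{C}) = O(d^{-1/4})$, giving $5P_p(\mathcal{C}) = O(d^{-1/4})$. The middle summand $12d\exp(-cD(\mathcal{C}))$ is the delicate one, and I expect it to be the main obstacle: unlike the expander-code instantiation, where $D(\mathcal{C}) = \Omega(d)$ is available ``for free'', polar codes are not built to have large minimum distance, so one must argue separately that $D(\mathcal{C})$ grows with $d$. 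Any bound $D(\mathcal{C}) = \omega(\log d)$ suffices, since then $12d\exp(-cD(\mathcal{C})) = d\cdot e^{-\omega(\log d)} = o(d^{-1/4})$, absorbed into the $O(d^{-1/4})$ term; concretely one cites the known $D(\mathcal{C}) = \Theta(\sqrt d)$ scaling of the minimum distance of a constant-rate polar code. Summing the three bounds yields the claimed $e^{-\Omega(t)} + O(d^{-1/4})$.

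It remains to record the structural parameters. By \cref{con:error-correcting-gray-code}, $\mathcal{G}$ has block length $4d + 2D(\mathcal{C})$; since trivially $D(\mathcal{C})\le d$ this is $d'\le 6d$, and its message length is $\lg(mg)\ge\lg m$. That $\mathcal{G}$ is a Gray code is exactly \cref{lem:g-gray-code}. Finally, by the remark following \cref{con:error-correcting-gray-code} the encoding and decoding of $\mathcal{G}$ run within a constant factor of those of $\mathcal{C}$ (the extra decoder work being a unary decode and $O(1)$ Hamming comparisons on strings of length $O(d)$), and since a polar encoder/decoder runs in $O(d\lg d)$, so do those of $\mathcal{G}$. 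The only inputs from outside this paper are the polar-code error bound $P_p(\mathcal{C})=O(d^{-1/4})$ and the lower bound on the polar-code minimum distance used to kill the middle term; the latter is the point that requires the most care.
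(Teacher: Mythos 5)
Your proposal is correct and follows the same route the paper intends: the corollary is stated immediately after quoting $P_p(\mathcal{C}) = O(d^{-1/4})$ and the $O(d\lg d)$ running time for polar codes, i.e., it is a direct instantiation of \cref{thm:prob-diff-greater-than-t}, exactly as you do. Your extra care with the middle term $12d\exp(-cD(\mathcal{C}))$ is well placed --- the paper silently drops it, and one does need a minimum-distance bound $D(\mathcal{C}) = \omega(\log d)$ (which constant-rate polar codes satisfy, having minimum distance $d^{1/2+o(1)}$) to absorb that term into $O(d^{-1/4})$; your argument supplies the justification the paper omits.
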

\section{Acknowledgement}
We thank the reviewers for constructive and detailed feedback.
The authors are affiliated with Basic Algorithms Research Copenhagen (BARC), supported by the VILLUM Foundation grant 16582. Rasmus Pagh is supported by Providentia, a Data Science Distinguished Investigator grant from Novo Nordisk Fonden.
\bibliography{biblo}
\begin{appendices}
    \section{Construction from a linear code}\label{app:con-from-lin}
In this section, we show how to exploit the linear structure of codes to achieve better constants for the length of the code. The general idea is that we avoid having to use the constant consecutive distance code of \cref{con:const-dist-code}. 

To make notation simpler, we will be using the canonical binary code for integers. We will be writing it as $\mathcal{B} = (\mathcal{B}_{\textup{enc}},\mathcal{B}_{\textup{dec}})$. A property that is often achieved in the construction of codes is linearity. This property can be used to make some small optimisations to our construction.

\begin{Definition}
    A code $\mathcal{C} = (\mathcal{C}_{\textup{enc}},\mathcal{C}_{\textup{dec}})$ where $\mathcal{C}_{\textup{enc}}: [2^n]\to GF(2)^d$ and a decoding mapping $\mathcal{C}_{\textup{dec}}: GF(2)^d\to[2^n]$ is called linear if there exists a $n\times d$ generator matrix $G\in GF(2)^{n\times d}$ such that $\mathcal{C}_{\textup{enc}}(v) = \mathcal{B}_{\textup{enc}}(v)^\intercal G$ for all $v\in [2^n]$.
\end{Definition}

Previously, we used the constant consecutive distance code to encode and decode efficiently. However, it also meant that we ended up needing an even number of repetitions of the code $\mathcal{C}$. It is however not enough to only use $2$ for a black-box error correcting code, so we required $4$ repetitions of the underlying $\mathcal{C}$. In addition to this, we also had to use some additional padding to communicate the parity of the encoded integer. In this section, we show how we can achieve the same properties, using only $3$ repetitions of $\mathcal{C}$ and no additional padding, while only getting an 

The following lemma is central to our ability to accomplish this.
\begin{algorithm}
\caption{An algorithm for computing $\sum_{i=1}^m H(\mathcal{C}_{\textup{enc}}(i-1),\mathcal{C}_{\textup{enc}}(i))$}\label{alg:total-ham-distance}
    \begin{algorithmic}
        \Function{CountCodeWords}{$t \in \mathbb{N}$, $G$ : $n\times d$ matrix such that $\mathcal{C}_{\textup{enc}}(v) = \mathcal{B}(v)^\intercal G$}
            \Let{$v_1$}{$\mathbf{0}$}
            \Let{$s$}{$0$}
            \For{$i = 1 \ldots n$}
                \Let{$v_i$}{$v_{i-1} \oplus G_{i\cdot}$}\Comment{$G_{i\cdot}$ denotes the $i$th row of $G$}
                \Let{$s$}{$s + \norm{v_i}_1\cdot \floor*{\frac{t}{2^i} + \frac{1}{2}}$}
            \EndFor
            \Return $s$
        \EndFunction
\end{algorithmic}
\end{algorithm}

\begin{lemma}
    Let $\mathcal{C}$ be a linear code with block length $d$ and message length $n$ and $n \times d$ generator matrix $G$. Then \cref{alg:total-ham-distance} computes 
    \begin{equation*}
        \textsc{CountCodeWords}(t,G) = \sum_{i=1}^t H(\mathcal{C}_{\textup{enc}}(i-1),\mathcal{C}_{\textup{enc}}(i))
    \end{equation*}
    in $O(nd)$ time.
\end{lemma}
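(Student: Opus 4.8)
The key observation is that for a linear code, the bit at position $j$ of $\mathcal{C}_{\textup{enc}}(v)$ is $\langle \mathcal{B}_{\textup{enc}}(v), G_{\cdot j}\rangle \bmod 2$, so the set of positions where $\mathcal{C}_{\textup{enc}}(v)$ and $\mathcal{C}_{\textup{enc}}(v+1)$ differ equals the support of $\mathcal{C}_{\textup{enc}}(v) \oplus \mathcal{C}_{\textup{enc}}(v+1) = \mathcal{C}_{\textup{enc}}(v \oplus (v+1))$ by linearity. Hence $H(\mathcal{C}_{\textup{enc}}(i-1), \mathcal{C}_{\textup{enc}}(i))$ depends only on the bit-pattern $\delta_i := (i-1)\oplus i$, which is always of the form $0\cdots 0\,1\cdots 1$ with the number of trailing ones equal to the number of trailing zeros of $i$ (i.e. $\delta_i$ has exactly the low-order $k$ bits set where $2^k \| i$). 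So $H(\mathcal{C}_{\textup{enc}}(i-1),\mathcal{C}_{\textup{enc}}(i)) = \norm{\mathcal{B}_{\textup{enc}}(\delta_i)^\intercal G}_1 = \norm{G_{1\cdot}\oplus\cdots\oplus G_{k\cdot}}_1$ where $k$ is the number of trailing zeros of $i$. Writing $v_k := G_{1\cdot}\oplus\cdots\oplus G_{k\cdot}$ (which is exactly the value maintained by the algorithm in its loop), the claim reduces to showing
\begin{equation*}
    \sum_{i=1}^t H(\mathcal{C}_{\textup{enc}}(i-1),\mathcal{C}_{\textup{enc}}(i)) = \sum_{k=1}^n \norm{v_k}_1 \cdot \bigl|\{\, i \in \{1,\dots,t\} : 2^k \mid i \,\}\bigr|,
\end{equation*}
since among the $i \le t$ divisible by $2^{k-1}$ but not $2^k$, each contributes $\norm{v_k}_1$, and summing over $k$ is equivalent to counting, for each $k$, how many $i\le t$ are divisible by $2^k$ (a telescoping/layer-cake rearrangement: an $i$ with exactly $k$ trailing zeros is counted once for each of $1,\dots,k$, contributing $\norm{v_1}_1 + \cdots + \norm{v_k}_1$ in total, but since the differing positions are the support of $v_k$ and $\norm{v_k}_1$ is what we want, we instead regroup so the coefficient of $\norm{v_k}_1$ is $\#\{i \le t : 2^k \mid i\} - \#\{i\le t: 2^{k+1}\mid i\}$... actually the cleaner identity is the first displayed one). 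The count $\#\{i \in \{1,\dots,t\}: 2^k \mid i\} = \lfloor t/2^k \rfloor$.

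The main step, then, is to reconcile the term $\lfloor t/2^k\rfloor$ with the algorithm's coefficient $\lfloor t/2^i + 1/2\rfloor$. I claim these differ by an index shift and a rounding identity: writing $n_k := \#\{i\le t : i \text{ has exactly } k \text{ trailing zeros}\} = \lfloor t/2^k\rfloor - \lfloor t/2^{k+1}\rfloor$, Abel summation gives $\sum_{k\ge 1} n_k(\norm{v_1}_1+\cdots+\norm{v_k}_1)$, but a careful regrouping shows the coefficient of $\norm{v_k}_1$ collapses to $\lfloor t/2^k\rfloor$; and one verifies $\lfloor t/2^k \rfloor = \lfloor t/2^{k+1} + 1/2\rfloor + (\text{correction})$ — more precisely the algorithm uses $\lfloor t/2^k + 1/2 \rfloor$ which equals the number of $i \in \{1,\dots,t\}$ whose $k$-th lowest bit flips when incrementing, i.e. $\#\{i \le t: i \equiv 2^{k-1} \pmod{2^k}\} = \lceil (t - 2^{k-1}+1)/2^k \rceil$ when $t \ge 2^{k-1}$, and one checks $\lfloor t/2^k + 1/2\rfloor$ equals this count. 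This is the one genuinely fiddly floor-arithmetic verification and is the main obstacle; everything else is linear-algebra bookkeeping.

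I would organize the write-up as: (1) state the linearity reduction $H(\mathcal{C}_{\textup{enc}}(i-1),\mathcal{C}_{\textup{enc}}(i)) = \norm{v_{\nu(i)}}_1$ where $\nu(i)$ is the number of trailing zeros of $i$ and $v_k = \bigoplus_{j=1}^k G_{j\cdot}$; (2) sum over $i$ by grouping according to $\nu(i)$, getting $\sum_{k=1}^n \norm{v_k}_1 \cdot \#\{i\le t : \nu(i)=k\}$; (3) rewrite $\#\{i \le t : \nu(i)\ge k\} = \lfloor t/2^k\rfloor$ and regroup the double sum — noting that $\sum_{k=1}^n \norm{v_k}_1 \#\{i\le t:\nu(i)=k\}$ can alternatively be summed "by level" since the differing-position set for an $i$ with $\nu(i)=k$ is exactly $\mathrm{supp}(v_k)$, not $\bigcup_{j\le k}\mathrm{supp}(v_j)$; so in fact the right grouping is directly $\#\{i\le t:\nu(i)=k\}$, and I must identify this with the algorithm's $\lfloor t/2^k + 1/2\rfloor$ via $\#\{i\le t: \nu(i)= k\} = \lfloor (t+2^{k-1})/2^k\rfloor - \lfloor t/2^k\rfloor = \lfloor t/2^k + 1/2\rfloor - \lfloor t/2^k \rfloor$... — here I should double-check the algorithm once more, since it appears to add $\norm{v_i}_1 \lfloor t/2^i + 1/2\rfloor$, suggesting the intended identity is $\#\{i \le t : \nu(i) \ge k-1 \text{ and bit } k-1 \text{ set}\} = \lfloor t/2^k + 1/2\rfloor$; whichever of these is correct, the proof is: establish the per-$i$ Hamming distance formula, sum, and match floors. (4) Runtime: the loop runs $n$ times, each iteration xors two length-$d$ vectors and computes a weight, so $O(nd)$; the multiplications by $\lfloor t/2^i+1/2\rfloor$ are on $O(\log t)$-bit integers, absorbed into the bound. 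The hard part remains step (3)'s floor identity.
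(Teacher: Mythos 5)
Your plan follows the same route as the paper's proof: use linearity to write $H(\mathcal{C}_{\textup{enc}}(i-1),\mathcal{C}_{\textup{enc}}(i))$ as the weight of a prefix-XOR $v_k=\bigoplus_{j\le k}G_{j\cdot}$ of generator rows, group the indices $i\le t$ by how many low-order bits flip when incrementing, and match the resulting count against the algorithm's multiplier $\lfloor t/2^k+1/2\rfloor$. The structure and the runtime argument are fine.

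However, the combinatorial core is left unresolved, and the specific identities you commit to are off. First, $(i-1)\oplus i=0\cdots0\,1^{k}$ with $k=\nu(i)+1$, \emph{one more} than the number of trailing zeros of $i$ (e.g.\ $3\oplus4=111$ has three low-order ones while $\nu(4)=2$), so $H(\mathcal{C}_{\textup{enc}}(i-1),\mathcal{C}_{\textup{enc}}(i))=\norm{v_{\nu(i)+1}}_1$, not $\norm{v_{\nu(i)}}_1$. Second, your first displayed identity, with coefficient $\#\{i\le t:2^k\mid i\}=\lfloor t/2^k\rfloor$ on $\norm{v_k}_1$, is wrong: an index $i$ contributes the single term $\norm{v_{\nu(i)+1}}_1$ (the differing positions are the support of one prefix-XOR, as you yourself note), so there is no layer-cake regrouping; the coefficient of $\norm{v_k}_1$ is simply $\#\{i\le t:\nu(i)=k-1\}=\#\{i\le t: i\equiv 2^{k-1}\ (\mathrm{mod}\ 2^k)\}$. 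Third, your proposed identity $\#\{i\le t:\nu(i)=k\}=\lfloor t/2^k+1/2\rfloor-\lfloor t/2^k\rfloor$ is false (the right-hand side is always $0$ or $1$). The correct closing step, which is the one the paper uses, is
\begin{equation*}
\#\{i\le t:\nu(i)=k-1\}=\left\lfloor \tfrac{t}{2^{k-1}}\right\rfloor-\left\lfloor \tfrac{t}{2^{k}}\right\rfloor=\left\lfloor \tfrac{t}{2^{k}}+\tfrac12\right\rfloor,
\end{equation*}
verified by writing $t=q2^k+r$ with $0\le r<2^k$, whence both sides equal $q+\lfloor r/2^{k-1}\rfloor$. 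Of the candidate characterizations you hedge between, the last one (``bit $k-1$ set and all lower bits zero'') is the correct one; with that fixed, your argument becomes the paper's proof.
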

\begin{proof}
    The time complexity is clear from the fact that $v_i$ takes $O(d)$ time to compute and that this is done for $i=1,\ldots,n$.
    
    For correctness, observe that for all $i$, $\mathcal{B}_{\textup{enc}}(i-1)\oplus\mathcal{B}_{\textup{enc}}(i) = 0^{n-k}1^{k}$ for some $k$.
    Since $\mathcal{C}$ is linear, by definition $\mathcal{C}_{\textup{enc}}(v) = \mathcal{B}(v)^\intercal G$ for some $n\times d$ matrix $G$. Letting $G_{i\cdot}$ denote the $i$th row of $G$ and $\mathcal{B}(v)_i$ the $i$th entry of $\mathcal{B}(v)$, we observe that 
    \begin{equation*}
        \mathcal{C}_{\textup{enc}}(v) = \bigoplus_{i = 1}^{n} (\mathcal{B}(v)_i\cdot G_{i\cdot})
    \end{equation*}
    Since $\mathcal{B}_{\textup{enc}}(i-1)$ and $\mathcal{B}_{\textup{enc}}(i)$ only differs on the last $k$ bits, this implies that
    \begin{equation}\label{eq:adj-ham-dist-equation}
        H(\mathcal{C}_{\textup{enc}}(i-1),\mathcal{C}_{\textup{enc}}(i)) = \norm{\bigoplus_{i = 1}^{k} G_{i\cdot}}_1 = \norm{v_k}_1
    \end{equation}
    
    It now simply remains for each $k$ value to count the number of $i$'s such that $\mathcal{B}_{\textup{enc}}(i-1)$ and $\mathcal{B}_{\textup{enc}}(i)$ differs exactly on the last $k$ bits. As the $i$th bit flips every $2^{i-1}$ increases by $1$, we observe that it flips to $1$ every $2^i$ increases. As it starts at $0$, we end up having the number of $i$'s that differ on exactly the $k$ last bits be $\floor*{\frac{m}{2^i} + \frac{1}{2}}$. Combining this with \cref{eq:adj-ham-dist-equation} we get
    \begin{equation*}
        \sum_{i=1}^m H(\mathcal{C}_{\textup{enc}}(i-1),\mathcal{C}_{\textup{enc}}(i)) = \sum_{k=1}^n \norm{v_i}_1\cdot \floor*{\frac{m}{2^i} + \frac{1}{2}}
    \end{equation*}
    which is exactly the value computed by \cref{alg:total-ham-distance}.
\end{proof}

\begin{construction}\label{con:repeat-code}
    Let $\mathcal{C}$ be a code with block length $d$ and message length $m$. We construct the code $\mathcal{W} = (\mathcal{W}_{\textup{enc}},\mathcal{W}_{\textup{dec}})$ with block length $3d$ and message length of $\lg m$.
    Define 
    \begin{equation*}
        \mathcal{W}_{\textup{enc}}(v) = \mathcal{C}_{\textup{enc}}(v)\mathcal{C}_{\textup{enc}}(v)\mathcal{C}_{\textup{enc}}(v)
    \end{equation*}
    \begin{equation*}
        \mathcal{W}_{\textup{dec}}(c_1 c_2 c_3) = \text{ Median of } \{\mathcal{C}_{\textup{dec}}(c_1), \mathcal{C}_{\textup{dec}}(c_2),\mathcal{C}_{\textup{dec}}(c_3)\}
    \end{equation*}
\end{construction}
Notice that we specifically use the median instead of a majority vote for decoding. The reason is we want to be able to decode the code, even if one of the codewords has been modified to the point where it can not be decoded. However, from our later construction, we also cannot guarantee that the two non-broken codewords encode the same value, just that the values are numerically adjacent. By selecting the median, we guarantee that one of the two non-broken codewords is the one returned.

With the constructions we now have, we can construct the linear code-based error-correcting Gray code. Note that though this code is based upon a linear code, we do not make any claims that the code is linear and in general it will not be linear.
\begin{construction}[Linear Code Based Error Correcting Gray Code]\label{con:linear-error-correcting-gray-code}
    Let $\mathcal{C} = (\mathcal{C}_{\textup{enc}},\mathcal{C}_{\textup{dec}})$ be a linear code with block length $d$ and message length $\lg m$ and let $\mathcal{W}$ be a code constructed using \cref{con:repeat-code} on~$\mathcal{C}$. We will construct the code $\mathcal{G} = (\mathcal{G}_{\textup{enc}},\mathcal{G}_{\textup{dec}})$ with block length $3d$ and message length at least $\lg m$.
    
    Let $s_v = H(\mathcal{W}_{\textup{enc}}(v),\mathcal{W}_{\textup{enc}}(v + 1))$. Let $b^{(v)}_{1},...,b^{(v)}_{s_v}$ be the bit indices where $\mathcal{W}_{\textup{enc}}(v)$ and $\mathcal{W}_{\textup{enc}}(v + 1)$ are different in sorted order and define $b^{(v)}_{0} = 0$. 
    Let $v = q + r$ such that 
    \begin{equation*}
        q = \sum_{i=1}^l H(\mathcal{C}_{\textup{enc}}(i-1),\mathcal{C}_{\textup{enc}}(i)) \le v < \sum_{i=1}^{l + 1} H(\mathcal{C}_{\textup{enc}}(i-1),\mathcal{C}_{\textup{enc}}(i))
    \end{equation*}
    for some $l$ and $0 \le r < H(\mathcal{C}_{\textup{enc}}(l),\mathcal{C}_{\textup{enc}}(l + 1))$. Then
    \begin{equation*}
        \mathcal{G}_{\textup{enc}}(v) = \pre_{b^{(l)}_{r}}\left(\mathcal{W}_{\textup{enc}}(l + 1)\right)\suf_{s_l-b^{(l)}_{r}}\left(\mathcal{W}_{\textup{enc}}(l)\right).
    \end{equation*}
    We define the decoding function in the following way. For input $c$, let $t = \mathcal{W}_{\textup{dec}}(c)$.  Then define the the bitstring $h_v\in \{0,1\}^{s_v}$ such that 
    \begin{equation*}
        (h_{v})_i = \begin{cases}0 & \text{ if } c_{b^{(v)}_{i}} = \mathcal{K}_{\textup{enc}}(v)_i\\ 1 & \text{ if } c_{b^{(v)}_{i}} = \mathcal{K}_{\textup{enc}}(v + 1)_i\\\end{cases}.
    \end{equation*}
    From this construction, $h_v$ becomes a unary code for all $v$. Letting $\mathcal{U}_{\textup{dec}}$ be the function for decoding unary functions. We then end up with two alternative decodings, which we name 
    \begin{equation*}
        v_0 = \mathcal{U}_{\textup{dec}}(h_{t-1}) + \sum_{i=1}^{t-1} H(\mathcal{C}_{\textup{enc}}(i-1),\mathcal{C}_{\textup{enc}}(i)),
    \end{equation*}
    \begin{equation*}
        v_1 =  \mathcal{U}_{\textup{dec}}(h_t) + \sum_{i=1}^{t} H(\mathcal{C}_{\textup{enc}}(i-1),\mathcal{C}_{\textup{enc}}(i)).
    \end{equation*}
    
    We define $\mathcal{G}_{\textup{dec}}(c) = \argmin_{v \in \{v_0,v_1\}} H(c,\mathcal{G}_{\textup{enc}}(v))$. We refer to \cref{lem:g-gray-code-lin} to show that this actually is a Gray code. 
\end{construction}

It is straightforward to see that $q$ can be efficiently computed in time $O(n^2d)$ through a binary search on \cref{alg:total-ham-distance}.

\begin{lemma}\label{lem:one-ham-dist-linear}
     Let $\mathcal{C}$ be a code with block length $d$ and message length $\lg m$ and let $\mathcal{G} = (\mathcal{G}_{\textup{enc}},\mathcal{G}_{\textup{dec}})$ be obtained using \cref{con:linear-error-correcting-gray-code} on $\mathcal{C}$. Then for any $v\in [m-1]$ 
     \begin{equation*}
         H(\mathcal{G}_{\textup{enc}}(v),\mathcal{G}_{\textup{enc}}(v + 1)) = 1
     \end{equation*}
\end{lemma}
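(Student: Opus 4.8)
The plan is to mirror the proof of \cref{lem:one-ham-dist}, but with the arithmetic now driven by the variable consecutive distances $s_v = H(\mathcal{W}_{\textup{enc}}(v),\mathcal{W}_{\textup{enc}}(v+1))$ instead of the fixed value $g$. Fix $v \in [m-1]$ and write $v = q_v + r_v$ and $v+1 = q_{v+1} + r_{v+1}$ according to the decomposition in \cref{con:linear-error-correcting-gray-code}; that is, let $l$ be the index with $q_l := \sum_{i=1}^{l} H(\mathcal{C}_{\textup{enc}}(i-1),\mathcal{C}_{\textup{enc}}(i)) \le v < q_{l+1}$, and let $r = v - q_l$, so $0 \le r < H(\mathcal{C}_{\textup{enc}}(l),\mathcal{C}_{\textup{enc}}(l+1)) = s_l$ (using that the consecutive Hamming distance of $\mathcal{W}$ is three times that of $\mathcal{C}$, but only the value $s_l$ matters here). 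There are two cases, exactly as before.

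First I would handle the generic case $r < s_l - 1$. Here both $v$ and $v+1$ fall in the same block $l$, so $\mathcal{G}_{\textup{enc}}(v) = \pre_{b^{(l)}_{r}}(\mathcal{W}_{\textup{enc}}(l+1))\,\suf_{s_l - b^{(l)}_{r}}(\mathcal{W}_{\textup{enc}}(l))$ and $\mathcal{G}_{\textup{enc}}(v+1) = \pre_{b^{(l)}_{r+1}}(\mathcal{W}_{\textup{enc}}(l+1))\,\suf_{s_l - b^{(l)}_{r+1}}(\mathcal{W}_{\textup{enc}}(l))$. Since $b^{(l)}_0 < b^{(l)}_1 < \dots < b^{(l)}_{s_l}$ are exactly the positions where $\mathcal{W}_{\textup{enc}}(l)$ and $\mathcal{W}_{\textup{enc}}(l+1)$ disagree, moving the prefix/suffix boundary from $b^{(l)}_r$ to $b^{(l)}_{r+1}$ changes only the single coordinate $b^{(l)}_{r+1}$: that coordinate switches from its $\mathcal{W}_{\textup{enc}}(l)$-value to its $\mathcal{W}_{\textup{enc}}(l+1)$-value, while every coordinate strictly between $b^{(l)}_r$ and $b^{(l)}_{r+1}$ agrees in the two $\mathcal{W}$ codewords and is hence untouched. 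Thus $H(\mathcal{G}_{\textup{enc}}(v),\mathcal{G}_{\textup{enc}}(v+1)) = 1$.

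For the boundary case $r = s_l - 1$, the successor $v+1 = q_{l+1}$ starts the next block, so $\mathcal{G}_{\textup{enc}}(v+1) = \pre_{b^{(l+1)}_0}(\mathcal{W}_{\textup{enc}}(l+2))\,\suf_{s_{l+1}}(\mathcal{W}_{\textup{enc}}(l+1)) = \mathcal{W}_{\textup{enc}}(l+1)$, using $b^{(l+1)}_0 = 0$ exactly as in the chain of equalities in \cref{lem:one-ham-dist}. On the other hand $\mathcal{W}_{\textup{enc}}(l+1) = \pre_{b^{(l)}_{s_l}}(\mathcal{W}_{\textup{enc}}(l+1))\,\suf_{0}(\mathcal{W}_{\textup{enc}}(l))$, i.e.\ it is the ``$r = s_l$'' member of the same prefix/suffix family, which differs from the ``$r = s_l - 1$'' member $\mathcal{G}_{\textup{enc}}(v)$ in exactly the coordinate $b^{(l)}_{s_l}$ by the same argument as in the first case. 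Hence again $H(\mathcal{G}_{\textup{enc}}(v),\mathcal{G}_{\textup{enc}}(v+1)) = 1$, completing the proof.

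I do not expect a genuine obstacle here; the statement is essentially the same bookkeeping as \cref{lem:one-ham-dist}. The only point requiring a little care is the boundary case: one must verify that the message length of $\mathcal{G}$ is large enough that $v+1$ still lies inside the range, i.e.\ that whenever $r = s_l - 1$ there really is a block $l+1$ with a well-defined $\mathcal{W}_{\textup{enc}}(l+2)$; this is exactly why the message length is ``at least $\lg m$'' rather than tight, and it is handled by the standing assumption $v \in [m-1]$ together with the fact that $\mathcal{C}$ (hence $\mathcal{W}$) is defined on all of $[m]$.
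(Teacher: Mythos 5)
Your case analysis mirrors \cref{lem:one-ham-dist}, which is exactly what the paper's own one-line proof sketch prescribes, and both of your cases are argued correctly \emph{provided} that within block $l$ the residue $r$ ranges over all of $\{0,\dots,s_l-1\}$. The point I would push back on is the parenthetical identity you use to secure this: you write $H(\mathcal{C}_{\textup{enc}}(l),\mathcal{C}_{\textup{enc}}(l+1)) = s_l$, but under the definitions of \cref{con:linear-error-correcting-gray-code} we have $s_l = H(\mathcal{W}_{\textup{enc}}(l),\mathcal{W}_{\textup{enc}}(l+1)) = 3\,H(\mathcal{C}_{\textup{enc}}(l),\mathcal{C}_{\textup{enc}}(l+1))$, since $\mathcal{W}$ is three concatenated copies of $\mathcal{C}$ --- a fact you yourself state in the same sentence. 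Taking the construction literally, block $l$ therefore contains only $H(\mathcal{C}_{\textup{enc}}(l),\mathcal{C}_{\textup{enc}}(l+1)) = s_l/3$ values of $v$, so the last value of the block has $r = s_l/3 - 1$, while its successor is all of $\mathcal{W}_{\textup{enc}}(l+1)$. These two strings disagree on the $2s_l/3+1$ positions $b^{(l)}_{s_l/3},\dots,b^{(l)}_{s_l}$, so the boundary case of your argument (and indeed the lemma itself) fails for the construction exactly as written.

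The resolution is that the block sizes in the decomposition of $v$ must be the $s_i$, i.e., the partial sums defining $q$ should be of $H(\mathcal{W}_{\textup{enc}}(i-1),\mathcal{W}_{\textup{enc}}(i))$ rather than of $H(\mathcal{C}_{\textup{enc}}(i-1),\mathcal{C}_{\textup{enc}}(i))$, so that every one of the $s_l$ differing bits is flipped one at a time before the next block begins. Your proof is in effect carried out for this corrected construction --- which is clearly the intended one, and for which both of your cases go through verbatim --- but as submitted it conceals the repair inside a false equation instead of flagging the factor-of-three discrepancy. State the correction explicitly; everything else, including your closing remark about why $v+1$ remains in range at a block boundary, is fine and consistent with the paper's intent.
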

\begin{proof}(Sketch)
    Use the same approach as in \cref{lem:one-ham-dist}, but use the definitions of $q$, $r$ and $\mathcal{G}_{\textup{enc}}(v)$ as they are in \cref{con:linear-error-correcting-gray-code}.
\end{proof}

\begin{lemma}\label{lem:injective-code-lin}
    Let $\mathcal{C}$ be a code and let $\mathcal{G} = (\mathcal{G}_{\textup{enc}},\mathcal{G}_{\textup{dec}})$ be constructed using \cref{con:linear-error-correcting-gray-code} on $\mathcal{C}$. Then $\mathcal{G}_{\textup{enc}}$ is injective.
\end{lemma}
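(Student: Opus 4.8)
The plan is to mirror the proof of \cref{lem:injective-code}, with the threefold repetition code $\mathcal{W}$ of \cref{con:repeat-code} taking the place of $\mathcal{K}$ and with the \emph{level} of a value $v$ --- the index $l$ for which $\sum_{i=1}^{l}H(\mathcal{C}_{\textup{enc}}(i-1),\mathcal{C}_{\textup{enc}}(i))\le v<\sum_{i=1}^{l+1}H(\mathcal{C}_{\textup{enc}}(i-1),\mathcal{C}_{\textup{enc}}(i))$ singled out in \cref{con:linear-error-correcting-gray-code} --- taking the place of $\lfloor v/g\rfloor$. So suppose $\mathcal{G}_{\textup{enc}}(v)=\mathcal{G}_{\textup{enc}}(v')=w$ with $v\le v'$, let $l\le l'$ be the two levels, and write $w=c_1c_2c_3$ as three length-$d$ blocks. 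By construction $w$ is a prefix of $\mathcal{W}_{\textup{enc}}(l+1)$ followed by a suffix of $\mathcal{W}_{\textup{enc}}(l)$, and likewise for $l'$; each of the two decompositions cuts $w$ at a single position and so splits at most one of $c_1,c_2,c_3$.

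First I would deduce $|l-l'|\le 1$: since a cut splits at most one block, at least one of the three blocks is left whole by both decompositions, and that block is a genuine $\mathcal{C}$-codeword belonging to $\{\mathcal{C}_{\textup{enc}}(l),\mathcal{C}_{\textup{enc}}(l+1)\}$ and also to $\{\mathcal{C}_{\textup{enc}}(l'),\mathcal{C}_{\textup{enc}}(l'+1)\}$; injectivity of $\mathcal{C}_{\textup{enc}}$ (which holds because $\mathcal{C}$ is a code) then forces the index sets $\{l,l+1\}$ and $\{l',l'+1\}$ to meet, i.e.\ $|l-l'|\le 1$. Next I would rule out $l'=l+1$ exactly as in \cref{lem:injective-code}: the shared whole block forces the $v$-cut to lie strictly to the right of the $v'$-cut, so the suffix $s$ of $\mathcal{W}_{\textup{enc}}(l)$ used by the $v$-decomposition occupies a set of positions contained in those of the suffix $s'$ of $\mathcal{W}_{\textup{enc}}(l+1)$ used by the $v'$-decomposition, whence $\mathcal{W}_{\textup{enc}}(l)$ and $\mathcal{W}_{\textup{enc}}(l+1)$ agree on every position of $s$. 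Since $\mathcal{W}_{\textup{enc}}(l)=\mathcal{C}_{\textup{enc}}(l)^3$ and $\mathcal{W}_{\textup{enc}}(l+1)=\mathcal{C}_{\textup{enc}}(l+1)^3$, no length-$d$ block can sit entirely inside $s$ (it would be simultaneously $\mathcal{C}_{\textup{enc}}(l)$ and $\mathcal{C}_{\textup{enc}}(l+1)$) and, by the same reasoning, no block can be split by the $v$-cut, so $s$ must be empty --- contradicting that $s$ is non-empty by construction ($b^{(l)}_r<3d$). Hence $l=l'$.

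It then remains to separate values within a single level. Writing $q=\sum_{i=1}^{l}H(\mathcal{C}_{\textup{enc}}(i-1),\mathcal{C}_{\textup{enc}}(i))$, the encodings $\mathcal{G}_{\textup{enc}}(q),\mathcal{G}_{\textup{enc}}(q+1),\dots$ at level $l$ are obtained from $\mathcal{W}_{\textup{enc}}(l)$ by flipping, one at a time, the pairwise-distinct positions $b^{(l)}_1,b^{(l)}_2,\dots$ at which $\mathcal{W}_{\textup{enc}}(l)$ and $\mathcal{W}_{\textup{enc}}(l+1)$ disagree --- the same mechanism used in \cref{lem:one-ham-dist-linear} --- so $H(\mathcal{G}_{\textup{enc}}(q),\mathcal{G}_{\textup{enc}}(q+r))=r$, and different values within one level receive different encodings. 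Together with $l=l'$ this yields $v=v'$, establishing injectivity.

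The step I expect to be the bottleneck is ruling out $l'=l+1$: it reproduces the delicate bookkeeping in the proof of \cref{lem:injective-code}, where one must place the two cut positions relative to the three length-$d$ blocks and to each other, check that with only three copies the pigeonhole count ``at least $3-2=1$ whole block'' is just enough, and argue that $s\subseteq s'$ forces $s=\emptyset$. The surrounding steps are essentially a transcription of the structure of $\mathcal{G}$ combined with injectivity of $\mathcal{C}_{\textup{enc}}$.
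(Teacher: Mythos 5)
Your proposal is correct and follows exactly the route the paper intends: the paper's own proof is a two-line sketch deferring to \cref{lem:injective-code} with three blocks in place of four and the cumulative-Hamming-distance ``level'' in place of $\lfloor v/g\rfloor$, which is precisely what you carry out. The only place you are as informal as the paper itself is the ``no block split by the $v$-cut'' step (the split block lies only partially in $s$, so ruling it out really needs the observation that $s\subseteq s'$ would force $\mathcal{W}_{\textup{enc}}(l)$ and $\mathcal{W}_{\textup{enc}}(l+1)$ to agree on all of $s$, contradicting that the remaining $s_l-r>0$ disagreement positions $b^{(l)}_{r+1},\dots,b^{(l)}_{s_l}$ lie in $s$), but this matches the level of detail in the paper's own argument.
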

\begin{proof}(Sketch)
    The approach is exactly the same as in \cref{lem:injective-code}, except that there are only three codewords instead of four. Instead of using $\left\lfloor v/g\right\rfloor$ to determine $q$, use the same method as in \cref{con:linear-error-correcting-gray-code}.
\end{proof}
\begin{lemma}\label{lem:g-gray-code-lin}
    Let $\mathcal{C}$ be a code with block length $d$ and message length $\lg m$ and let $\mathcal{G}$ be constructed using \cref{con:linear-error-correcting-gray-code} on $\mathcal{C}$. Then $\mathcal{G}$ is a Gray code.
\end{lemma}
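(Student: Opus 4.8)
The plan is to mirror the proof of \cref{lem:g-gray-code}, adapting each of its two ingredients to the three-repetition setting of \cref{con:linear-error-correcting-gray-code}. We need to show that $\mathcal{G}$ is a code (i.e.\ $\mathcal{G}_{\textup{dec}}(\mathcal{G}_{\textup{enc}}(v)) = v$ for all $v$) and that it has sensitivity $1$; the latter is exactly \cref{lem:one-ham-dist-linear}, so the work is in the former. Fix $v$ and write $\mathcal{G}_{\textup{enc}}(v) = c_1 c_2 c_3$ with each $c_i \in \{0,1\}^d$. By \eqref{eq:error-correcting-gray-code-enc}-style reasoning, $\mathcal{G}_{\textup{enc}}(v)$ is a prefix of $\mathcal{W}_{\textup{enc}}(l+1)$ followed by a suffix of $\mathcal{W}_{\textup{enc}}(l)$, where $l$ is the ``block index'' defined via the partial sums $\sum_{i=1}^{l} H(\mathcal{C}_{\textup{enc}}(i-1),\mathcal{C}_{\textup{enc}}(i))$. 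Since the boundary between prefix and suffix splits at most one of the three constituent $\mathcal{C}_{\textup{enc}}$-blocks, at least two of $c_1,c_2,c_3$ are genuine codewords of $\mathcal{C}$, each equal to either $\mathcal{C}_{\textup{enc}}(l)$ or $\mathcal{C}_{\textup{enc}}(l+1)$. Here is the one genuinely new point relative to the four-repetition case: with only two clean blocks we cannot take a majority, but a crucial sub-case is that the two clean values may be \emph{different} (one is $\mathcal{C}_{\textup{enc}}(l)$, the other $\mathcal{C}_{\textup{enc}}(l+1)$) — this is precisely why \cref{con:repeat-code} decodes by \emph{median}, not majority. So the first step is to argue that $\mathcal{W}_{\textup{dec}}(\mathcal{G}_{\textup{enc}}(v)) = t$ with $t \in \{l, l+1\}$: the median of the multiset $\{\mathcal{C}_{\textup{dec}}(c_1),\mathcal{C}_{\textup{dec}}(c_2),\mathcal{C}_{\textup{dec}}(c_3)\}$ always lands on one of the two clean, numerically adjacent values regardless of what the third (possibly garbage) decoding is.

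The second step is to reduce the within-block recovery to unary decoding, exactly as in \cref{lem:g-gray-code}. Given $t \in \{l,l+1\}$ from the median, the bitstring $h_{t-1}$ (or $h_t$) read off at the changing positions $b^{(t-1)}_1,\dots,b^{(t-1)}_{s_{t-1}}$ is, by the construction of $\mathcal{G}_{\textup{enc}}$ as a prefix/suffix interleaving, exactly a unary codeword of length $s_{t-1}$ recording how many of those changing bits have been ``flipped over'' to the $\mathcal{W}_{\textup{enc}}(t)$ side. Decoding it with $\mathcal{U}_{\textup{dec}}$ recovers the offset $r$, so the candidate $v_0 = \mathcal{U}_{\textup{dec}}(h_{t-1}) + \sum_{i=1}^{t-1} H(\mathcal{C}_{\textup{enc}}(i-1),\mathcal{C}_{\textup{enc}}(i))$ or $v_1 = \mathcal{U}_{\textup{dec}}(h_t) + \sum_{i=1}^{t} H(\mathcal{C}_{\textup{enc}}(i-1),\mathcal{C}_{\textup{enc}}(i))$ equals $v$ for the correct choice of $t$. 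Then, invoking injectivity of $\mathcal{G}_{\textup{enc}}$ (\cref{lem:injective-code-lin}), the true value $v$ is among $\{v_0,v_1\}$ and $\mathcal{G}_{\textup{enc}}(v_0) \ne \mathcal{G}_{\textup{enc}}(v_1)$ unless $v_0 = v_1$, so exactly one of $H(\mathcal{G}_{\textup{enc}}(v),\mathcal{G}_{\textup{enc}}(v_0))$, $H(\mathcal{G}_{\textup{enc}}(v),\mathcal{G}_{\textup{enc}}(v_1))$ is zero; the $\argmin$ in the decoder therefore returns $v$. Finally, combining $\mathcal{G}_{\textup{dec}}\circ\mathcal{G}_{\textup{enc}} = \mathrm{id}$ with \cref{lem:one-ham-dist-linear} gives that $\mathcal{G}$ is a Gray code.

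I expect the main obstacle to be making the median argument of the first step fully rigorous while tracking the ``block index'' bookkeeping. Unlike the clean $v = qg + r$ decomposition in \cref{con:error-correcting-gray-code}, the decomposition $v = q + r$ here is governed by a non-uniform partition of $[m]$ into intervals of sizes $H(\mathcal{C}_{\textup{enc}}(i-1),\mathcal{C}_{\textup{enc}}(i))$, which can vary with $i$ (and can in principle be $0$ when consecutive codewords coincide, e.g.\ if some generator-matrix row contributes nothing — one should note the partial sums are still well-defined and that the indexing picks the unique relevant $l$). One must verify that the prefix/suffix split of $\mathcal{G}_{\textup{enc}}(v)$ can straddle at most one of the three $d$-bit blocks — this uses that each changing-bit index $b^{(l)}_r$ lies within a single block's worth of positions, which follows because $\mathcal{W}_{\textup{enc}}$ is a plain $3$-fold repetition so the changing positions of $\mathcal{W}_{\textup{enc}}(l)$ versus $\mathcal{W}_{\textup{enc}}(l+1)$ come in three identical groups, one per copy — and that, consequently, the median of three decodings, two of which equal adjacent clean values and one of which is arbitrary, is always one of those two clean values. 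Everything else is a routine transcription of the arguments already carried out for \cref{lem:g-gray-code}, \cref{lem:one-ham-dist} and \cref{lem:injective-code}.
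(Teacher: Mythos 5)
Your proposal is correct and follows essentially the same route as the paper's (sketch) proof: split $\mathcal{G}_{\textup{enc}}(v)$ into the three $d$-bit blocks, observe that at most one is straddled by the prefix/suffix boundary so at least two decode cleanly to the adjacent values $l$ and $l+1$, note that the median therefore lands in $\{l,l+1\}$ regardless of the garbage block, and then finish via unary decoding of the changing positions, injectivity (\cref{lem:injective-code-lin}), and \cref{lem:one-ham-dist-linear}. Your side remark about zero-length intervals is moot since $\mathcal{C}_{\textup{enc}}$ is injective, so $H(\mathcal{C}_{\textup{enc}}(i-1),\mathcal{C}_{\textup{enc}}(i))\ge D(\mathcal{C})\ge 1$, but this does not affect correctness.
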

\begin{proof}(Sketch)
    The structure of the proof is essentially the same as for \cref{lem:g-gray-code}. To show that \cref{con:linear-error-correcting-gray-code} is a Gray code, we start by showing that it is a code. Let $v\in [m]$ and let $l$ be the unique integer such that 
    \begin{equation*}
        \sum_{i=1}^l H(\mathcal{C}_{\textup{enc}}(i-1),\mathcal{C}_{\textup{enc}}(i)) \le v < \sum_{i=1}^{l + 1} H(\mathcal{C}_{\textup{enc}}(i-1),\mathcal{C}_{\textup{enc}}(i))
    \end{equation*}
    and let $\mathcal{G}_{\textup{enc}}(v) = c_1c_2c_3$ such that $c_1,c_2,c_3 \in \{0,1\}^d$. From construction, it holds that for at most $1$ codeword $s$ of $\{\mathcal{C}_{\textup{dec}}(c_1),\mathcal{C}_{\textup{dec}}(c_2),\mathcal{C}_{\textup{dec}}(c_3)\}$ that $s\not\in\{l,l + 1\}$. Regardless of the value of $s$, the median of $\mathcal{C}_{\textup{dec}}(c_1),\mathcal{C}_{\textup{dec}}(c_2)$ and $\mathcal{C}_{\textup{dec}}(c_3)$ is either $l$ or $l+1$. From here, the approach is the same as for \cref{lem:g-gray-code}.
\end{proof}
\begin{theorem}
    Let $\mathcal{C} = (\mathcal{C}_{\textup{dec}},\mathcal{C}_{\textup{enc}})$ be a code with block length $d$ and message length $m$, and let $\mathcal{G}=(\mathcal{G}_{\textup{dec}},\mathcal{G}_{\textup{enc}})$ be obtained using \cref{con:linear-error-correcting-gray-code} on $\mathcal{C}$. Let $p\in(0,1)$ and let $b_p \sim \Bern(p)^{3d}$. Let $c=\left(1 - 2 p \right)^{2}/(4 p + 2)$. Then for all $t\ge 0$,
    \begin{equation*}
        \Pr[|v-v'| \ge t] \le \frac{2}{1-\exp(-c)} \exp(- ct) + 9d \exp(-cD(\mathcal{C})) + 2P_p(\mathcal{C})
    \end{equation*}
\end{theorem}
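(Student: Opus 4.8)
The plan is to mirror the structure of the proof of \cref{thm:prob-diff-greater-than-t} almost verbatim, adapting only the bookkeeping that changes because $\mathcal{W}$ uses $3$ repetitions of $\mathcal{C}$ (with median decoding) rather than the $4$-copy constant-consecutive-distance code $\mathcal{K}$, and because there is no parity padding. Writing $v' = \mathcal{G}_{\textup{dec}}(\mathcal{G}_{\textup{enc}}(v)\oplus b_p)$, I would first establish the analogues of \cref{lem:small-dist-close} and \cref{lem:large-dist-far} for \cref{con:linear-error-correcting-gray-code}: namely that if $|v-v'|$ is small (below $D(\mathcal{C})$, possibly with a small constant adjustment coming from the fact that consecutive $s_v = H(\mathcal{W}_{\textup{enc}}(v),\mathcal{W}_{\textup{enc}}(v+1))$ may vary) then $H(\mathcal{G}_{\textup{enc}}(v),\mathcal{G}_{\textup{enc}}(v')) = |v-v'|$, and if $|v-v'|$ is large then $H(\mathcal{G}_{\textup{enc}}(v),\mathcal{G}_{\textup{enc}}(v')) \ge D(\mathcal{C})$. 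The arguments are the same prefix/suffix-uncut-codeword arguments as in the $\mathcal{K}$ case, using that at most one of $c_1,c_2,c_3$ is cut by the prefix/suffix split, so at least one full codeword is shared and injectivity of $\mathcal{C}_{\textup{enc}}$ forces $|l - l'| \le 1$ where $l,l'$ are the ``block indices'' of $v,v'$.

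Next I would define the bad events exactly as before: let $F$ be the event that at least one of the (at least two) copies of $\mathcal{C}$ in $\mathcal{G}_{\textup{enc}}(v)$ that are \emph{not} split across the prefix/suffix boundary decodes incorrectly, and split the complementary event $F^c$ into $S_1 = \{t \le |v-v'| < D(\mathcal{C})\}$ and $S_2 = \{|v-v'| \ge D(\mathcal{C})\}$, so that $\Pr[|v-v'|\ge t] \le \Pr[S_1 \cap F^c] + \Pr[S_2 \cap F^c] + \Pr[F]$. On $F^c$, the decoder considers $v$ as one of its two candidates (this is exactly why median decoding was chosen — the median of three values, two of which lie in $\{l,l+1\}$, is always in $\{l,l+1\}$, as noted in \cref{lem:g-gray-code-lin}), so whenever $v'$ is output instead we have $H(\mathcal{G}_{\textup{enc}}(v)\oplus b_p, \mathcal{G}_{\textup{enc}}(v')) \le H(\mathcal{G}_{\textup{enc}}(v)\oplus b_p, \mathcal{G}_{\textup{enc}}(v))$. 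Then \cref{lem:calc-prob} gives, for each candidate $w$, the bound $\exp(-c\, H(\mathcal{G}_{\textup{enc}}(v),\mathcal{G}_{\textup{enc}}(w)))$. For $S_1$ I would use the small-distance lemma so this equals $\exp(-c|v-w|)$, sum the geometric series over $l = t,\dots,D(\mathcal{C})-1$ with at most two values of $w$ per value of $l$, and get $\frac{2}{1-\exp(-c)}\exp(-ct)$. For $S_2$ I would use that only $w$ with $|l_w - l| \le 1$ can ever be a candidate, bounding the number of relevant $w$ by $3$ times the maximum block size; here the block size is $\max_l H(\mathcal{C}_{\textup{enc}}(l),\mathcal{C}_{\textup{enc}}(l+1)) \le d$, giving at most $3d$ candidates (vs.\ $12d$ before, since each copy is now $d$ bits rather than $2d+D(\mathcal{C})$ and there are $3$ blocks of candidates of width $d$), each contributing at most $\exp(-cD(\mathcal{C}))$ by the large-distance lemma, hence $\Pr[S_2\cap F^c] \le 9d\exp(-cD(\mathcal{C}))$. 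Finally $\Pr[F] \le 2P_p(\mathcal{C})$ by a union bound over the (at most) two non-split copies — and crucially there is no extra factor-of-two from parity padding as in \cref{lem:complement-code-error}, which is where the improvement to $2P_p(\mathcal{C})$ comes from. Substituting the three bounds yields the claimed inequality.

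The main obstacle I anticipate is the careful accounting in the $S_2$ term — pinning down exactly how many candidate values $w$ the decoder can possibly output and confirming the constant is $9d$. This requires being precise about two things: (i) that the candidates $v_0, v_1$ produced by the decoder satisfy $v_0 \in [\,q_{t-1}, q_{t-1} + s_{t-1})$ and $v_1 \in [\,q_t, q_t + s_t)$ where $q_l = \sum_{i=1}^l H(\mathcal{C}_{\textup{enc}}(i-1),\mathcal{C}_{\textup{enc}}(i))$, so on $F^c$ only $t \in \{l, l+1\}$ is possible and hence $w$ ranges over at most three consecutive blocks, and (ii) that each block has width at most $d$ since $H(\mathcal{C}_{\textup{enc}}(i-1),\mathcal{C}_{\textup{enc}}(i)) \le d$ trivially. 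A secondary subtlety is the small-distance lemma: because the per-vertex distances $s_v$ can vary (unlike the constant $g$ in the $\mathcal{K}$-based construction), I need to check that the bit positions flipped in going from $\mathcal{G}_{\textup{enc}}(v)$ to $\mathcal{G}_{\textup{enc}}(v')$ are still all distinct when $|v-v'| < D(\mathcal{C})$ — this follows because within a single block the positions are distinct by construction, and across a block boundary the trailing flips of block $l$ live in the third copy of $\mathcal{C}$ while the leading flips of block $l+1$ live in the first copy (an analogue of the ``first half vs.\ second half'' separation used in \cref{lem:small-dist-close}), but one should verify $D(\mathcal{C}) \le d$ so that at most one block boundary is ever crossed. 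Once these two points are nailed down, the rest of the proof is a transcription of the proof of \cref{thm:prob-diff-greater-than-t}.
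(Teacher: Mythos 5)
Your proposal is correct and follows essentially the same route as the paper, whose own proof is only a two-line sketch stating that the argument of \cref{thm:prob-diff-greater-than-t} carries over with $3$ copies instead of $4$, at most $9d$ candidate values of $v'$, and no complement-code overhead ($\Pr[F]\le 2P_p(\mathcal{C})$) — precisely the adaptations you flesh out. The only blemish is the momentary ``at most $3d$ candidates'' in your $S_2$ accounting, which contradicts the $9d$ you correctly land on: each of the three candidate ranges has width $s_l \le 3d$ (not $d$), as you yourself note later when writing $v_1 \in [q_t, q_t + s_t)$.
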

\begin{proof}(Sketch)
    The proof of this is mostly equivalent to the proof of \cref{thm:prob-diff-greater-than-t}, except that \cref{con:linear-error-correcting-gray-code} only consists of $3$ copies of $\mathcal{C}$ and there are at most $9d$ values of $v'$ which can be considered together with $v$.
\end{proof}

    \section{Proof of \cref{lem:code-fail-close-to-half-dist}}\label{app:code-fail-proof}
In this section we prove \cref{lem:code-fail-close-to-half-dist}. The general idea is to show that for two chosen codewords, at least one of them must be decoded incorrectly with some bounded probability.
\begin{proof}%
    Pick $v,w \in [m]$ such that $H(\mathcal{C}_{enc}(v),\mathcal{C}_{enc}(w)) = D(\mathcal{C})$ and let $b_p\sim \Bern(p)^d$. Observe that since $p < \frac{1}{2}$, 
    \begin{equation}\label{eq:cp-less-less-than-greater}
        \Pr[\norm{c_p}_1 > \frac{D(\mathcal{C})}{2}] \le \Pr[\norm{c_p}_1 < \frac{D(\mathcal{C})}{2}]
    \end{equation}
    Let $c_p\in \{0,1\}^{D(\mathcal{C})}$ be the random bitstring where each bit in $c_p$ corresponds to a unique bit in $b_p$ where $\mathcal{C}_{enc}(v)$ and $\mathcal{C}_{enc}(w)$ are different. Let $(c_p)_i=1$ if and only if the corresponding bit in $b_p$ is $1$. Observe that $c_p \sim \Bern(p)^{D(C)}$. This means that $\norm{c_p}_1$ becomes a count over how many of the bits where $\mathcal{C}_{enc}(v)$ and $\mathcal{C}_{enc}(w)$ are different have been flipped.
    
    For the sake of notation, let $I(x)$ be the event $\mathcal{C}_{dec}(\mathcal{C}_{enc}(x) \oplus b_p) \neq x$ i.e., the event that the decoding fails. Next, observe that if $x = \mathcal{C}_{enc}(v)\oplus b_p$, and $x$ differs from $\mathcal{C}_{enc}(v)$ in $r$ of the positions where $\mathcal{C}_{enc}(v)$ and $\mathcal{C}_{enc}(w)$ differ, then it must hold that $x$ differs from $\mathcal{C}_{enc}(w)$ in exactly $D(\mathcal{C}) - r$ of these positions. This means that we can bound the probability as
    \begin{equation}\label{eq:bound-small-large}
        \Pr[I(v) \left| \norm{c_p}_1 < \frac{D(\mathcal{C})}{2}\right.] +\Pr[I(w) \left| \norm{c_p}_1 > \frac{D(\mathcal{C})}{2}\right.]\ge 1,
    \end{equation}
    and the same symmetrically
    \begin{equation}\label{eq:bound-equal-equal}
        \Pr[I(v) \left| \norm{c_p}_1 > \frac{D(\mathcal{C})}{2}\right.] + \Pr[I(w) \left| \norm{c_p}_1 < \frac{D(\mathcal{C})}{2}\right.]\ge 1,
    \end{equation}
    and finally
    \begin{equation}\label{eq:bound-large-small}
        \Pr[I(v) \left| \norm{c_p}_1 = \frac{D(\mathcal{C})}{2}\right. ] + \Pr[I(w) \left| \norm{c_p}_1 = \frac{D(\mathcal{C})}{2}\right.]\ge 1,
    \end{equation}
    since the conditional restrict to the same set of events and the decoding means the complements are disjoint.
    
    Now let $S$ the event that $\norm{c_p} < \frac{D(\mathcal{C})}{2}$, $T$ the event that $\norm{c_p} = \frac{D(\mathcal{C})}{2}$, and $U$ the event that $\norm{c_p} > \frac{D(\mathcal{C})}{2}$. This means we can write 
    \begin{equation*}
        \Pr[I(v)] = \Pr[I(v)|S]\Pr[S] + \Pr[I(v)|T]\Pr[T]+\Pr[I(v)|U]\Pr[U]
    \end{equation*}
    \begin{equation*}
        \Pr[I(w)] = \Pr[I(w)|S]\Pr[S] + \Pr[I(w)|T]\Pr[T]+\Pr[I(w)|U]\Pr[U]
    \end{equation*}
    Using \cref{eq:cp-less-less-than-greater,eq:bound-small-large,eq:bound-equal-equal,eq:bound-large-small}, we get
    \begin{align*}
        \Pr[I(v)] + \Pr[I(w)] \ge& (\Pr[I(v)|S] + \Pr[I(w)|U])\Pr[U] + (\Pr[I(v)|T] + \Pr[I(w)|T])\Pr[T]\\&+(\Pr[I(v)|U] + \Pr[I(w)|S])\Pr[U]\\
        \ge& \Pr[T] + 2\Pr[U].
    \end{align*}
    Since at least one of $\Pr[I(v)]$ and $\Pr[I(w)]$ must be greater than the average, this proves the lemma.
\end{proof}
    \section{Proof of Theorem~\ref{thm:alp}}\label{sec:alp-proof}
We follow the approach of Aum{\"u}ller, Lebeda and Pagh~\cite{alp-mech} for the case of pure differential privacy.
This section is intended to be read alongside parts of their paper that we refer to and modify.\footnote{We note that the variables $m$ and~$\alpha$ used in the present paper to denote parameters of codes are used for other quantities in~\cite{alp-mech}. Conversely, in this section, we use $b$ consistently with~\cite{alp-mech} to denote an index in the data structure, while $b$ is used in previous sections to denote an error vector.}
For a multiset $S$ we denote the frequency of element $i\in S$ by $x_i$ and let $x_i = 0$ for $i\in [u]\backslash S$, such that $S$ is encoded by $x\in \{0,1\}^u$ which is $n$-sparse.
We modify the ALP1-Projection algorithm (Algorithm 2 in~\cite{alp-mech}) by replacing the choice of $z_{a,b}$ in step~(2).
ALP1-Projection defines values $y_i$ that are downscaled and rounded versions of~$x_i$, and implicitly uses a unary encoding of $y_i$.
We replace the unary encoding by a sensitivity~1 error-correcting code $\mathcal{G}=(\mathcal{G}_{dec},\mathcal{G}_{enc})$ given by our Corollary~\ref{cor:expandercode} with decoding parameter $\alpha = 1/5$.
Specifically,
\[
z_{a,b} = \begin{cases}
        1, & \exists i\in S: \mathcal{G}_{enc}(y_i)_b = 1 \text{ and } h_b(i) = a \\
        0, & \text{otherwise}
      \end{cases} \enspace .
\]
Note that this requires the number of hash functions (denoted by $m$ in~\cite{alp-mech}) to be equal to the block size $d'$ of the code $\mathcal{G}$, discussed below.
Since $\mathcal{G}_{enc}$ has sensitivity 1 and since $\mathcal{G}_{enc}(0)=0$\footnote{The property that $\mathcal{G}_{enc}(0)=0$ holds for the particular code in Corollary~\ref{cor:expandercode} but can be achieved in general by permuting the codewords using $x\mapsto x \oplus \mathcal{G}_{enc}(0)$, changing no other properties.} we see that adding an element to~$S$ changes at most one value $z_{a,b}$, and the same holds for removing an element.
This means that the privacy of the modified ALP1-Projection algorithm, which applies randomized response to each bit $z_{a,b}$, follows exactly as in~\cite{alp-mech}.

The ALP1-Estimator algorithm (Algorithm 3 in~\cite{alp-mech}), tailored to decoding the noisy unary encoding, must be replaced by running $\mathcal{G}_{dec}$ on the relevant bits of the data structure.
For $i\in [u]$ if we let $v^{(i)}\in \{0,1\}^{d'}$ be the vector given by $v^{(i)}_b = \tilde{z}_{h_b(i),b}$, on input $i$ the new ALP1-Estimator returns $\mathcal{G}_{dec}(v^{(i)})$.
We can write $v^{(i)} = \mathcal{G}_{enc}(v) \oplus \eta^{(i)}$ for a vector $\eta^{(i)}$ where $\eta^{(i)}_b$ represents errors due to a hash collision $h_b(i)=h_b(i')$ for some $i'\in S\backslash \{i\}$ or due to a bit flip introduced by randomized response on the variable $z_{h_b(i),b}$.
Since the hash functions and noise bits are independent, the bits of $\eta^{(i)}$ are independent.
Corollary~\ref{cor:expandercode} is stated for noise distribution $\Bern(\alpha/2)^{d'}$, but since expander codes enjoy a worst-case guarantee on decoding radius it is easy to see that the result holds as long as there is an \emph{upper bound} of $\alpha/2$ on the probability of flipping each bit.
Thus, it suffices to argue that for each $b$, $\Pr[\eta^{(i)}_b = 1] \leq 1/10$.
This can be achieved by a union bound on two events: 1) bit $b$ is flipped by randomized response, and 2) there is a hash collision $h_b(i)=h_b(i')$.
Choosing the hash table size $s$ and the randomized response parameter such that the probability for each of these events is bounded by $1/20$ changes only constant factors in the error bound and space usage.
Here we make use of the fact that privacy for ALP1-Projection relies on a combination of scaling and randomized response so that we can choose the parameter of randomized response to be any constant without affecting privacy.

The ALP-Projection/Estimation algorithms (Algorithms 4 and 5 in~\cite{alp-mech}) that scale the inputs to achieve sensitivity $\varepsilon$ work unchanged, except for relying on the changed ALP1-Projection/Estimation algorithms.
To finalize the argument, the Threshold ALP-Projection/Estimation algorithms (Algorithms 7 and 8 in~\cite{alp-mech}) are changed to use the new versions of the ALP1 Projection/Estimator algorithms.
Since it suffices to encode numbers up to $\ell=O(\log u)$, we can choose the code with message length $\log_2 \ell = O(\log\log u)$, and hence also with block length $d'=O(\log\log u)$.
Since decoding takes expected $O(d')$ time, estimating an item frequency can be done in expected time $O(\log\log u)$.

Finally, to bound the estimation error we consider the two cases in Threshold ALP-Estimation. 
If the output is generated by the threshold Laplace mechanism the expected and high-probability bounds follow from the standard analysis of the Laplace mechanism.
Otherwise, if the output is generated by ALP-Estimation we note that the final error is the error $|\mathcal{G}_{dec}(v^{(i)}) - y_i|$ of the error-correcting code multiplied by $O(1/\varepsilon)$.
By Corollary~\ref{cor:expandercode}, using the bound on $\eta^{(i)}$ above, this error is $O(1)$ in expectation and is bounded by $\ell=O(\log(u))$ with probability~1 since we always decode to $[\ell]$.
This finishes the error analysis.

\end{appendices}
\end{document}